%% file: main.tex
\documentclass[12pt, reqno]{amsart}
\usepackage{amsthm, amscd, amsfonts, amssymb, graphicx, xcolor,mathtools}
\usepackage{amsmath}
\usepackage[a4paper, left=0.85in, right=0.85in, top=1in, bottom=1in]{geometry}

\usepackage[round,authoryear]{natbib}
\usepackage{xcolor}      % optional, for color names
\usepackage[ bookmarksnumbered, plainpages,
            colorlinks=true,   % enable colored links
            citecolor=blue,    % set citation color
            linkcolor=red,    % set internal links color
            urlcolor=blue]{hyperref} % URL links
\usepackage{bookmark}
\usepackage{subcaption}
\usepackage{setspace}

\usepackage{floatrow}         % for [H] placement specifier
\usepackage{adjustbox}     % for \begin{adjustbox}{...}
\usepackage[font = tiny]{caption}       % for \captionof{table}
\usepackage{tabularray}    % for tblr environment (tabularray)
\usepackage{booktabs}  
\usepackage{ragged2e}
\usepackage{multirow}
\usepackage[shortlabels]{enumitem}
\usepackage{pdflscape}
\usepackage{rotating}

\usepackage{csquotes}
\newtheorem{theorem}{Theorem}[section]
\newtheorem{lemma}{Lemma}[section]
\newtheorem{proposition}{Proposition}[section]
\newtheorem{corollary}{Corollary}[section]
\theoremstyle{definition}

\newtheorem{assumption}[theorem]{Assumption}
\newtheorem{condition}{Condition}[section]

\newtheorem{remark}{Remark}[section]
\numberwithin{equation}{section}

\keywords{Non-Gaussian shocks; Identification; Bootstrap;
Pre-testing Bias}

\subjclass[JEL]{C12, C15, C32, C51, E32}
\begin{document}

\setcounter{page}{1}

\title[\tiny Two Gaussians, Too Many]{Two Gaussians, Too Many: A bootstrap-based approach to assess identifiability in non-Gaussian Structural Vector Autoregressions}

\author[]{Paritosh Shankarrao Junare}
\thanks{E-mail: paritosh.junare@unibo.it. I am immensely grateful to Luca Fanelli, Giuseppe Cavaliere and Giovanni Angelini for their guidance and support throughout this work. I am also thankful to Markku Lanne, Jani Luoto, Mika Meitz and Savi Virolainen for their insightful comments. Useful feedback from Alessandra Luati and other participants of 2026 RCEA Time Series Workshop, Madrid; Katerina Petrova and other participants of the IAAE Annual Conference 2026, Lisbon is gratefully acknowledged.}

\dedicatory{Department of Economics, University of Bologna\\
July 19, 2026}

\begin{abstract}
\noindent Standard pre-tests of normality on reduced-form innovations are insufficient to detect two or more Gaussian shocks and hence, the failure of identification in non-Gaussian SVARs. We instead propose a bootstrap-based approach to evaluate the asymptotic validity of this condition by measuring the divergence between the conditional bootstrap distribution of a maximum likelihood estimator and its limiting distribution under valid identification. We show that, under valid identification and certain regularity conditions, the conditional bootstrap distribution of the impact matrix is asymptotically normal, so the diagnostic reduces to a test of normality of the bootstrap replications. The diagnostic remains valid in the single-Gaussian case, where the shape parameter of the Gaussian shock lies on the boundary, and the full-parameter information is singular; this establishes its validity across the entire null. Under the null of valid identification, the diagnostic induces no pre-testing bias as bootstrap replications and sample size diverge jointly at an appropriate rate. The joint divergence ensures that the test statistic, conditional on the data, is asymptotically pivotal, so conditioning on the diagnostic does not distort subsequent inference. Monte Carlo simulations with Normal-Inverse Gaussian shocks show that the diagnostic attains near-exact nominal size under valid identification and detects the failure due to multiple Gaussian shocks with power increasing in the sample size. Under weak identification with a near-Gaussian shock, conditioning on the bootstrap diagnostic, unlike on residual-based normality pre-tests, preserves the probability coverage of the estimates. Based on estimates of a SVAR model in the macroeconomic and financial uncertainty literature, we demonstrate its potential as a practical, robust tool for validating non-Gaussian identification without pre-testing bias.
\end{abstract} \maketitle

\section{Introduction}{\label{sec:intro}}
\noindent \input{00_intro.tex}
\section{Identification in non-Gaussian SVARs}\label{sec:ICA}
\noindent \input{01_ICA.tex}

\section{Bootstrap Diagnostic}\label{sec:BSdiag}
\noindent \input{04_BSdiag.tex}

\section{Monte Carlo Study}\label{sec:MC}
\noindent \input{05_MC.tex}

\section{The Case of Weak Identification}\label{sec:WKId}
\noindent  \input{05_WKId.tex}

\section{Empirical Illustration}\label{sec:EMP}
\noindent \input{06_EMP.tex}

\section{Conclusion}\label{sec:concl}
\noindent \input{11_con.tex}
\pagebreak
\bibliographystyle{apalike}
\bibliography{Ref1}
\pagebreak

\appendix
\noindent \input{12_app.tex}

\end{document}

%% file: 00_intro.tex
How do we identify the latent structural drivers of the economy? This question has long been a cornerstone of research in empirical macroeconometrics. A prevalent approach with structural vector autoregressions (SVARs) involves linking observed macroeconomic variables to structural
shocks. Since the covariance matrix of the reduced-form innovations alone is not sufficient, additional
restrictions are necessary to recover the structural shocks from the reduced-form innovations. Traditional approaches to identification rely on economic theory, which include
short-run exclusion restrictions \citep{blanchard_empirical_2002}, cointegration-based constraints 
\citep{blanchard_dynamic_1989}, sign-restrictions \citep{uhlig_what_2005}, or employ external information as instruments for exogenous sources of variations \citep{stock_identification_2018}. Another strand of literature exploits statistical properties of the structural shocks, while avoiding any economic restrictions, to achieve identification. Derived from the Independent Component Analysis (ICA) literature, see \citet{comon_independent_1994, hyvarinen_fast_1999}, identification based on non-Gaussianity of the structural shocks focuses on the information beyond the second-order moments of the shocks, such as skewness and excess kurtosis, see \citet{lanne_identification_2017, gourieroux_statistical_2017, keweloh_generalized_2021}. Moreover, given the nature of statistical identification, we still require economic labeling to gain meaningful interpretation.

% Heteroskedasticity-based identification leverages the changes in the variance of shocks over time, for instance, during financial, economic crises, or changes in policy regimes, while assuming their impulse responses are time-invariant\footnote{\citet{bacchiocchi_gimme_2016} allow (but do not impose) on-impact macroeconomic responses to vary in different volatility regimes, while identifying the shocks via heteroskedasticity.}\citep{rigobon_identification_2003}.
% While widely used, some of these methods rely on assumptions that might be controversial or lead to a large set of admissible models, potentially undermining the credibility of the inference.
% Both the identification approaches, heteroskedasticity and non-Gaussianity share some \enquote{irregularity} in the shocks to achieve identification \citet{lewis_identification_2024}.

 If the structural shocks are mutually independent and at most one of them follows a Gaussian distribution, the shocks can be identified up to sign and permutation indeterminacies. However, if two or more independent shocks are Gaussian, rotations within the Gaussian subspace leave the joint distribution of the reduced-form innovations unchanged \citep{gourieroux_statistical_2017}. This implies that the structural model is not uniquely determined, which makes ensuring the validity of these assumptions essential for valid asymptotic inference. In empirical research, most studies verify these identification conditions indirectly through univariate tests of Gaussianity on the reduced-form innovations, see \citet{keweloh_generalized_2021, lanne_identification_2017}, among others. We show that this conventional strategy of tests of Gaussianity is insufficient to ensure the validity of these conditions. Since the reduced-form innovations are simply linear combinations of the structural shocks, they remain non-Gaussian also with multiple Gaussian shocks. This misleads the practitioner to presume the validity of the identification conditions when in fact, they are violated.

 Instead, we propose a novel bootstrap-based approach to assess the identifiability of non-Gaussian SVARs. The proposed method, under certain regularity conditions, tests the null hypothesis that among the independent shocks, there is \textit{at most} one Gaussian shock, against the alternative that there are two or more Gaussian structural shocks. The method measures the divergence of the bootstrap distribution of the impact matrix from its asymptotic benchmark: the two coincide under valid identification and diverge systematically under its failure. We build on a growing literature in bootstrap-based diagnostic testing: \citet{angelini_bootstrap_2022, angelini_identification_2024} show that bootstrap can be used to detect weak identification and model misspecification in dynamic state-space models, and SVARs identified with external instruments. Moreover, \citet{cavaliere_bootstrap_2025} provide a framework for diagnosing general mis-specifications such as non-stationarity and weak instruments, among others, which we specialize and make operational for non-Gaussian identification.

Specifically, we make three contributions: \emph{First}, we turn the bootstrap-validity principle into an
identification diagnostic for non-Gaussian SVARs: under valid identification and certain regularity conditions, the
conditional bootstrap distribution of the impact matrix, established within NGML estimation of \citet{lanne_identification_2017}, is asymptotically standard
normal, so the check reduces to a normality
test on the bootstrap replications. \emph{Second}, we prove that this validity extends to
the single-Gaussian boundary of the null, where the Gaussian shock's shape parameter
lies on the boundary and the full-parameter Fisher information is singular, by
showing the impact matrix estimator remains consistent and asymptotically
normal; the diagnostic is thus valid across the entire null.\footnote{Beyond the
diagnostic, this carries a practical benefit: a practitioner who suspects a Gaussian
shock need not determine which shock it is, nor adopt a different specification;
estimating all shocks under the NIG family, which nests the Gaussian at its boundary,
delivers consistent, asymptotically normal impact-matrix estimates and valid inference
regardless of which shock, if any, is Gaussian.} \emph{Third}, a univariate version of
the diagnostic localizes, in large samples, which shocks are responsible for an
identification failure, extending the procedure to partial identification
\citep{maxand_identification_2020}. Under the alternative of two or more Gaussian shocks
the bootstrap distribution diverges from its limit and the test is consistent, with
non-negligible power.

Apart from the incomplete verification through residuals' non-normality, current literature for verifying the identification conditions include a moments-based rank test on reduced-form innovations
\citep{guay_identification_2021} or a specification test on the estimated shocks \citep{amengual_moment_2022,amengual_specification_2024} which may
induce severe pre-testing bias, since the finite-sample distribution of a
post-selection estimator is not uniformly close to its asymptotic benchmark
\citep{danilov_harm_2004,roth_pretest_2022,leeb_model_2005}. Crucially, we show that this bootstrap-based approach does not induce any pre-testing bias in subsequent inference. This is because, unlike standard bootstrap asymptotic theory where the number of bootstrap replications ($M$) and the sample size ($T$) diverge to infinity sequentially, first $M \rightarrow \infty$, and then $T \rightarrow \infty$, in this framework they diverge jointly i.e., $M,T \rightarrow \infty$ and $M/T \rightarrow 0$ at an appropriate rate. The joint divergence ensures that under the null hypothesis of valid specification, the test statistic, conditional on the data, is asymptotically pivotal and does not distort subsequent inference. This idea forms the motivation for this method: the bootstrap is employed not merely as a tool to gain asymptotic refinements in finite samples over first-order approximations of distributions, but as an instrument capable of detecting identification failures in non-Gaussian SVARs without distorting post-test inference.

We validate the theoretical results with a Monte Carlo study across three different sets of distributional specifications for the structural shocks: (i) \textbf{ICA 0}, 
where all three shocks are non-Gaussian\footnote{Here, non-Gaussian refers to the NIG distribution, nonetheless this diagnostic 
procedure is also valid for other heavy-tailed distributions, such as Student-$t$ distribution.}, (ii) \textbf{ICA 1}, where one shock is Gaussian and the other two are non-Gaussian, and (iii) \textbf{No ICA}, where two of the shocks are Gaussian and one is non-Gaussian. The estimates of the impact matrix and their probability coverages confirm consistency and asymptotic normality under \textbf{ICA 0} and \textbf{ICA 1}, while under the identification failure of \textbf{No ICA}, the estimates and their bootstrap analogs diverge sharply from Gaussianity.
The bootstrap diagnostic test of normality, under the appropriate rate of joint divergence of $M$ and $T$, 
achieves correct empirical size under the null and increasing  
power in $T$ against the alternative of invalid identification; $M$ governs the size-power trade off in finite samples. In contrast, conventional residual-based normality tests cannot discriminate between valid and invalid identification, due to \textit{extra} Gaussian shocks. The bootstrap-based approach thus provides a reliable and 
implementable tool for evaluating identification conditions for non-Gaussian SVARs in empirical applications. Furthermore, under weak identification where one structural shock is \textit{nearly} Gaussian in the presence of another Gaussian shock, the probability coverages (conditional on the rejection of the residuals' normality test) of the estimates are incoherent i.e., either severely under covered or artificially inflated. In contrast, the probability coverages conditioned on the validity of the bootstrap diagnostic remain coherent and consistently track their unconditional analogs.

We further demonstrate the potential of our diagnostic in an empirical illustration, where we investigate whether uncertainty is an exogenous driver of business cycle fluctuations or an endogenous response to the macroeconomic drivers. We consider a trivariate SVAR model with measures of macro uncertainty, real activity and financial uncertainty from the U.S. macroeconomic data. The bootstrap diagnostic does not suggest rejection of the null hypothesis of non-Gaussianity of the independent shocks, across different significance levels and choices of $M$. This supports the evidence of \citet{ludvigson_uncertainty_2021} where they document significant excess kurtosis in their identified shocks. Furthermore, we find that a negative shock to real activity causes a significant and persistent increase in macro uncertainty, whereas the impact on financial uncertainty is negligible. We also find that there is a persistent negative effect on real activity (albeit with positive immediate impact) from an increase in macro uncertainty, whereas the effect of financial uncertainty on real activity is insignificant on impact and becomes negative only with a delay. This is partly in contrast with the findings in \citet{ludvigson_uncertainty_2021}, where financial uncertainty is found as an exogenous driver that causes significant impact on real activity. However, our findings are in line with the zero restrictions of \citet{angelini_uncertainty_2019} where the contemporaneous impact of an exogenous impulse in financial uncertainty on real activity, and vice versa, are set to zero.

 The remainder of the paper is organized as follows. Section \ref{sec:ICA} elaborates on identification in non-Gaussian SVARs, outlining the NGML estimation procedure and its asymptotic properties, and provides the asymptotic distribution of the bootstrap estimator under the null of valid identification, Section \ref{sec:BSdiag} introduces the bootstrap diagnostic for 
assessing identification validity, Section \ref{sec:MC} presents Monte Carlo results, Section \ref{sec:WKId} discusses the case of weak identification where apart from one existing Gaussian shock, there is another \textit{nearly Gaussian} shock, Section \ref{sec:EMP} details an empirical illustration and Section \ref{sec:concl} concludes. Appendix contains notational preliminaries, proofs of the main results and additional Monte Carlo evidence.

%% file: 01_ICA.tex
This section elaborates identification in non-Gaussian SVARs through the NGML estimator of \citet{lanne_identification_2017}. The structural shocks follow a Normal-inverse Gaussian distribution, which nests the Gaussian distribution as a limiting case. We extend the consistency and asymptotic normality of the structural impact matrix to the boundary case in which one shock lies exactly at the Gaussian limit (Proposition~\ref{prop:onegaussian}). Furthermore, under the null of valid identification, we also provide the limiting distribution of the bootstrap estimator of the impact matrix (Proposition \ref{prop:BSvalidity}).\\
 Consider a VAR($p$) model in its companion form:
\begin{equation}
    \mathbf{Y}_t = \mathbf{\Pi} \mathbf{Y}_{t-1} + \mathbf{U}_t 
    \label{eq:companionVAR}
\end{equation}
where,
\begin{equation}
    \mathbf{Y}_t = \underset{np \times 1}{\begin{pmatrix}
    y_t \\
    y_{t-1} \\
    \vdots \\
    y_{t-p+1}
    \end{pmatrix}}, \quad
    \mathbf{\Pi} = \underset{np \times np}{\begin{pmatrix}
    \Pi_1 & \Pi_2 & \cdots & \Pi_p \\
    I_n & 0 & \cdots & 0 \\
    0 & I_n & \cdots & 0 \\
    \vdots & \vdots & \ddots & \vdots \\
    0 & 0 & \cdots & I_n
    \end{pmatrix}}, \quad
    \mathbf{U}_t = \underset{np \times 1}{\begin{pmatrix}
    u_t \\
    0 \\
    0 \\
    \vdots \\
    0
    \end{pmatrix}}
    \label{eq:setup_companion_matrices}
\end{equation}

and $u_t = B\varepsilon_t$ with,
\begin{equation}
    \underset{n \times n}{B} = \begin{pmatrix}
    \beta_{11} & \beta_{12} & \cdots & \beta_{1n} \\
    \beta_{21} & \beta_{22} & \cdots & \beta_{2n} \\
    \vdots & \vdots & \ddots & \vdots \\
    \beta_{n1} & \beta_{n2} & \cdots & \beta_{nn}
    \end{pmatrix}
    \label{eq:setup_Bmatrix}
\end{equation}

Here $y_t$ is an $n \times 1$ vector of endogenous variables; $\Pi_i, \  i = 1, 2,\cdots, p$ are $n \times n$ autoregressive parameter matrices; $u_t$ is a vector of reduced-form innovations, $B$ is a non-singular, $n \times n$ impact/ \enquote{mixing} matrix. It links the structural shocks, $\varepsilon_t$, an $n \times 1$ vector, to the reduced-form VAR innovations. With $u_t = B\varepsilon_t$, we note that the covariance matrix of the reduced-form innovations, $\mathbb{E}(u_t u_t') = B \Sigma_\varepsilon B'$. The VAR model \eqref{eq:companionVAR} is assumed to be stable i.e., the companion matrix $\mathbf{\Pi}$ satisfies:

\begin{assumption}$\det(I_{np} - \mathbf{\Pi}z) \neq 0, \ \forall z \in \mathbb{C}, |z| \leq 1$.    
\label{as:stability}
\end{assumption}

Beyond a stable VAR, we assume that the structural shocks are mutually
\textit{independent} and \textit{non-Gaussian}, and exploit the information in addition to the unconditional variances of the reduced-form innovations i.e., skewness and/or excess kurtosis of the structural shocks. This yields identification of the impact matrix up to sign and column-permutation
indeterminacies, without the need to impose any additional identifying restrictions. This idea follows from the \citet{darmois_analyse_1953, skitovich_linear_1953} theorem, which states 
that if  $X = (X_1, X_2, \cdots, X_n)$ are independent random variables and $\alpha'X$ and $\beta'X$ are independent for $\alpha_i, \beta_i \in \mathbb{R}/
\{0\}$, then all $X_i, \ i = 1,2, \cdots, n$ are Gaussian. Extending this result and formalizing in Independent Component Analysis (ICA) literature, \citet{comon_independent_1994} shows that we can extract a unique decomposition of $u_t = B\varepsilon_t$ up to sign and column permutation, if the $\varepsilon_t$ has at most one Gaussian component and all
components are independent\footnote{It should be noted that the assumption of independence of structural shocks is more restrictive than uncorrelatedness.
This is because it is not guaranteed to recover independent structural shocks from non-Gaussian reduced-form innovations solely through linear
transformation \citep{kilian_structural_2017}}. However, recent literature shows that it is not necessary to assume complete independence of the shocks
to achieve identification. \citet{jarocinski_estimating_2024} relaxes the assumption of complete independence and assumes that the shocks are drawn from
partially dependent multivariate $t$-distributions. Other 
moment-based estimators, instead of independence, assume restrictions on higher order moments and co-moments of the shocks, see \citet{guay_identification_2021}, \citet{lanne_identifying_2023}, and/or cumulant tensors
\citet{mesters_non-independent_2024}. 

\subsection{NGML estimation:}\label{subsec:NGMLest}
Apart from non-parametric algorithms, see \citet{hyvarinen_independent_2000}, \citet{hyvarinen_independent_2013} for a detailed review, other likelihood-based approaches can be used for estimation. With NGML we assume that the structural shocks, $\varepsilon_{it}$, are i.i.d.\ sequences\footnote{\citet{lanne_identification_2017} relax the temporal independence assumption to temporal uncorrelatedness, allowing for conditionally heteroscedasticity in shocks.}, with at most one sequence following Gaussian distribution.
\begin{assumption}
    \textit{The structural shocks, $\varepsilon_{i,t}, i = 1,2,\cdots n$, are mean-zero, mutually independent and identically distributed (i.i.d.) processes,
    with diagonal covariance matrix i.e., $\mathbb{E}(\varepsilon_t\varepsilon_t') = \operatorname{diag}(\sigma_i^2), \  \sigma_i^2 \in \mathbb{R}^{+}/\{0\}$.}
    \label{as:indep}
\end{assumption}
\begin{assumption}
    \textit{At most one of the components of $\varepsilon_t$ is Gaussian.}
    \label{as:gau}
\end{assumption}
The densities of the structural shocks may depend on their individual set of parameters, $\theta_i$. The maximum likelihood estimator is defined in terms of the density $f_{i,\sigma_i}(x;\theta_i)$ such that $f_{i,\sigma_i}(x;\theta_i) \coloneqq 
\sigma_i^{-1}f_i(\sigma_i^{-1}x;\theta_i)$. As noted before, ICA ensures identification of the impact matrix up to sign and column permutations. To alleviate this indeterminacy, we follow a series of transformations, see \citet{ilmonen_semiparametrically_2011}, where $B$ is column-wise normalized (with the corresponding standard deviation of the structural shock, $\sigma_i$) to have a unitary diagonal, thereby removing the sign and column-order indeterminacies. Nonetheless, we still require economic labeling to provide interpretation to the statistically-identified shocks.

In our framework, the structural shocks are assumed to be from the Normal-inverse Gaussian (NIG) distribution, a member of the generalized hyperbolic distribution
family. It is suitable for modeling processes where the probability of obtaining extreme values is higher, vis-à-vis the normal distribution \citep{schrodinger1915theorie}, and \citet{jarocinski_estimating_2024, andrade_higher-order_2025,ludvigson_uncertainty_2021}
note non-trivial excess kurtosis in the distribution of macroeconomic shocks, and their proxies respectively\footnote{\citet{jarocinski_estimating_2024},
\citet{lanne_identification_2017}
and other studies employ independent Student-$t$ distributions, another member of generalized hyperbolic distribution family, 
to capture the excess kurtosis (and skewness), however NIG distribution allows for more flexible heavy-tailed processes \citep{lahcene_extended_2019}.}. The NIG 
distribution is defined
by four parameters: $\alpha$ (tail heaviness), $\gamma$ (asymmetry), $\delta$ (scale) and $\mu$ (location), and is denoted 
as $NIG(\alpha, \gamma, \delta, \mu)$. Its density function is given by:

\begin{equation}
    f(x; \alpha, \gamma, \delta, \mu) = \frac{\alpha\delta K_1(\alpha\sqrt{\delta^2 + (x - \mu)^2})}{\pi\sqrt{\delta^2 + (x - \mu)^2}} \exp(\delta\sqrt{\alpha^2 - \gamma^2} + \gamma(x - \mu))
    \label{eq:NIGpdf}
\end{equation}
where, $K_1$ is the modified Bessel function of the second kind with index 1. The first two central moments of NIG distribution are, $\mu + \frac{\delta\gamma}
{\sqrt{\alpha^2 - \gamma^2}}$ and $\frac{\delta\alpha^2}{(\alpha^2 - \gamma^2)^{3/2}}$, respectively. Moreover, given its flexibility, it nests the
standard normal distribution when $\gamma = 0, \ \delta = \sigma^2\alpha$ and $\alpha \rightarrow \infty$. 

Standard asymptotic-normality results for these estimators, whether the full-parametric NGML of \citet{lanne_identification_2017} or the pseudo-ML of \citet{gourieroux_statistical_2017}, are derived under a non-singular information matrix. This condition is violated at the single-Gaussian boundary, where the estimated shape parameter of the Gaussian shock lies on the edge of the parameter space and its information block vanishes, even though the impact matrix itself remains identified. We show that the impact matrix still remains consistent and asymptotically normal, which eventually allows the bootstrap diagnostic to be consistent across the null hypothesis of \emph{at most} one Gaussian shock.

Before setting up the estimation procedure, we summarize the parameters of the model: with the unit diagonal normalization of $B$,
the off-diagonal parameters of $B$ are collected in an $n \times (n-1)$ vector, $\beta \coloneqq \operatorname{vecd}(B), \ \beta \in \mathbb{R}^{n(n-1)}$, 
where $\operatorname{vecd}(\cdot)$ is the operator that 
stacks the columns of an $n \times n$ matrix, excluding the diagonal elements from its $\operatorname{vec}(\cdot)$ form. We can collect the parameters of the individual shocks' densities $(\alpha_i, \gamma_i, \delta_i, \mu_i)$ in a vector $\theta_i$ i.e.,
$\theta_i \coloneqq (\alpha_i, \gamma_i, \delta_i, \mu_i), \ i = 1,2, \cdots,n$. Furthermore, for brevity, all the model parameters (including the autoregressive coefficients) 
are collected in a vector $\lambda \coloneqq (\Pi,\beta,\sigma,\theta)$, where $\Pi = \operatorname{vec}(\Pi_1, \Pi_2, \cdots, \Pi_p)$ are the vectorized-autoregressive coefficients,
$\theta = (\theta_1', \theta_2', \cdots, \theta_n')'$, where each $\theta_i$ is a $4 \times 1$ vector, are the distribution parameters and 
$\sigma = (\sigma_1, \cdots, \sigma_n)$ are the standard deviations of the structural 
shocks, $i = 1, 2, \cdots, n$. We denote $\lambda_0$ as the true parameter vector.

NGML is a two-step estimation procedure\footnote{\citet{lanne_identification_2017} also
consider a three-step estimation method, for large $n$ and small $T$, provided the distribution of the underlying $\varepsilon_{i,t}$
is symmetric. This estimator is
asymptotically efficient.}. First, the least
squares estimates of the reduced-form VAR innovations, $\hat{u}_t$, are obtained. With the estimated $\hat{u}_t$, the log-likelihood: 
\begin{equation}
    \mathcal{L}_T(\lambda) = T^{-1}\sum_{t=1}^{T}\ell_t(\lambda),
    \label{eq:loglik}
\end{equation}
where, 
\begin{equation}
    \ell_t(\lambda) = \sum_{j=1}^{n} \log f_j(\sigma_j^{-1}\imath_j'B(\beta)^{-1}\hat{u}_t; \theta_j) - \log \det(B(\beta)) -
     \sum_{j=1}^{n} \log \sigma_j
     \label{eq:lik}
\end{equation}
is maximized with respect to the parameter vector $(\beta',\sigma,\theta)'$. Here, $\imath_j$ is the $j^{th}$ unit vector i.e., $\imath_j = (0,\cdots,1,\cdots,0)'$, where $1$ is in the $j^{th}$ position. The estimator is given by: 
\begin{equation}
    \hat{\lambda}_T = (\hat{\Pi}_T, \hat{\beta}_T, \hat{\sigma}_T, \hat{\theta}_T) = \underset{\beta, \sigma, \theta}{\arg\max} \ 
     \mathcal{L}_T(\lambda)
    \label{eq:NGMLestimator}
\end{equation}
The estimated structural shocks are obtained by multiplying a transposed $j^{th}$ unit vector, $\imath_j'$,
\begin{equation}
\hat{\varepsilon}_{j,t} = \hat{\sigma}_j^{-1}\imath_j'B(\hat{\beta}_T)^{-1}\hat{u}_t, \ j = 1,2,\cdots,n. 
\label{eq:est_struc_shock}
\end{equation}
Here $B(\hat{\beta}_T)$ is the estimated (column-normalized) matrix $B$ as a function of its off-diagonal elements $\hat{\beta}_T$.
Once the structural shocks are identified and estimated, we can obtain the estimates of IRFs, given by: 
\begin{equation}
\underset{n \times n}{\hat{\Psi}_i} = R \ \hat{\mathbf{\Pi}}^i \ R' \ B(\hat{\beta}_T)\operatorname{diag}(\hat{\sigma}_T), \quad i = 0,1,2,\cdots,h,
\label{eq:IRF}
\end{equation}
where, $R \coloneqq (I_n , \mathbf{0}_{n \times n(p-1)})$ is a selection matrix, $\hat{\mathbf{\Pi}}$ is the companion form of the
estimated autoregressive coefficients, $\hat{\Pi}_1, \hat{\Pi}_2, \cdots, \hat{\Pi}_p$.

The instantaneous responses of one standard deviation shocks are given by $B(\hat{\beta}_T)\operatorname{diag}(\hat{\sigma}_T)$. 
 Henceforth, for brevity, we denote the impact matrix to one standard deviation shocks as:
 \begin{equation}
 \hat{B}_T \coloneqq B(\hat{\beta}_T)\underset{n \times n}{\operatorname{diag}}(\hat{\sigma}_T)
 \label{eq:BT} 
\end{equation}
We summarize the set of regularity conditions on the non-Gaussian SVAR model which allow for standard asymptotic and bootstrap inference for NGML 
estimates in the Appendix \ref{subsec:regcond}. These regularity conditions include the stability conditions for the SVAR model, the true parameter values lie in a compact, permissible parameter space, conventional differentiability assumptions  on the density functions, suitable integrability conditions ensuring the score function has zero mean and finite covariance matrix when evaluated at the true parameter value, and the covariance matrix of the limiting distribution of the ML estimator is positive definite. 
\begin{remark}[Identification Failure and the Fisher Information Matrix]\label{rem:identification_failure}
     Assumptions \ref{as:indep} and \ref{as:gau} are jointly necessary for the positive definiteness of the Fisher information matrix $\mathcal{I}(\lambda_0)$ stated in regularity condition \ref{subsec:regcond} (7). To see this, suppose Assumption \ref{as:gau} fails and two shocks, say $\varepsilon_{i,t}$ and $\varepsilon_{j,t}$ are Gaussian. Then any rotation of the structural space within the Gaussian subspace $\{i,j\}$ leaves the joint distribution of $u_t$ unchanged \citep{gourieroux_statistical_2017}. This implies that the log-likelihood is invariant along the rotation of the Gaussian subspace. Hence, $\mathcal{I}(\lambda_0)$ is singular in that
    rotational direction and this causes the NGML estimator to be inconsistent and its bootstrap distribution to deviate from normality: the mechanism which our diagnostic exploits.
\end{remark}

 \subsection{Consistency and Asymptotic Normality:} Consider the model representation in equation \eqref{eq:companionVAR}, 
 satisfying the regularity conditions (\ref{subsec:regcond}) with Assumptions (\ref{as:stability}) - (\ref{as:gau}), then the NGML estimator,
    $\hat{\lambda}_T = (\hat{\Pi}_T, \hat{\beta}_T, \hat{\sigma}_T, \hat{\theta}_T)$, obtained by maximising the log-likelihood function 
    \eqref{eq:loglik} and \eqref{eq:lik}, satisfies\footnote{For detailed proofs, see \citet{lanne_identification_2017}, Theorem 1. When one structural shock is Gaussian, \eqref{eq:asymp} holds
    for the structural sub-vector $\psi=(\operatorname{vec}(\Pi),\beta,\sigma)$
    with the robust covariance of Proposition \ref{prop:onegaussian} in place of
    $\mathcal{I}(\lambda_0)^{-1}$; the impact-matrix estimator $\hat{B}_T$
    remains asymptotically normal, which is the object of interest.}:

 \begin{equation}
    T^{1/2}(\hat{\lambda}_T - \lambda_0) \xrightarrow{d} \mathcal{N}(0, \mathcal{I}(\lambda_0)^{-1}), \quad \text{as} \ T \rightarrow \infty,
    \label{eq:asymp}
 \end{equation}
 where $\xrightarrow{d}$ denotes convergence in distribution, $\mathcal{I}(\lambda_0)$ is the Fisher information matrix, defined as,
 $\mathcal{I}(\lambda_0) \coloneq -\mathbb{E}[\nabla^2 \ell_{\lambda\lambda,t}(\lambda_0)]$, where 
 $\nabla^2 \ell_{\lambda\lambda,t}(\lambda_0) = {\partial^2 \ell_t(\lambda_0)}/{\partial \lambda \partial \lambda'}$. 
 Furthermore,
 a consistent estimator of the asymptotic covariance matrix can be obtained by the ML estimator, $\hat{\lambda}_T$ and the Hessian matrix of the log-likelihood 
 function i.e.,
    \begin{equation}
       - \mathcal{L}_{\lambda\lambda,T}^{-1}(\hat{\lambda}_T) \coloneqq -\left(T^{-1} \sum_{t=1}^{T}\nabla^2 \ell_{\lambda\lambda,t}(\hat{\lambda}_T)\right)^{-1} \rightarrow \quad \mathcal{I}(\lambda_0)^{-1}, \quad \text{as} \ T \rightarrow \infty,
        \label{eq:fisher}
    \end{equation}

We extend these results of consistency and asymptotic normality of the estimator to the case of single, exact Gaussian shock. The identification conditions permit at most one Gaussian shock, and the boundary of
this null is precisely where the standard asymptotics require attention. When one shock's
fitted NIG density reaches its Gaussian limit ($\alpha_k\to\infty$, equivalently
$\rho_k=1/\alpha_k=0$), that shock's shape parameters are no longer identified: the
corresponding block of the Fisher information collapses, so the full-parameter
information matrix $\mathcal{I}(\lambda_0)$ is singular. This boundary is not an isolated knife-edge; the full-parameter information is already ill-conditioned
whenever a single shock is merely {near}-Gaussian. The exact-Gaussian point is simply the limit of this
near-singular region. The case is therefore a genuine
part of the null hypothesis over which the diagnostic must remain valid.

Proposition~\ref{prop:onegaussian} establishes that this ill-conditioning is confined
to the Gaussian shock's own shape parameters and does {not} propagate to the
impact matrix. Although the full information is singular at the boundary, the
autoregressive-impact parameters $\psi=(\operatorname{vec}(\Pi),\beta,\sigma)$ retain a
non-singular information $\mathcal{I}_\psi$. Since $\mathcal{I}_\psi$ is non-singular at the
boundary, the impact-matrix
estimator $\hat{B}_T$ is uniformly consistent and asymptotically normal as a
shock ranges from strongly non-Gaussian down to and including the Gaussian limit.
Impact-matrix inference is thus stable across the entire null, not merely at its
interior. The estimator loses identification only when a \emph{second} shock also
approaches Gaussianity: the rotational near-indeterminacy of the weak-identification
regime studied in Section~\ref{sec:WKId} which lies in the alternative hypothesis.

\begin{lemma}[Information at the single-Gaussian boundary]\label{lem:profileinfo}
Let the regularity conditions of Appendix~\ref{subsec:regcond} and Assumptions
\ref{as:stability}--\ref{as:indep} hold, and suppose exactly one structural shock,
say shock $k$, is Gaussian. Partition $\lambda=(\psi',\theta')'$ with
$\psi=(\operatorname{vec}(\Pi)',\beta',\sigma')'$ the autoregressive--impact
parameters and $\theta=(\theta_1',\dots,\theta_n')'$ the shape parameters, and let
$\mathcal{I}(\lambda_0)=-\mathbb{E}[\nabla^2_{\lambda\lambda}\ell_t(\lambda_0)]$ and
$\mathcal{J}(\lambda_0)=\mathbb{E}[\nabla_\lambda\ell_t(\lambda_0)\nabla_\lambda\ell_t(\lambda_0)']$
be partitioned conformably. Then:
\begin{enumerate}
\item[(i)] the shape block $\mathcal{I}_{\theta\theta}(\lambda_0)$ is singular, with
null space $\mathcal{N}=\ker\mathcal{I}_{\theta\theta}$ equal to the shape directions
along which shock $k$'s Gaussian-limit density is invariant;
\item[(ii)] the cross-information blocks annihilate $\mathcal{N}$:
$\mathcal{I}_{\psi\theta}(\lambda_0)\,\mathcal{N}=\{0\}$ and
$\mathcal{J}_{\psi\theta}(\lambda_0)\,\mathcal{N}=\{0\}$;
\item[(iii)] the Schur-complement information
$\mathcal{I}_\psi=\mathcal{I}_{\psi\psi}-\mathcal{I}_{\psi\theta}\mathcal{I}_{\theta\theta}^{+}\mathcal{I}_{\theta\psi}$
is independent of the choice of generalized inverse $\mathcal{I}_{\theta\theta}^{+}$
and positive definite, with $\mathcal{J}_\psi$ the corresponding score
covariance.
\end{enumerate}
\end{lemma}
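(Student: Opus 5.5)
The plan is to work in a reparametrisation of shock $k$'s NIG density that makes the Gaussian boundary explicit. Use $\rho_k=1/\alpha_k$ and, as in the nesting noted after \eqref{eq:NIGpdf}, tie $\delta_k=\sigma_k^2\alpha_k$, so that the density is a smooth function of $(\rho_k,\gamma_k,\mu_k)$ near $\rho_k=0$ and equals the standard normal at $\rho_k=0,\gamma_k=0$.

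\textbf{Step (i).} Expand $\log f_k(x;\theta_k)$ around the Gaussian limit in $\rho_k$. The aim is to show that the derivatives of $\log f_k$ in the directions that leave the Gaussian-limit density invariant vanish identically in $x$ at $\theta_{k,0}$. These are the directions in which $\theta_k$ enters only through terms multiplied by a positive power of $\rho_k$: a shape coordinate whose only effect at $\rho_k=0$ is through a factor that collapses to $1$, for example a component of $\alpha_k$ once $\delta_k/\alpha_k$ is fixed. Such a score is therefore $0$ almost surely. Hence $\mathcal{J}_{\theta\theta}$ annihilates these directions. By the information identity, which is applied to the regular directions and, for the degenerate directions, uses the fact that the second derivative along $v$ has mean $-\operatorname{Var}(v'\nabla_\theta\ell_t)=0$, $\mathcal{I}_{\theta\theta}$ annihilates them as well.

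For the converse inclusion $\ker\mathcal{I}_{\theta\theta}\subseteq\mathcal{N}$, write $v'\mathcal{I}_{\theta\theta}v=\mathbb{E}[(v'\nabla_\theta\ell_t)^2]$. A zero value forces the directional score to vanish almost surely. Because the shocks are independent and the shape parameters of different shocks enter separate additive terms of \eqref{eq:lik}, this reduces shock by shock to the regularity condition: the non-Gaussian shocks have non-singular shape information, so any null vector is supported on shock $k$'s invariant directions.

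\textbf{Step (ii).} For $v\in\mathcal{N}$, the quantity $v'\nabla_\theta\ell_t(\lambda_0)=0$ almost surely, since it is exactly the directional score from Step (i). Therefore
\[
\mathcal{J}_{\psi\theta}v=\mathbb{E}\bigl[\nabla_\psi\ell_t\,(v'\nabla_\theta\ell_t)\bigr]=0 .
\]
For $\mathcal{I}_{\psi\theta}v$, differentiate $v'\nabla_\theta\ell_t(\psi,\theta_0)$ with respect to $\psi$. Vanishing at $\theta_0$ holds for all $\psi$ near $\psi_0$, because the Gaussian limit is invariant along $v$ regardless of the argument $\sigma_k^{-1}\imath_k'B^{-1}\hat u_t$. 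Hence the mixed derivative is identically zero and its expectation is zero. This is the step I expect to be the main obstacle. One must verify that the invariance holds uniformly in the argument, so that differentiation in $\psi$ can be interchanged with the limit $\rho_k\to0$. This requires dominated-convergence bounds on Bessel-function ratios $K_1'/K_1$ as $\alpha_k\to\infty$ via the large-argument asymptotics $K_1(z)\sim\sqrt{\pi/2z}\,e^{-z}$. Given the degeneracy, I would first try to reduce $\mathcal{I}_{\psi\theta}v=0$ to $\mathcal{J}_{\psi\theta}v=0$ by the generalised information-matrix identity applied along one-sided paths $\rho_k\downarrow 0$.

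\textbf{Step (iii).} Gauge invariance follows from (ii): any two generalised inverses differ only on components that are mapped into or from $\mathcal{N}$, and since $\operatorname{range}(\mathcal{I}_{\theta\psi})\subseteq\mathcal{N}^{\perp}=\operatorname{range}(\mathcal{I}_{\theta\theta})$, the product $\mathcal{I}_{\psi\theta}\mathcal{I}_{\theta\theta}^{+}\mathcal{I}_{\theta\psi}$ is unchanged. Concretely, write $\mathcal{I}_{\theta\psi}=\mathcal{I}_{\theta\theta}C$ and use $\mathcal{I}_{\theta\theta}G\mathcal{I}_{\theta\theta}=\mathcal{I}_{\theta\theta}$.

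For positive definiteness, restrict $\theta$ to $\mathcal{N}^{\perp}$. There the reduced information $\tilde{\mathcal{I}}$ on $(\psi,\theta|_{\mathcal{N}^\perp})$ is the information of the model with shock $k$ fixed at its Gaussian law, and $\mathcal{I}_\psi$ is the Schur complement of $\tilde{\mathcal{I}}$. It therefore suffices that $\tilde{\mathcal{I}}$ is positive definite. Suppose $w'\tilde{\mathcal{I}}w=0$. Then the corresponding directional score is zero almost surely, so the likelihood is locally flat in a direction involving $\psi$. Under Assumption~\ref{as:gau} with exactly one Gaussian shock, Comon's identification result (via Darmois--Skitovich) rules out any such direction for $(\beta,\sigma)$: a rotation would need to mix the Gaussian shock with a non-Gaussian one. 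For $\operatorname{vec}(\Pi)$, the lagged regressors are non-degenerate under Assumption~\ref{as:stability}, and the $\Pi$-score block is orthogonal to the others in expectation.

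Finally, $\mathcal{J}_\psi$ is defined analogously from the score projected off $\theta|_{\mathcal{N}^\perp}$. It is well defined by the same argument, using $\mathcal{J}_{\psi\theta}\mathcal{N}=\{0\}$.
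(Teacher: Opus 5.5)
Your handling of (i), (ii) and the generalized-inverse half of (iii) follows the paper's route. You use the same reparametrization $\rho_k=1/\alpha_k$ along $\delta_k=\sigma_k^2\alpha_k$ and show that directional scores vanish a.s.\ on $\mathcal{N}$. You get $\mathcal{J}_{\psi\theta}v=0$ directly and $\mathcal{I}_{\psi\theta}v=0$ by differentiating an identically zero score in $\psi$; your alternative via the entrywise information identity along one-sided paths also works. The range inclusion $\operatorname{range}(\mathcal{I}_{\theta\psi})\subseteq\operatorname{range}(\mathcal{I}_{\theta\theta})$ is made explicit by your factorization $\mathcal{I}_{\theta\psi}=\mathcal{I}_{\theta\theta}C$. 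In (i), however, the mechanism needs correcting. The $\rho_k$-score vanishes because $\log f_k$ is even in $\rho_k$: the departure from the Gaussian is $O(\rho_k^2)$, e.g.\ excess kurtosis $3\rho_k^2/\sigma_k^2$. Entering through some positive power of $\rho_k$ is not enough, since a term $c_1(x)\rho_k$ would leave a non-zero score. Also, at $\rho_k=0$ the density is $\mathcal{N}(\mu_k+\sigma_k^2\gamma_k,\sigma_k^2)$, so the other null direction is the combination $d\mu_k=-\sigma_k^2\,d\gamma_k$, not a coordinate multiplied by $\rho_k$.

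The genuine gap is positive definiteness in (iii). From $w'\tilde{\mathcal{I}}w=0$ you correctly obtain a directional score that is zero a.s. That only says the direction is unidentified to first order; it does not make the likelihood locally flat. Comon's theorem (via Darmois--Skitovich) is a global injectivity statement and does not imply non-singular information. The lemma itself refutes that implication: $\rho_k$ is identified through the kurtosis yet carries zero information. So the appeal to Comon cannot exclude such a $w$. You need a local, first-order argument, which is what the paper's appeal to pairwise negative curvature (the local-concavity condition of Gouri\'eroux et al.) stands in for.

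Concretely, work in local unmixing coordinates $\varepsilon\mapsto(I+\Delta)\varepsilon$, to which $(\beta,\sigma)$ is locally diffeomorphic, and let $\phi_j=-(\log f_j)'$. The off-diagonal scores $\phi_i(\varepsilon_i)\varepsilon_j$ and $\phi_j(\varepsilon_j)\varepsilon_i$ are uncorrelated with every single-shock scale or shape score, with the scores of other pairs, and with the $\Pi$-score. This uses $\mathbb{E}\phi_i(\varepsilon_i)=\mathbb{E}\varepsilon_j=0$ and the independence of $Y_{t-1}$ from $\varepsilon_t$. Since $\mathbb{E}[\phi_i(\varepsilon_i)\varepsilon_i]=1$, their covariance is $\bigl(\begin{smallmatrix}\mathbb{E}\phi_i^2\,\mathbb{E}\varepsilon_j^2 & 1\\ 1 & \mathbb{E}\phi_j^2\,\mathbb{E}\varepsilon_i^2\end{smallmatrix}\bigr)$. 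Cauchy--Schwarz gives $\mathbb{E}\phi_i^2\,\mathbb{E}\varepsilon_i^2\ge 1$, with equality iff $\phi_i$ is linear, i.e.\ $\varepsilon_i$ is Gaussian. Hence each pair block is positive definite exactly when at most one of the two shocks is Gaussian. Combined with the single-shock blocks on $\mathcal{N}^{\perp}$ and the $\Pi$ block, this yields $\tilde{\mathcal{I}}\succ 0$, and hence $\mathcal{I}_\psi\succ 0$ as its Schur complement.
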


 \begin{proposition}[]\label{prop:onegaussian}
Let the regularity conditions of Appendix~\ref{subsec:regcond} and Assumptions
\ref{as:stability}--\ref{as:indep} hold, and suppose exactly one structural shock is
Gaussian. Then the NGML estimator of the autoregressive--impact parameters
$\psi=(\operatorname{vec}(\Pi)',\beta',\sigma')'$, the impact-matrix estimator
$\hat{B}_T=B(\hat\beta_T)\operatorname{diag}(\hat\sigma_T)$ is consistent
and asymptotically normal,
\begin{equation}
  T^{1/2}\bigl(\hat{B}_T-B_0\bigr)\;\xrightarrow{d}\;\mathcal{N}\bigl(0,\;\mathcal{V}_B\bigr),
  \label{eq:Bblock_normal}
\end{equation}
where $\mathcal{V}_B$ is obtained from the robust estimator
$\mathcal{I}_\psi^{-1}\mathcal{J}_\psi\mathcal{I}_\psi^{-1}$, and $\mathcal{I}_\psi$ is positive-definite from Lemma~\ref{lem:profileinfo}. When all
structural shocks are non-Gaussian the information identity
$\mathcal{I}_\psi=\mathcal{J}_\psi$ holds and $\mathcal{V}_B$ reduces to the efficient
covariance \eqref{eq:SigmaB}.
\end{proposition}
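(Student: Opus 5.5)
We prove Proposition~\ref{prop:onegaussian} through a profile (concentrated) likelihood argument for $\psi$ that handles the boundary shape parameters of shock $k$ separately. We reparametrize shock $k$'s NIG shape by $\rho_k=1/\alpha_k\in[0,\bar\rho]$, so that the true value $\rho_{k,0}=0$ lies on the boundary of a compact parameter space. Its remaining shape coordinates then sit in $\mathcal{N}$ from Lemma~\ref{lem:profileinfo}(i), i.e.\ they do not affect the density at $\rho_k=0$.

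\textbf{Consistency.} We first show consistency of $\hat\psi_T$. Uniform convergence of $\mathcal{L}_T(\lambda)$ to $\mathcal{L}(\lambda)=\mathbb{E}\ell_t(\lambda)$ over the compact space follows from the regularity conditions of Appendix~\ref{subsec:regcond}. Replacing $u_t$ by $\hat u_t$ costs only $o_p(1)$, by stability (Assumption~\ref{as:stability}) and $T^{1/2}$-consistency of OLS. Next we argue that the limit criterion is uniquely maximized in $\psi$ at $\psi_0$, even though it is maximized on a set of $\theta$ values (all $\theta_k$ giving the Gaussian density). Two ingredients deliver this. Under Assumption~\ref{as:gau} with exactly one Gaussian shock, the Comon/Darmois--Skitovich argument identifies $B$ up to sign and permutation, and the unit-diagonal normalization removes that indeterminacy. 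Kullback--Leibler positivity then pins down $(\beta,\sigma)$ and hence $B_0$. The argmax theorem applied to the set-valued maximizer therefore gives $\hat\psi_T\to_p\psi_0$, even though $\hat\theta_{k,T}$ need not converge.

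\textbf{Asymptotic normality.} We expand the score in $\psi$ around $\psi_0$ while profiling $\theta$, writing $\hat\theta_T=\hat\theta_T(\hat\psi_T)$. A joint mean-value expansion of the first-order conditions gives
\[
0=T^{1/2}\nabla_\psi\mathcal{L}_T(\lambda_0)-\mathcal{I}_{\psi\psi}T^{1/2}(\hat\psi_T-\psi_0)-\mathcal{I}_{\psi\theta}T^{1/2}(\hat\theta_T-\theta_0)+o_p(1+T^{1/2}\|\hat\psi_T-\psi_0\|).
\]
The components of $T^{1/2}(\hat\theta_T-\theta_0)$ lying in $\mathcal{N}$ may be unbounded, but by Lemma~\ref{lem:profileinfo}(ii) they are annihilated by $\mathcal{I}_{\psi\theta}$. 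The components outside $\mathcal{N}$ (the non-Gaussian shocks' shapes, and $\rho_k$ with its boundary-constrained behaviour) are handled in the standard way. We project out the $\theta$-score with $\mathcal{I}_{\theta\theta}^{+}$ and form the efficient score $S_\psi=\nabla_\psi\ell_t-\mathcal{I}_{\psi\theta}\mathcal{I}_{\theta\theta}^{+}\nabla_\theta\ell_t$, whose variance is $\mathcal{J}_\psi$. This yields
\[
T^{1/2}(\hat\psi_T-\psi_0)=\mathcal{I}_\psi^{-1}T^{-1/2}\textstyle\sum_t S_{\psi,t}+o_p(1),
\]
with $\mathcal{I}_\psi$ invertible by Lemma~\ref{lem:profileinfo}(iii). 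A martingale/i.i.d.\ CLT then gives $\mathcal{N}(0,\mathcal{I}_\psi^{-1}\mathcal{J}_\psi\mathcal{I}_\psi^{-1})$. The estimation error from $\hat\Pi_T$ enters through the $\psi$-block and is already included in $\psi$.

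\textbf{Delta method and the non-Gaussian case.} The map $(\beta,\sigma)\mapsto B(\beta)\operatorname{diag}(\sigma)$ is smooth, so the delta method applied to this $\psi$-limit delivers \eqref{eq:Bblock_normal}, with $\mathcal{V}_B=G\,\mathcal{I}_\psi^{-1}\mathcal{J}_\psi\mathcal{I}_\psi^{-1}G'$, where $G$ is the Jacobian of that map. When all shocks are non-Gaussian, $\mathcal{I}$ is nonsingular, the information identity $\mathcal{I}=\mathcal{J}$ holds, and therefore $\mathcal{I}_\psi=\mathcal{J}_\psi$. In that case the sandwich collapses to the efficient covariance \eqref{eq:SigmaB}.

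\textbf{Main obstacle.} The delicate step is justifying the expansion when $\rho_k$ sits on the boundary and $\theta_k$ is unidentified, so that $\hat\theta_T$ is not $T^{1/2}$-consistent. The usual Taylor argument requires the remainder to be controlled uniformly over a non-shrinking set of $\theta_k$. We would handle this by proving stochastic equicontinuity of $\nabla_\psi\mathcal{L}_T$ in $\theta_k$ over the compact shape set. We would also show that at $\rho_k=0$ the cross-derivative $\nabla_{\psi\theta_k}\ell_t$ has zero mean uniformly, which is the content of Lemma~\ref{lem:profileinfo}(ii), so that the $\theta_k$ drift contributes only $o_p(1)$. Boundary (Andrews-type) effects on $\rho_k$ are then orthogonal to $\psi$ and do not alter the Gaussian limit of $\hat B_T$.
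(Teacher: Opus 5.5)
\textbf{There is a genuine gap in your asymptotic-normality step.} It sits exactly at the point you flag as the main obstacle, and Lemma~\ref{lem:profileinfo}(ii) does not close it.

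You place $\rho_k$ outside $\mathcal{N}$ and propose to treat it with Andrews-type boundary arguments. That classification is wrong. Along the Gaussian-limit curve ($\gamma_k=0$, $\delta_k=\sigma_k^2\alpha_k$) the NIG density departs from the Gaussian only at order $\rho_k^2$ (excess kurtosis $3\rho_k^2/\sigma_k^2$). So the $\rho_k$-score vanishes identically at $\rho_k=0$, and three things follow:
\begin{enumerate}
\item $\rho_k$ is itself a zero-information direction, hence lies in $\mathcal{N}$.
\item Boundary theory requiring non-singular information does not apply.
\item The likelihood is locally quadratic in $\rho_k^2$, so $\hat\rho_{k,T}=O_p(T^{-1/4})$ and $T^{1/2}\hat\rho_{k,T}$ diverges.
\end{enumerate}
As a result, the remainder in your joint mean-value expansion is not $o_p(1+T^{1/2}\|\hat\psi_T-\psi_0\|)$.

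The first-order cross term is indeed killed, because $\partial^2\ell_t/\partial\psi\,\partial\rho_k=0$ at $\rho_k=0$. But the cross-Hessian at the intermediate point is $O_p(\hat\rho_{k,T})$. Multiplying it by $T^{1/2}\hat\rho_{k,T}$ leaves a perturbation of the scaled $\psi$-score equal to $T^{1/2}\cdot\tfrac12\bar c\,\hat\rho_{k,T}^2+o_p(1)=O_p(1)$, where $\bar c=\mathbb{E}[\partial^3\ell_t/\partial\psi\,\partial\rho_k^2|_{\rho_k=0}]$. If $\bar c\neq0$, this term survives. It is driven by the one-sided (censored) limit of $T^{1/2}\hat\rho_{k,T}^2$, so $T^{1/2}(\hat\psi_T-\psi_0)$ would generally not be Gaussian.

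Lemma~\ref{lem:profileinfo}(ii) holds only at $\lambda_0$. The ``zero mean uniformly in $\theta_k$'' property you invoke for $\nabla_{\psi\theta_k}\ell_t$ is strictly stronger: it already contains $\bar c=0$. So it is not the content of that lemma, and you give no argument for it.

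\textbf{The missing idea is the second-order parity cancellation that the paper proves in Lemma~\ref{lem:parity}.}
\begin{enumerate}
\item The order-$\rho_k^2$ term of $\log f_k$ is proportional to $He_4(\varepsilon_k)$. Hence $\partial^3\ell_t/\partial\psi\,\partial\rho_k^2\propto He_3(\varepsilon_k)\,\partial\varepsilon_k/\partial\psi$.
\item For the $(\beta,\sigma)$ block, $\partial\varepsilon_k/\partial\psi$ is linear in the shocks. $He_3$ is Gaussian-orthogonal to every polynomial of degree at most two, so the expectation is zero.
\item For the autoregressive block, $\partial\varepsilon_k/\partial\psi$ depends only on $Y_{t-1}$, which is independent of $\varepsilon_{k,t}$, and $\mathbb{E}[He_3(\varepsilon_{k,t})]=0$.
\end{enumerate}
Thus $\bar c=0$, and profiling out $\rho_k$ perturbs the $\psi$-score by only $o_p(T^{-1/2})$.

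Equivalently, you could reparametrize by $\tau_k=\rho_k^2\ge0$. This parameter has a non-degenerate score proportional to $He_4(\varepsilon_k)$, so standard boundary theory applies; you would then need to show that the $\psi$--$\tau_k$ cross-information vanishes, which is the same Hermite computation.

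With this in hand, the rest of your plan matches the paper's argument: the efficient score with $\mathcal{I}_\psi$ from Lemma~\ref{lem:profileinfo}(iii), the sandwich CLT, the delta method giving \eqref{eq:Bblock_normal}, and the reduction to \eqref{eq:SigmaB}.
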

\begin{proof}
    See Appendix \ref{proof:onegaussian}.
\end{proof}
\begin{remark}\label{rem:onegaussian_bridge}
Because the diagnostic of Section~\ref{sec:BSdiag} enters only through $\hat{B}_T$
and requires only the asymptotic normality \eqref{eq:Bblock_normal} of its bootstrap
distribution, Proposition~\ref{prop:onegaussian} is exactly what makes the procedure
valid under the full null of at most one Gaussian shock. Consistency of the
residual-based moving block bootstrap (MBB) estimator of $\mathcal{V}_B$ is established in
Appendix~\ref{subsec:proofBSvalidity}.
\end{remark}

Proposition~\ref{prop:onegaussian} has an implication for practitioners that is
independent of the diagnostic. A practitioner who believes one structural shock may be
Gaussian need not decide \emph{which} one, nor switch to a partly Gaussian or
otherwise restricted specification. Estimating all $n$ shocks under the NIG family, which nests the Gaussian at the boundary, leaves the
impact-matrix estimator $\hat{B}_T$ consistent and asymptotically normal,
with valid asymptotic standard errors from $\mathcal{V}_B$ in \eqref{eq:Bblock_normal},
even when a shock lies at the boundary of the parameter space. 

  By delta method, see \citet{kilian_structural_2017, bruggemann_inference_2016}, the covariance matrix of $\hat{B}_T = B(\hat{\beta}_T)\operatorname{diag}(\hat{\sigma}_T)$, 
 denoted as $\hat{\Sigma}_{{B}_T}$, can be obtained as: for $i = 1,2,\cdots,n$ and $j = 1,2,\cdots,n$,
 
\begin{align}
    \hat{\Sigma}_{{B}_T}(i,j) &= \hat{\Sigma}_{{\sigma}_T}(i,j), \quad \text{for} \ i = j, \\
    \hat{\Sigma}_{{B}_T}(i,j) &= \operatorname{Var}(B(\hat{\beta}_T)_{i,j})(\hat{\sigma}_T(j))^2 + \operatorname{Var}(\hat{\sigma}_T(j))(B(\hat{\beta}_T)_{i,j})^2 \nonumber\\ 
                            &+2\operatorname{Cov}(B(\hat{\beta}_T)_{i,j}, \hat{\sigma}_T(j))B(\hat{\beta}_T)_{i,j}\hat{\sigma}_T(j),
    \quad \text{for} \ i \neq j,
 \label{eq:SigmaB}
\end{align}
 where $B(\hat{\beta}_T)_{i,j}$ is the $(i,j)^{th}$ element of the matrix $B(\hat{\beta}_T)$. Throughout the diagnostic, $\hat{\Sigma}_{B_T}$ denotes the robust estimator of $\mathcal{V}_B$ (Proposition \ref{prop:onegaussian}); when all shocks are non-Gaussian, it coincides with the efficient covariance \eqref{eq:SigmaB}.

 \subsection{Bootstrap Inference:}\label{sec:BSINF}
Now we discuss the bootstrap estimator, $\hat{B}_T^*$ and provide its limiting distribution under the null of valid identification conditions. We strictly follow the residual-based MBB algorithm of \citet{bruggemann_inference_2016}, see Appendix section \ref{subsec:RMBB}, where the residual-based MBB yields asymptotically valid inference for the reduced-form VAR under conditional heteroscedasticity of unknown form (the NGML score conditions being verified under the i.i.d.\ design; see Appendix~\ref{subsec:proofBSvalidity}). In the Monte Carlo simulations, we consider structural shocks are 
drawn i.i.d.\ from a non-Gaussian
distribution which allows for parametric i.i.d.\ bootstrap algorithms as well. However in empirical applications with small samples, the general setup of residual-based MBB
and drawing sample from the empirical distribution of the residuals is expected to be more robust. In the Monte Carlo simulations, with 
valid specifications, the results are similar with both bootstrap algorithms.

Let $\hat{B}_T^{\ast}=B(\hat{\beta}_T^{\ast})\operatorname{diag}(\hat{\sigma}_T^{\ast})$
denote the bootstrap analog of the impact-matrix estimator $\hat{B}_T$, obtained from
Algorithm~\ref{subsec:RMBB}, and define the studentized bootstrap statistic
\begin{equation}
  Q_T^{\ast} \coloneqq \hat{\Sigma}_{B_T}^{-1/2}\,T^{1/2}\big(\hat{B}_T^{\ast}-\hat{B}_T\big),
\end{equation}
where $\hat{\Sigma}_{B_T}$ is a consistent estimator of the covariance $\mathcal{V}_B$
of $T^{1/2}(\hat{B}_T-B_0)$, and $B_0$ is the true impact matrix. The distribution of
$Q_T^{\ast}$ conditional on the observed data is denoted by
$\hat{G}_T^{\ast}(\cdot)$. It is used to approximate the sampling distribution
$G_T(\cdot)$ of the corresponding sample statistic
\begin{equation}
  Q_T \coloneqq \hat{\Sigma}_{B_T}^{-1/2}\,T^{1/2}\big(\hat{B}_T-B_0\big),
\end{equation}
which, under valid identification, is asymptotically standard normal, see \eqref{eq:asymp}. $\hat{G}_T^{\ast}$ can be computed with arbitrary precision from the bootstrap replicates
$\{Q_{T,b}^{\ast}\}_{b=1}^{N}$ of Algorithm~\ref{subsec:RMBB} by the empirical
distribution function
\begin{equation}\label{eq:edf}
  \hat{G}_{T,N}^{\ast}(x)
  \coloneqq \frac{1}{N}\sum_{b=1}^{N}
  \mathbf{1}\!\big(\operatorname{vec}(Q_{T,b}^{\ast})\le x\big),
  \qquad x\in\mathbb{R}^{n^{2}},
\end{equation}
where $\mathbf{1}(\cdot)$ is the indicator function by the Glivenko-Cantelli
theorem, $\sup_x\big|\hat{G}_{T,N}^{\ast}(x)-\hat{G}_T^{\ast}(x)\big|
\xrightarrow{\text{a.s.}}0$ conditionally on the data as $N\to\infty$.
The following proposition and corollary establish the asymptotic validity of this bootstrap
approximation; under valid identification $\hat{G}_T^{\ast}$ converges to the
standard normal, the property exploited by the diagnostic of
Section~\ref{sec:BSdiag}.

\begin{proposition}\label{prop:BSvalidity}
     Consider the NGML estimator, $\hat{\lambda}_T = (\hat{\Pi}_T, \hat{\beta}_T, \hat{\sigma}_T, \hat{\theta}_T)$, defined in \eqref{eq:NGMLestimator} where, $\hat{B}_T = B(\hat{\beta}_T)\operatorname{diag}(\hat{\sigma}_T)$, and its bootstrap analog
     $\hat{\lambda}_T^{\ast}$ (and $\hat{B}_T^{\ast}$), defined in Algorithm (\ref{subsec:RMBB}). Under the Assumptions \ref{as:stability} - \ref{as:gau}, consistent estimator \eqref{eq:asymp} and 
     \eqref{eq:fisher} with the regularity conditions (\ref{subsec:regcond}), as $T \rightarrow \infty$:
       \begin{equation*}
        \hat{\Sigma}^{-1/2}_{\lambda_T}T^{1/2}(\hat{\lambda}_T^* - \hat{\lambda}_T) 
        \xrightarrow{d^*}_p\  \mathcal{N}(\mathbf{0}_{\dim (\lambda)}, I_{\dim (\lambda)}), \ \text{in probability}
       \end{equation*}  
        where, $\hat{\Sigma}_{\lambda_T}$ is a consistent estimator of the asymptotic covariance 
        matrix of $\hat{\lambda}_T$.     
\end{proposition}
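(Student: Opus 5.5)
The approach is the standard one for bootstrap validity of extremum estimators: mimic, in the bootstrap world, the mean-value expansion that underlies \eqref{eq:asymp}, and show that each ingredient converges conditionally on the data, in probability. The bootstrap estimator $\hat{\lambda}_T^{\ast}$ maximizes the bootstrap log-likelihood $\mathcal{L}_T^{\ast}(\lambda)$, built from the bootstrap residuals $\hat u_t^{\ast}$ of Algorithm~\ref{subsec:RMBB}. The expansion is taken around $\hat{\lambda}_T$, which is the pseudo-true value in the bootstrap world. It gives
\begin{equation*}
T^{1/2}(\hat{\lambda}_T^{\ast}-\hat{\lambda}_T) = -\big[\mathcal{L}^{\ast}_{\lambda\lambda,T}(\bar{\lambda}_T^{\ast})\big]^{-1} T^{1/2}\nabla_\lambda \mathcal{L}_T^{\ast}(\hat{\lambda}_T) + o_{p^{\ast}}(1),
\end{equation*}
with $\bar\lambda_T^{\ast}$ on the segment between $\hat\lambda_T^{\ast}$ and $\hat\lambda_T$. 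The proof then has three steps, followed by studentization.

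\textbf{Step 1: bootstrap consistency.} I would show $\hat{\lambda}_T^{\ast}-\hat{\lambda}_T\to_{p^{\ast}}0$ in probability. This needs a conditional uniform law of large numbers: $\sup_{\lambda}|\mathcal{L}_T^{\ast}(\lambda)-\mathcal{L}_T(\lambda)|\to_{p^\ast}0$ in probability, on the compact parameter space of the regularity conditions. It follows from the dominance and integrability conditions of Appendix~\ref{subsec:regcond}, applied to block-resampled residuals. The MBB block means are asymptotically unbiased for sample means when the block length $\ell\to\infty$ and $\ell/T\to0$. The reduced-form step $\hat{\Pi}^{\ast}_T-\hat\Pi_T=O_{p^\ast}(T^{-1/2})$ is taken from \citet{bruggemann_inference_2016}. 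Together with the identification from Assumption~\ref{as:gau}, which makes $\lambda_0$ a unique maximizer up to the normalization, the argmax theorem then gives consistency.

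\textbf{Step 2: bootstrap score CLT.} The main obstacle is the bootstrap score. Write
\begin{equation*}
T^{1/2}\nabla_\lambda\mathcal{L}_T^{\ast}(\hat\lambda_T) = T^{-1/2}\sum_t\nabla_\lambda\ell_t^{\ast}(\hat\lambda_T).
\end{equation*}
It depends on the bootstrap residuals $\hat u_t^{\ast}$, which in turn depend on $\hat\Pi_T^{\ast}$. I would split it into two parts:
\begin{enumerate}[(a)]
\item the score at the bootstrap innovations with $\Pi$ fixed at $\hat\Pi_T$;
\item a term linear in $T^{1/2}(\hat\Pi_T^{\ast}-\hat\Pi_T)$, which is handled by a joint expansion.
\end{enumerate}
For (a), the innovations are the MBB-resampled residuals recentred blockwise. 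A conditional Lyapunov/Lindeberg CLT for block sums then gives $\mathcal{N}(0,\mathcal{J}(\lambda_0))$ in probability. This uses finite $2+\delta$ moments of the score from the regularity conditions and consistency of the MBB long-run variance, which collapses to $\mathcal{J}$ under the i.i.d.\ design. The subtle point is that $\hat\lambda_T$ maximizes the sample likelihood, so the bootstrap score must be centred at its conditional mean. Blockwise recentring makes this mean $o_p(T^{-1/2})$.

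\textbf{Step 3: bootstrap Hessian.} Using Step 1, the ULLN for second derivatives and continuity of the Hessian, I would show $\mathcal{L}^{\ast}_{\lambda\lambda,T}(\bar\lambda_T^\ast)\to_{p^\ast}-\mathcal{I}(\lambda_0)$. The limit is non-singular by regularity condition (7) and Remark~\ref{rem:identification_failure}, since under Assumptions \ref{as:indep}--\ref{as:gau} with all shocks non-Gaussian the information is positive definite. Combining Steps 2 and 3 gives the conditional limit $\mathcal{N}(0,\mathcal{I}^{-1}\mathcal{J}\mathcal{I}^{-1})$, which equals $\mathcal{I}^{-1}$ by the information identity.

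\textbf{Studentization and the one-Gaussian case.} Since $\hat\Sigma_{\lambda_T}\to_p\mathcal{I}(\lambda_0)^{-1}$ by \eqref{eq:fisher}, Slutsky's lemma in its conditional form yields the stated standard normal limit. At the single-Gaussian boundary the same argument should be run only for $\psi$, with $\theta$ treated as a nuisance parameter. There one uses the Schur complement and the annihilation property of Lemma~\ref{lem:profileinfo}(ii),(iii), so that the conditional limit of the $\psi$-block is $\mathcal{N}(0,\mathcal{I}_\psi^{-1}\mathcal{J}_\psi\mathcal{I}_\psi^{-1})$, matching Proposition~\ref{prop:onegaussian}. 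The result for $\hat B_T^{\ast}$ then follows by the delta method.
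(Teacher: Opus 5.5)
Your proposal follows the paper's argument essentially step for step: a mean-value expansion of the bootstrap first-order condition about $\hat\lambda_T$, a moving-block bootstrap CLT for the score whose long-run variance collapses to $\mathcal{J}(\lambda_0)$ because the score is a martingale difference, a bootstrap ULLN for the Hessian with non-singular limit $-\mathcal{I}(\lambda_0)$, the information identity, studentization, and the single-Gaussian case delegated to the $\psi$-block. The differences are minor. You derive bootstrap consistency from a conditional ULLN and the argmax theorem and make the first-step $\hat\Pi_T^{\ast}$ effect explicit, whereas the paper cites \citet{bruggemann_inference_2016} and absorbs the two-step composition into the maintained Condition (BV). At the boundary, Lemma~\ref{lem:profileinfo}(ii)--(iii) alone are not enough, because $\hat\rho_{k}$ converges only at rate $T^{-1/4}$; you also need the parity cancellation of Lemma~\ref{lem:parity} to kill the resulting $O_p(T^{-1/2})$ second-order term in the $\psi$-score, exactly as in the proof of Proposition~\ref{prop:onegaussian}.
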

\begin{proof}
  See Appendix \ref{subsec:proofBSvalidity}
\end{proof}
\begin{corollary}\label{cor:Bblock}
Under Assumptions \ref{as:stability}--\ref{as:gau} (in particular, at most one
Gaussian structural shock) and the regularity conditions of \ref{subsec:regcond}
holding for the structural sub-vector $\psi=(\operatorname{vec}(\Pi),\beta,\sigma)$ (Proposition \ref{prop:onegaussian}), the bootstrap impact-matrix estimator satisfies, as
$T\to\infty$,
\begin{equation*}
  \hat{\Sigma}^{-1/2}_{B_T}\,T^{1/2}\bigl(\hat{B}_T^{\ast}-\hat{B}_T\bigr)
  \;\xrightarrow{d^*}_p\;\mathcal{N}\bigl(\mathbf{0}_{n^2},\,I_{n^2}\bigr),
  \qquad\text{in probability},
\end{equation*}
where $\hat{\Sigma}_{B_T}$ is a consistent estimator of the robust covariance $\mathcal{V}_B$. 
\end{corollary}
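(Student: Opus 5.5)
The plan is to obtain the corollary from Proposition~\ref{prop:BSvalidity} and Proposition~\ref{prop:onegaussian} by working on the structural sub-vector $\psi=(\operatorname{vec}(\Pi)',\beta',\sigma')'$ instead of the full $\lambda$. Using $\psi$ avoids the singular shape block whenever one shock is Gaussian. Two cases must be covered: all shocks non-Gaussian, where the full Proposition~\ref{prop:BSvalidity} applies directly, and exactly one Gaussian shock.

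\emph{Step 1 (bootstrap normality of $\hat\psi_T^\ast$).} In the first case, $\psi$ is a sub-vector of $\lambda$. The result follows by selecting the relevant coordinates in Proposition~\ref{prop:BSvalidity} and using consistency of the corresponding sub-block of $\hat\Sigma_{\lambda_T}$.

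In the single-Gaussian case I would repeat the argument of Proposition~\ref{prop:BSvalidity} on the concentrated score for $\psi$. Write the bootstrap first-order condition in the form
\[
T^{1/2}(\hat\psi_T^\ast-\hat\psi_T)=\mathcal{I}_\psi^{-1}\,T^{-1/2}\sum_{t=1}^T\bigl(s_{\psi,t}^\ast-\mathcal{I}_{\psi\theta}\mathcal{I}_{\theta\theta}^{+}s_{\theta,t}^\ast\bigr)+o_{p^\ast}(1),
\]
where $s^\ast_t$ is the bootstrap score evaluated at $\hat\lambda_T$. This expansion rests on Lemma~\ref{lem:profileinfo}(ii)--(iii). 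Part (ii) says that shape perturbations in $\mathcal N$ do not enter the $\psi$ score to first order, so the expansion does not depend on where $\hat\theta^\ast$ falls inside the flat directions. Part (iii) says the choice of generalized inverse is immaterial and $\mathcal I_\psi$ is invertible.

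A bootstrap CLT for the residual-based MBB score sum then gives conditional convergence to $\mathcal N(0,\mathcal I_\psi^{-1}\mathcal J_\psi\mathcal I_\psi^{-1})$ in probability. The relevant CLT is the one already used for Proposition~\ref{prop:BSvalidity}: block-bootstrap CLT for mixing/i.i.d. score increments with a bootstrap long-run variance consistent for $\mathcal J_\psi$. The first-step LS estimation of $\Pi$ enters through $\hat u_t^\ast$ and is handled as in \citet{bruggemann_inference_2016}.

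\emph{Step 2 (delta method).} The map $g(\beta,\sigma)=B(\beta)\operatorname{diag}(\sigma)$ is smooth with Jacobian $G$. A conditional delta method therefore gives
\[
T^{1/2}\bigl(\hat B_T^\ast-\hat B_T\bigr)\xrightarrow{d^\ast}_p\mathcal N(0,\mathcal V_B), \qquad \mathcal V_B=G\,\mathcal I_\psi^{-1}\mathcal J_\psi\mathcal I_\psi^{-1}G'.
\]
This is the same $\mathcal V_B$ as in Proposition~\ref{prop:onegaussian}. The bootstrap delta method needs $\hat\beta_T\to_p\beta_0$, $\hat\sigma_T\to_p\sigma_0$, supplied by that proposition, together with continuity of $G$.

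\emph{Step 3 (studentization).} Because $\hat\Sigma_{B_T}\to_p\mathcal V_B$ and $\mathcal V_B$ is positive definite, the continuous mapping theorem gives $\hat\Sigma_{B_T}^{-1/2}\to_p\mathcal V_B^{-1/2}$. Slutsky's lemma in its conditional form then yields the standard normal limit.

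One detail needs care. The unit-diagonal normalization makes $\operatorname{vec}(\hat B_T)$ an $n^2$-vector driven by $n(n-1)+n=n^2$ free parameters. So $G$ is square, and I would check it is non-singular at $(\beta_0,\sigma_0)$; this follows from $\sigma_{0,j}>0$ and the diagonal/off-diagonal structure of $g$. That non-singularity is what makes $\mathcal V_B$ nonsingular and the limit $\mathcal N(0,I_{n^2})$ rather than degenerate.

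\textbf{Main obstacle.} The hard part is Step 1 at the boundary. Since $\alpha_k=\infty$ ($\rho_k=0$) lies on the edge of the parameter space, $\hat\theta_k$ and $\hat\theta_k^\ast$ may sit on the boundary or wander along $\mathcal N$ without converging at rate $T^{1/2}$. The standard joint Taylor expansion is therefore unavailable.

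My first attempt would be to show a uniform stochastic equicontinuity of the $\psi$-score over a neighbourhood $\{\theta:\theta_k$ arbitrary in the Gaussian-limit fibre$\}$. Lemma~\ref{lem:profileinfo}(ii), extended from the point $\lambda_0$ to a neighbourhood, would give this, and it makes the $\psi$-expansion insensitive to $\hat\theta_k^\ast$. I would then establish it conditionally on the data, using bootstrap moment bounds on $\sup_\theta$ of score derivatives.

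If that uniformity fails, the fallback is to use the reparametrization $\rho_k=1/\alpha_k\ge0$. One then argues, in the style of Andrews, that at a one-sided boundary the $\psi$-component of the limit is still Gaussian because the $\psi$–$\rho_k$ cross information vanishes, so the half-space projection acts only on the $\theta$ coordinates.
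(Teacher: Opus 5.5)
Your overall route is the paper's. The paper gives no separate argument for the corollary: it reruns the Taylor expansion behind Proposition~\ref{prop:BSvalidity} on $\psi$ with the sandwich covariance of Proposition~\ref{prop:onegaussian}. Your all-non-Gaussian case, the delta method (including the square, non-singular Jacobian of $(\beta,\sigma)\mapsto B(\beta)\operatorname{diag}(\sigma)$) and the studentization are fine.

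The gap is at the single-Gaussian boundary. You rightly flag it as the main obstacle, but both of your devices target the wrong object.
\begin{itemize}
\item The first-order annihilation in Lemma~\ref{lem:profileinfo}(ii) is automatic there, because the $\rho_k$-score vanishes identically in $\psi$ at $\rho_k=0$.
\item $\hat\rho_{k,T}$ and $\hat\rho^{\ast}_{k,T}$ do not wander along a flat fibre, as the rotation does under $\mathcal H_1$. The likelihood is curved in $\rho_k$ at second order, so they sit at distance $O_p(T^{-1/4})$ from the fibre.
\item Replacing the Gaussian-limit shape by the fitted one therefore shifts the $\psi$-score by $\tfrac12\hat\rho_k^{2}\,T^{-1}\sum_t\partial^3\ell_t/\partial\psi\,\partial\rho_k^2$. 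Since $T^{1/2}\hat\rho_k^{2}=O_p(1)$, this drift is of the same order as the leading term.
\end{itemize}
Your proposed tools do not remove this drift. Stochastic equicontinuity controls only the centred fluctuation of the score, not the drift. Lemma~\ref{lem:profileinfo}(ii) does not extend to a neighbourhood, since $\mathbb E[\partial^2\ell_t/\partial\psi\,\partial\rho_k]$ at $(\psi_0,\rho_k)$ equals $\rho_k\bar c+O(\rho_k^2)$, not zero. An Andrews-type boundary argument in the $\rho_k$ coordinate is unavailable because the $\rho_k$-information is zero, as the proof of Proposition~\ref{prop:onegaussian} itself notes.

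The missing idea is the second-order parity cancellation of Lemma~\ref{lem:parity}: $\bar c=\mathbb E[\partial^3\ell_t/\partial\psi\,\partial\rho_k^2|_{\rho_k=0}]=0$. The reason is that the $\rho_k^2$-term of $\log f_k$ is proportional to $He_4(\varepsilon_k)$. The third derivative is therefore proportional to $He_3(\varepsilon_k)\,\partial\varepsilon_k/\partial\psi$. This is orthogonal to the sensitivities $\partial\varepsilon_k/\partial\psi$, which are linear in the current shocks, or independent of $\varepsilon_{k,t}$ for the autoregressive block.

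Equivalently, use the parametrization $\tau_k=\rho_k^2\ge 0$, whose score $\tfrac12\partial^2\ell_t/\partial\rho_k^2$ has positive variance. Differentiating $\mathbb E[\partial^2\ell_t/\partial\rho_k^2]=0$ in $\psi$ gives $\bar c=-2\,\mathbb E[s_{\psi,t}s_{\tau_k,t}]$. So $\bar c=0$ says exactly that $\psi$ and $\tau_k$ are information-orthogonal.

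With this in hand the drift is $T^{1/2}\hat\rho_k^{2}\,O_p(T^{-1/2})+O_p(T^{1/2}\hat\rho_k^{3})=o_p(1)$. In the bootstrap world, the sample analogue of $\bar c$ built from $\hat\varepsilon_{k,t}$ is $O_p(T^{-1/2})$ precisely because $\bar c=0$. Your fallback then becomes correct once run in $\tau_k$ rather than $\rho_k$. The boundary cone projection acts only on $\tau_k$, so the $\psi$-block of the limit and of its bootstrap analogue stays Gaussian, even though the bootstrap for $\tau_k$ itself fails at the boundary.

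Without $\bar c=0$, $T^{1/2}(\hat\psi^{\ast}_T-\hat\psi_T)$ would carry a non-Gaussian boundary component and the corollary would fail. This step therefore has to be imported, not bypassed.
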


 Proposition \ref{prop:BSvalidity} is stated for the full vector $\lambda$
under the regularity conditions; when one structural shock is Gaussian these hold
for the structural sub-vector $\psi$ rather than for $\lambda$
(Proposition \ref{prop:onegaussian}), and Corollary \ref{cor:Bblock} is the
specialization relevant to the diagnostic. Here, $\xrightarrow{p^*}_p$ and $\xrightarrow{d^*}_p$ denote convergence in probability and in distribution, respectively, conditional on the observed data. For a more detailed discussion on the notation, we refer the reader to Appendix \ref{appendix:a}.

%% file: 04_BSdiag.tex
This section develops the bootstrap diagnostic for the identification conditions of a
non-Gaussian SVAR. Proposition \ref{prop:BSvalidity} and Corollary~\ref{cor:Bblock} established that, under valid
identification, the studentized statistic $Q_T^{\ast}$ is asymptotically standard
normal; equivalently, its bootstrap distribution satisfies
\begin{equation}\label{eq:BSconsist}
  \sup_{x\in\mathbb{R}^{n^2}}\bigl|\hat{G}_T^{\ast}(x)-\Phi(x)\bigr|
  \xrightarrow{p^{\ast}}_{p}0,\qquad T\to\infty,
\end{equation}
where $\Phi$ is the standard normal distribution function. The
diagnostic exploits the converse: when identification fails, $\hat{G}_T^{\ast}$
departs from normality, so a test of the normality of the bootstrap replications
detects the failure.

The mechanism underlying the diagnostic is as follows. Under valid identification and
the maintained regularity conditions, the log-likelihood admits a unique maximum and the
NGML estimator concentrates at rate $T^{1/2}$; the standardized bootstrap replications
are then asymptotically standard normal. When two or more shocks are Gaussian, the
log-likelihood is flat along the rotations of the Gaussian subspace, the affected columns
of $\hat{B}_T$ are not identified, and the estimator fails to concentrate
(Remark~\ref{rem:identification_failure}). The replications then follow the non-degenerate
distribution induced by the unidentified rotation, so they do not converge to normality
(Lemma~\ref{lemma:altlaw}). Conditional on the maintained assumptions of stable VAR and shocks' independence and regularity conditions, a rejection is therefore attributed to identification failure,
that is, the presence of two or more Gaussian shocks.

To measure the deviation of $\hat{G}^{\ast}_T$ from $\Phi$, we use the studentized
statistic
\begin{equation}\label{eq:teststat}
  d^{*}_{T,M}(x)\coloneqq M^{1/2}\,\hat{\Omega}_T^{-1/2}(x)
  \bigl(\hat{G}^{*}_{T,M}(x)-\Phi(x)\bigr),
\end{equation}
\begin{align}
  d^{*}_{T,M}(x)
  &= M^{1/2}\,\hat{\Omega}^{-1/2}_{T}(x)
     \Bigl(\hat{G}^{*}_{T,M}(x) - \hat{G}^{*}_{T}(x)\Bigr)
  \notag \\
  &+ M^{1/2}\,\hat{\Omega}^{-1/2}_{T}(x)
     \Bigl(\hat{G}^{*}_{T}(x) - \Phi(x)\Bigr).
  \label{eq:decomp2}
\end{align}
where $\hat{\Omega}_T(x)$ is a consistent estimator of $\Omega_T(x) \coloneqq\hat{G}_T^{\ast}(x)\bigl(1-\hat{G}_T^{\ast}(x)\bigr)$. 

At finite $T$, the bootstrap distribution $\hat{G}_T^{\ast}$ is only approximately
standard normal: under the null it differs from $\Phi$ by an Edgeworth
term of order $T^{-\rho}$ (Condition~\ref{cond:BErate}), and this discrepancy is a
function of the observed data. The statistic decomposes into a \emph{resampling}
term, $M^{1/2}\hat{\Omega}_T^{-1/2}(\hat{G}_{T,M}^{\ast}-\hat{G}_T^{\ast})$, and a
\emph{centering} term, $M^{1/2}\hat{\Omega}_T^{-1/2}(\hat{G}_T^{\ast}-\Phi)$ in \eqref{eq:decomp2}, which makes the problem
transparent. If $M\to\infty$ for fixed $T$, the empirical distribution resolves
$\hat{G}_T^{\ast}$ exactly, so the test effectively checks whether
$\hat{G}_T^{\ast}=\Phi$ \emph{exactly}, which may hold only for a finite but very large $T$. The
centering term is then of order $M^{1/2}T^{-\rho}$ and diverges, so the statistic
reflects the $O(T^{-\rho})$ approximation error rather than an identification failure,
and the test is too conservative under the null. Moreover, since $\hat{G}_T^{\ast}$ is a
function of the data, conditioning subsequent inference on such a test induces
pre-testing bias \citep{roth_pretest_2022}, which can be especially difficult to control
because the impulse responses of interest are non-linear functions of the
estimator.

Following \citet{cavaliere_bootstrap_2025}, we therefore let $M$ and $T$ diverge
\emph{jointly}, with $MT^{-2\rho}=o_p(1)$ for some $\rho>0$.\footnote{Under the
sequential regime in which $M\to\infty$ after $T\to\infty$ the pre-testing bias
likewise vanishes asymptotically; we adopt the joint regime because the practitioner
controls $M$ more readily than the sample size.} This requirement is exactly
$M^{1/2}T^{-\rho}\to0$, which forces the centering term to vanish. The limiting
randomness of $d^{\ast}_{T,M}$ then derives solely from the resampling term, which is
free of the data, so the statistic is pivotal and the test attains correct asymptotic
size (Remark~\ref{rem:pivotal}).

\begin{proposition}\label{prop:bootstraptest}
Let Assumptions \ref{as:stability}--\ref{as:gau} and the regularity conditions of
\ref{subsec:regcond} hold, and let $d^{*}_{T,M}(x)$ be given by \eqref{eq:teststat}.
As $T,M\to\infty$ jointly with $MT^{-2\rho}=o_p(1)$ for some $\rho>0$, for each
$x\in\mathbb{R}^{n^2}$:
\begin{enumerate}
\item[\textbf{(i)}] Under the null $\mathcal{H}_0$ of valid identification i.e., at most one structural
shock is Gaussian,
\begin{equation}\label{eq:testdist}
  d^{*}_{T,M}(x)\xrightarrow{d^{*}}_{p}\mathcal{N}(0,1);
\end{equation}
\item[\textbf{(ii)}] Under $\mathcal{H}_1$ where there are two or more
Gaussian shocks,
\begin{equation}
  \label{eq:altdist}
  \bigl|d^{*}_{T,M}(x)\bigr| \;\xrightarrow{p}\; \infty,
\end{equation}
\end{enumerate}
\end{proposition}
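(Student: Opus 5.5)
The plan is to work directly with the decomposition \eqref{eq:decomp2} and treat the resampling term and the centering term separately. Write $A_{T,M}(x)\coloneqq M^{1/2}\hat{\Omega}_T^{-1/2}(x)\bigl(\hat{G}^{*}_{T,M}(x)-\hat{G}^{*}_T(x)\bigr)$ and $C_{T,M}(x)\coloneqq M^{1/2}\hat{\Omega}_T^{-1/2}(x)\bigl(\hat{G}^{*}_T(x)-\Phi(x)\bigr)$. Both parts of the proposition follow once we show two things. First, $A_{T,M}(x)$ is asymptotically $\mathcal{N}(0,1)$ conditionally on the data under both hypotheses, or at least $O_{p}(1)$ under $\mathcal{H}_1$. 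Second, $C_{T,M}(x)$ is $o_p(1)$ under $\mathcal{H}_0$ and diverges under $\mathcal{H}_1$.

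For the resampling term, I condition on the data. The replications $Q^{*}_{T,b}$, $b=1,\dots,M$, are then i.i.d. draws from $\hat{G}^{*}_T$, so the indicators $\mathbf{1}(\operatorname{vec}(Q^{*}_{T,b})\le x)$ are i.i.d. Bernoulli with success probability $p_T(x)\coloneqq\hat{G}^{*}_T(x)$ and variance $\Omega_T(x)=p_T(x)(1-p_T(x))$. A Lindeberg (or Berry--Esseen) central limit theorem for triangular arrays of Bernoulli variables gives $M^{1/2}\Omega_T^{-1/2}(x)\bigl(\hat{G}^{*}_{T,M}(x)-p_T(x)\bigr)\xrightarrow{d^*}_p\mathcal{N}(0,1)$, provided $M\Omega_T(x)\to\infty$. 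Under $\mathcal{H}_0$ this holds because $p_T(x)\to\Phi(x)\in(0,1)$ in probability by \eqref{eq:BSconsist}, which follows from Corollary~\ref{cor:Bblock}; Proposition~\ref{prop:onegaussian} guarantees this also at the single-Gaussian boundary. Consistency of $\hat{\Omega}_T(x)$ together with Slutsky's lemma then lets me replace $\Omega_T$ by $\hat{\Omega}_T$. Since the limit is free of the data, this also yields the pivotality claimed in Remark~\ref{rem:pivotal}.

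For the centering term under $\mathcal{H}_0$, I invoke Condition~\ref{cond:BErate}, the Edgeworth/Berry--Esseen rate $\sup_x|\hat{G}^{*}_T(x)-\Phi(x)|=O_p(T^{-\rho})$. This gives $|C_{T,M}(x)|=O_p(M^{1/2}T^{-\rho})\,\hat{\Omega}_T^{-1/2}(x)=o_p(1)$ under $MT^{-2\rho}=o_p(1)$, because $\hat{\Omega}_T(x)\to\Phi(x)(1-\Phi(x))>0$. Part (i) then follows by combining the two terms.

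For part (ii), Lemma~\ref{lemma:altlaw} says that under two or more Gaussian shocks the bootstrap law $\hat{G}^{*}_T$ converges to, or stays near, a non-degenerate non-Gaussian law $G^{\dagger}$ induced by the unidentified rotation. I fix $x$ with $G^{\dagger}(x)\neq\Phi(x)$; this is where the statement implicitly requires $x$ to be a discrepancy point, and I will state this restriction explicitly. Then $|\hat{G}^{*}_T(x)-\Phi(x)|\ge c>0$ with probability tending to one. Since $\hat{\Omega}_T(x)\le 1/4$, we get $|C_{T,M}(x)|\ge 2cM^{1/2}\to\infty$. Meanwhile $A_{T,M}(x)=O_p(1)$ by Chebyshev's inequality, because its conditional variance is $\Omega_T/\hat{\Omega}_T=O_p(1)$ (I need $\hat{\Omega}_T$ to be consistent for $\Omega_T$, not for a normal-based quantity). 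Therefore $|d^{*}_{T,M}(x)|\to\infty$ in probability.

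The main obstacle is part (ii). Specifically, I must justify via Lemma~\ref{lemma:altlaw} that $\hat{G}^{*}_T$ stays bounded away from $\Phi$ at the chosen $x$, uniformly in $T$. This is delicate because the studentization by $\hat{\Sigma}_{B_T}$ is itself inconsistent under non-identification. I would first try to show that the limiting conditional law of the studentized rotation component has a non-Gaussian marginal, for example a bounded or mixture form coming from the rotation angle on a compact set, so that some coordinate of $x$ witnesses the discrepancy.
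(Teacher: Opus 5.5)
Your part (i) is the paper's argument. You use the same split \eqref{eq:decomp2}, a CLT for the conditionally i.i.d.\ indicators, Condition~\ref{cond:BErate} to make the centering term $O_p(M^{1/2}T^{-\rho})=o_p(1)$, and then Slutsky. Your triangular-array (Berry--Esseen) CLT, with the requirement $M\Omega_T(x)\to\infty$, is the better tool. The paper's ``fixed $T$, $M\to\infty$'' Lindeberg--L\'evy step is a sequential limit and does not by itself cover the joint regime. Your check that $\hat{\Omega}_T(x)$ stays bounded away from zero is also something the paper uses only implicitly.

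Part (ii) takes a genuinely different route. The paper never analyses $d^{*}_{T,M}(x)$ under $\mathcal{H}_1$. It invokes Lemma~\ref{lemma:altlaw} to say that the raw replications converge to a non-normal law induced by the unidentified rotation, and concludes that Jarque--Bera and Doornik--Hansen diverge. Because those statistics are affine-invariant, two things become irrelevant on that route: the inconsistency of $\hat{\Sigma}_{B_T}$ under $\mathcal{H}_1$, and the centering at $\hat{B}_T$. That is precisely the obstacle you identify for your own argument.

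The price is that the paper establishes consistency of the implemented test (Corollary~\ref{cor:sizetest}), not the displayed claim \eqref{eq:altdist}. Your route, bounding $|C_{T,M}(x)|\ge 2cM^{1/2}$ against $A_{T,M}(x)=O_p(1)$, proves \eqref{eq:altdist} itself. You are also right that it can only hold at points where the (possibly data-dependent) limit of $\hat{G}^{*}_T$ differs from $\Phi$. At crossing points $d^{*}_{T,M}(x)$ can remain $O_p(1)$, so ``for each $x$'' in the statement is too strong.

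If Lemma~\ref{lemma:altlaw}(ii) is granted for the studentized law, such a point exists automatically, since distinct laws on $\mathbb{R}^{n^2}$ have CDFs differing at some continuity point. You should pick it with limit value in $(0,1)$, so that $\Omega_T(x)$ does not degenerate and your Chebyshev bound on $A_{T,M}(x)$ goes through. The remaining step is transferring Lemma~\ref{lemma:altlaw} from raw to studentized replications. Neither your proposal nor the paper proves it: the paper only asserts the lemma, and explicitly maintains the multivariate limit. So your flagged obstacle is not a gap relative to the paper's proof, but it is left open by both.
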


\begin{proof}
    See Appendix \ref{subsec:proofbootstraptest}.
\end{proof}

\begin{corollary}[Valid size-$\alpha$ diagnostic]\label{cor:sizetest}
Let $\mathcal{T}_{T,M}$ be any consistent test statistic for the normality of the
bootstrap replications, computed on the raw $\{\hat{B}^{\ast}_{T,b}\}_{b=1}^{M}$ or
equivalently the standardized $\{Q^{\ast}_{T,b}\}_{b=1}^{M}$  and let $c_{1-\alpha}$
be the $(1-\alpha)$-quantile of its standard ($\chi^{2}$) null distribution. Under
the conditions of Proposition~\ref{prop:bootstraptest}(i), the rule
$\phi_{T,M}\coloneqq\mathbf{1}(\mathcal{T}_{T,M}>c_{1-\alpha})$ has asymptotic size
$\alpha$ and, by Proposition~\ref{prop:bootstraptest}(ii), power tending to one
against all fixed alternatives in $\mathcal{H}_1$.
\end{corollary}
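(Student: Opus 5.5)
The corollary follows directly from Proposition~\ref{prop:bootstraptest}, transferred from the pointwise statistic $d^{*}_{T,M}(x)$ to a generic consistent normality statistic $\mathcal{T}_{T,M}$ computed on the $M$ replications. We establish size and power separately.

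\textbf{Size.} First we note that $\mathcal{T}_{T,M}$ is affine invariant, as standard multivariate normality tests (Jarque--Bera, Mardia, Doornik--Hansen) are. Then computing it on the raw $\{\hat B^{*}_{T,b}\}$ or on the standardized $\{Q^{*}_{T,b}\}$ gives the same value, since $Q^{*}_{T,b}$ is an affine map of $\hat B^{*}_{T,b}$ with data-dependent but fixed coefficients (conditionally on the data). We therefore work with $\{Q^{*}_{T,b}\}_{b=1}^{M}$.

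Conditionally on the data these are i.i.d.\ draws from $\hat G_T^{*}$. We couple them with an i.i.d.\ sample $\{Z_b\}_{b=1}^{M}$ from $\Phi$. Write
\[
\mathcal{T}_{T,M}=\mathcal{T}(Z_1,\dots,Z_M)+\bigl[\mathcal{T}(Q^{*}_{T,1},\dots,Q^{*}_{T,M})-\mathcal{T}(Z_1,\dots,Z_M)\bigr].
\]
This mirrors the resampling/centering split in \eqref{eq:decomp2}. The first term converges to its standard $\chi^2$ law as $M\to\infty$ and does not depend on the data, which is the pivotal part.

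For the second term, we use that normality statistics are smooth functions of $M^{1/2}$-scaled sample moments of order at most four. Two inputs control it:
\begin{enumerate}
\item[(a)] the moment discrepancy between $\hat G_T^{*}$ and $\Phi$ is $O_p(T^{-\rho})$, by the Edgeworth rate of Condition~\ref{cond:BErate} combined with Corollary~\ref{cor:Bblock};
\item[(b)] the conditional moments up to order $4+\delta$ are uniformly bounded.
\end{enumerate}
Together these make the second term $O_p(M^{1/2}T^{-\rho})=o_p(1)$ under $MT^{-2\rho}=o_p(1)$.

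The main obstacle is (b). Convergence in distribution alone does not control fourth moments, and it is the fourth moments that drive the test. I would first try to obtain uniform integrability of $\|Q^{*}_T\|^{4+\delta}$ from the Edgeworth expansion, since it controls moments as well as the CDF. Failing that, I would restrict to CDF-based statistics, for which part (i) of Proposition~\ref{prop:bootstraptest} applies directly. In that case I would use a finite grid of points $x$ and the joint multinomial CLT of the resampling term, yielding a quadratic form with a $\chi^2$ limit.

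Given the decomposition, Slutsky's lemma gives $\mathcal{T}_{T,M}\xrightarrow{d^*}_p\chi^2$. Hence $P^{*}(\mathcal{T}_{T,M}>c_{1-\alpha})\to\alpha$ in probability. Dominated convergence then turns this into an unconditional size of $\alpha$.

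\textbf{Power.} Under $\mathcal{H}_1$, Lemma~\ref{lemma:altlaw} shows that $\hat G_T^{*}$ converges to a non-Gaussian limit law. Choose a moment, or a point $x$, at which this law differs from $\Phi$. By Proposition~\ref{prop:bootstraptest}(ii) the centering term then diverges at rate $M^{1/2}$. Consistency of $\mathcal{T}_{T,M}$ means it dominates, or is driven by, that discrepancy, so $\mathcal{T}_{T,M}\to\infty$ in probability and $P(\phi_{T,M}=1)\to1$.
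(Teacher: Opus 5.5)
Your architecture matches the paper's. The statistic splits into a data-free $\chi^2$ part plus a drift equal to $M^{1/2}$ times the gap between the low-order cumulants of $\hat G^{*}_T$ and those of $\Phi$. The condition $MT^{-2\rho}=o_p(1)$ kills this drift under $\mathcal{H}_0$, and it diverges under $\mathcal{H}_1$. The paper does not use a coupling. It writes Jarque--Bera as $M(g_1^2/6+g_2^2/24)$ and applies the CLT for sample moments, giving a non-central $\chi^2_2$ with non-centrality $M(\gamma_{1,T}^2/6+\gamma_{2,T}^2/24)=MT^{-2\rho}O_p(1)$. It gets power from Lemma~\ref{lemma:altlaw}(ii), which keeps the excess kurtosis of an affected element away from zero. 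Your dominated-convergence step, from conditional to unconditional size, is one the paper leaves implicit.

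The gap is that your inputs (a) and (b), as stated, do not deliver the key step.

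\emph{Input (a).} It cannot come from Condition~\ref{cond:BErate} and Corollary~\ref{cor:Bblock}, because both concern distribution functions. A sup-norm CDF rate $T^{-\rho}$ plus bounded $(4+\delta)$-th moments gives, after truncating at $R\asymp T^{\rho/(4+\delta)}$, a fourth-moment discrepancy of order only $T^{-\rho\delta/(4+\delta)}$, and this is sharp in general. The drift is then $M^{1/2}T^{-\rho\delta/(4+\delta)}$, which $MT^{-2\rho}=o_p(1)$ does not control. Uniform integrability would give convergence of moments but no rate. You need a cumulant-level rate as a separate input. This is what the paper postulates in the remark after Condition~\ref{cond:BErate}: $\gamma_{1,T}=O_p(T^{-1/2})$ and $\gamma_{2,T}=O_p(T^{-1})$.

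\emph{Input (b).} Moments of order $4+\delta$ are too few. The asymptotic variances $6$ and $24$ of $M^{1/2}g_1$ and $M^{1/2}g_2$ involve the sixth and eighth moments. Likewise, the $M^{-1/2}$-sums of your coupled differences of fourth-degree monomials are $o_p(1)$ only if the conditional eighth moments of $Q^{*}_T$ are uniformly integrable and converge to their Gaussian values. The paper relies on this silently as well.

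\emph{The fallback.} The CDF-grid route would prove the corollary for Pearson-type cell statistics. It does not cover the Jarque--Bera and Doornik--Hansen tests the corollary is actually used for, so it does not close the gap.

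\emph{Power.} Cite Lemma~\ref{lemma:altlaw}(ii), not Proposition~\ref{prop:bootstraptest}(ii). For a moment-based statistic a CDF discrepancy is not enough, since non-Gaussian laws with zero skewness and zero excess kurtosis exist.

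\emph{Affine invariance.} Doornik--Hansen is invariant only to location, componentwise scaling and reordering, not to general affine maps. So your reduction to $\{Q^{*}_{T,b}\}$ is not an identity for it (Remark~\ref{rem:DH} makes the same slip). You avoid this by running your argument on the raw replications against a $\mathcal{N}(0,\hat\Sigma_{B_T})$ sample, since the statistic's $\chi^2$ limit on a Gaussian sample does not depend on the covariance.
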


\begin{proof}
    See Appendix \ref{subsec:proofcorroboottest}.
\end{proof}

\begin{remark}[Asymptotic pivotality and ancillarity]\label{rem:pivotal}
The limit $\mathcal{N}(0,1)$ obtained above is free of the data-generating
parameters, so $d^{\ast}_{T,M}$ is {asymptotically pivotal}, and the test has
asymptotically correct size. The surviving bootstrap-resampling term,
$M^{1/2}\hat{\Omega}_T^{-1/2}(\hat{G}^{\ast}_{T,M}-\hat{G}^{\ast}_T)$, is,
conditional on the data, independent of the original-sample estimator, so the
diagnostic is asymptotically ancillary for the inference target. This implies that, unlike a
conventional residual-based normality pre-test, this diagnostic does not distort subsequent inference \citep[Theorem~3.1]{cavaliere_bootstrap_2025}. Section~\ref{sec:WKId}
confirms by simulation evidence that conditioning on the diagnostic leaves the probability coverage of
the estimates undistorted.
\end{remark}

\begin{remark}[The rate $\rho$ and the choice of $M$]\label{rem:Mchoice}
The choice of $M$ relative to $T$ is governed by the rate $\rho$ in the
joint requirement $MT^{-2\rho}=o_p(1)$. When the bootstrap admits an Edgeworth
expansion, $\rho=\tfrac12$ (Condition~\ref{cond:BErate}) and the requirement reduces
to $M/T\to0$. If $M$ is too large relative to $T^{2\rho}$, the centering term is
not negligible anymore, and $d^{\ast}_{T,M}$ fails to converge to $\mathcal{N}(0,1)$ even
when the bootstrap is consistent, inflating false rejections
\citep{angelini_bootstrap_2022}. $M/T$ should therefore be kept small in finite
samples. In the Monte Carlo study, we choose $M=(\frac{1}{3}) T^{3/5}$ and
$M=(\frac{1}{2}) T^{3/5}$, which balance power against size control across sample sizes.
\end{remark}

\begin{remark}[Implementation: a normality test on the replications]\label{rem:DH}
Since, by Corollary~\ref{cor:Bblock} and Proposition~\ref{prop:bootstraptest}, the
standardized replications $\{Q^{\ast}_{T,b}\}_{b=1}^{M}$ are, conditional on the
data, asymptotically i.i.d.\ standard normal under $\mathcal{H}_0$, any consistent normality test, multivariate,
or univariate applied element-wise, inherits the conclusions of the proposition:
its standard ($\chi^{2}$) null distribution controls asymptotic size, while the
statistic diverges under $\mathcal{H}_1$. Because the Doornik--Hansen and
Jarque--Bera statistics are invariant to non-singular affine maps (respectively to
location and scale), they take the same value on the raw replications
$\{\hat{B}^{\ast}_{T,b}\}$ and on the standardized $\{Q^{\ast}_{T,b}\}$. We apply
them to the raw $\{\hat{B}^{\ast}_{T,b}\}$, the standardized version being equivalent
and the size guarantee of Corollary~\ref{cor:sizetest} transferring accordingly. We
use the multivariate omnibus\footnote{See
\citet{fresoli_bootstrap_2022,angelini_bootstrap_2022} for the Doornik--Hansen test
in this context; the univariate test is applied to each sequence
$\{\hat{B}^{\ast}_{i,j,T,b}\}_{b=1}^{M}$, $i,j=1,\dots,n$.} test of \citet{doornik_omnibus_2008} and the univariate
test of \citet{jarque_test_1987}, both are moment-based
normality tests with $\chi^{2}$ null distributions and small-sample corrections.
\end{remark}

%% file: 05_MC.tex
\subsection{SVAR Design:} In this section, we demonstrate the performance of the bootstrap diagnostic test in the detection of invalidity of identification in non-Gaussian SVARs due to the presence of two or more Gaussian shocks. Consider the SVAR model in \eqref{eq:companionVAR}, with $n=3$ and $p=1$ i.e., a trivariate VAR(1) model: 

\begin{equation}
Y_t = \Pi_1 Y_{t-1} + u_t, \quad {t = 1,2,\cdots,T}, \quad u_t = B \varepsilon_t,
\end{equation}
where $Y_t = (Y_{1,t}, Y_{2,t}, Y_{3,t})'$, $\Pi_1$ is a $3 \times 3$ matrix of autoregressive coefficients, $u_t = (u_{1,t}, u_{2,t}, u_{3,t})'$ is 
the vector of reduced-form innovations, $B$ is the $3 \times 3$ impact matrix, 
and $\varepsilon_t = (\varepsilon_{1,t}, \varepsilon_{2,t}, \varepsilon_{3,t})'$ is the vector of structural shocks. And,

\begin{equation}
    \Pi_1 = \begin{pmatrix}
    0.30 & 0.60 & -0.40 \\
    -0.50 & 0.20 & 0.10 \\
    0.70 & 0.10 & -0.30
    \end{pmatrix}, \quad
    B = \begin{pmatrix}
    1.00 & 0.20 & 0.00 \\
    0.63 & 1.00 & -0.60 \\
    -0.51 & 0.55 & 1.00
    \end{pmatrix}, \quad
    \label{eq:model_param}
\end{equation}

As noted in equation \eqref{eq:setup_Bmatrix}, the impact matrix $B$ is assumed to be full rank and column-wise normalized
to have a unitary diagonal, $\beta_0$ is a vector containing the off-diagonal
elements of $B$, of 
dimension $n(n-1) = 6$. Hence, $B = B(\beta_0)$ is parameterized as a function of $\beta_0$. We will focus on the estimates of $\hat{B}_T$: the instantaneous impact matrix of a
 one standard deviation shocks, given by:
\begin{equation}
    \hat{B}_T = B(\hat{\beta}_T) \operatorname{diag}(\hat{\sigma}_T),
\end{equation}
where $\hat{\sigma}_T$ is the estimate of the standard deviations of structural shocks.

\subsection{Distributional Specifications:}\label{sec:dist_spec}
 We consider three different specifications for
 the structural shocks' distributions. Specifically\footnote{We also consider a set of specifications where the non-Gaussian structural shocks are drawn from
 the Student-$t$ distribution, the results are qualitatively similar.}:

 \begin{itemize}
    \item \textbf{ICA 0:} $\varepsilon_{1,t} \sim NIG(0.3,0.5), \varepsilon_{2,t} \sim NIG(0.5,1)$, and 
     $\varepsilon_{3,t} \sim NIG(0.7,1.5)$.

    \item \textbf{ICA 1:} $\varepsilon_{1,t} \sim NIG(0.3,0.5), \varepsilon_{2,t} \sim NIG(0.5,1)$, and 
     $\varepsilon_{3,t} \sim \mathcal{N}(0,1)$.

    \item \textbf{No ICA:} $\varepsilon_{1,t} \sim NIG(0.3,0.5), \varepsilon_{2,t} \sim \mathcal{N}(0,1)$, and $\varepsilon_{3,t} \sim \mathcal{N}(0,1)$.
    \label{dis spec}
\end{itemize}

 Though the NIG distribution is defined by four parameters, we fix the location and asymmetry parameters respectively, $\mu = \gamma = 0$, to center the respective structural shocks around zero\footnote{The choice to restrict the asymmetry parameter $\gamma$ to zero is not consequential, but it allows direct comparison with results from Student-$t$ distributed shocks. The results are robust when $\gamma$ is allowed to be non-zero.}. Hence, the above-mentioned parameterization is of the form: $NIG(\alpha, \delta)$, see \eqref{eq:NIGpdf}. 
 In each specification, the structural shocks have different levels of standard deviations and kurtosis\footnote{The excess kurtosis of the shocks is: $\textbf{ICA 0} \sim (20, 6, 2.85)$, $\textbf{ICA 1} \sim (20, 6, 0)$ and $\textbf{No ICA} \sim (20, 0, 0)$.}. The NIG distribution has finite moments of all orders for $|\gamma_i|<\alpha_i$ \citep{barndorff-nielsen_normal_1997}, so the maintained assumption $\mathbb{E}\|\nabla_\lambda\ell_t\|^{2+\delta}<\infty$ holds. Since our object of interest is the impact matrix, computed from the estimates of $\hat{\beta}_T$ and $\hat{\sigma}_T$, the standard deviations $\sigma_i$ of 
 the structural shocks $\varepsilon_t$ are:
 \begin{itemize}
    \item \textbf{ICA 0:} $\sigma_1 = 1.29, \sigma_2 = 1.41, \sigma_3 = 1.46$.
    \item \textbf{ICA 1:} $\sigma_1 = 1.29, \sigma_2 = 1.41, \sigma_3 = 1.00$.
    \item \textbf{No ICA:} $\sigma_1 = 1.29, \sigma_2 = 1.00, \sigma_3 = 1.00$.
\end{itemize}
 From the Assumptions \ref{as:indep} and \ref{as:gau}, we can infer that the first two 
 specifications are valid. However, in the third specification, two of the independent structural shocks are Gaussian, thereby violating the identifying conditions.

\subsection{Insufficient verification with reduced-form innovations:} \label{subsec:insuff_test} Given the relation $u_t = B \varepsilon_t$, it can be tempting to verify the assumptions
of non-Gaussianity of structural shocks by testing the estimated\footnote{Unlike the moment-based specification tests of
\citet{amengual_moment_2022,amengual_specification_2024}, which test the fitted
parametric shock distribution and can induce pre-testing bias, the proposed diagnostic
tests the bootstrap distribution of the impact matrix estimator, and is
asymptotically ancillary (Remark~\ref{rem:pivotal}), so conditioning on it does not
distort subsequent inference.} reduced-form innovations $\hat{u}_t$ for non-Gaussianity. 
Indeed, many studies, including \citet{lanne_identification_2017}, \citet{andrade_higher-order_2025}, \citet{jarocinski_deconstructing_2020}, among others,
have verified the non-Gaussianity of reduced-form innovations as a pre-test for valid identification. Under the null of non-Gaussian structural
shocks, their linear combination would naturally allow the reduced-form innovations to inherit the non-Gaussianity. However as noted earlier,
given that valid identification allows \textit{at most} one Gaussian shock, the presence of non-Gaussianity of reduced-form innovations
does not preclude the presence of \textit{more than} one Gaussian shocks. The following table illustrates this point.
\input{Tables/unireducedformtest.tex}
\input{Tables/reducedformtest.tex}

Tables \ref{tab:unireducedformtest} and \ref{tab:reducedformtest} report the empirical rejection frequencies of the multivariate and univariate \citet{doornik_omnibus_2008} and \citet{jarque_test_1987} tests for normality 
of $\hat{u}_t$.
Table \ref{tab:unireducedformtest} shows the insufficiency of verifying the assumption of at most one Gaussian shock through testing the reduced-form innovations. Both specifications, \textbf{ICA 1} and \textbf{No ICA}, predominantly reject both univariate tests of normality across all three residual series. Moreover, with increasing sample sizes, the increase in rejection frequencies of the null hypothesis of Gaussianity exacerbates the insufficiency to verify the assumptions.
\subsection{Bootstrap Probability Coverages:}
Tables \ref{tab:EstandCovICA1} and \ref{tab:EstandCovnoICA} report the estimates of the true parameter on-impact matrix $B_{0}$, from the two distributional specifications \textbf{ICA 1} and \textbf{No ICA} respectively, with sample size $T = 100, 500$, 
Monte Carlo simulations $N_S = 500$ and residual-based MB bootstrap replications $N = 999$. We also report the empirical coverage probabilities of the nominal 90\% confidence intervals (CIs) for each specification\footnote{Since we are concerned with invalidity of  identification due to \textit{extra} Gaussian shock, in this section we show results pertaining only to specifications \textbf{ICA 1} and \textbf{No ICA}. Results for the specification \textbf{ICA 0}, where all three structural shocks are non-Gaussian can be found in Appendix \ref{subsec:resultsICA0}.}. 

 Table \ref{tab:EstandCovICA1} shows that for the specification with one Gaussian shock, the estimates $\hat{B}_T$ are consistent, and the empirical coverage probabilities of the nominal 90\% CIs are close to the nominal level. The bootstrap standard errors also closely match those from NGML estimation, confirming the theoretical bootstrap validity established in Proposition \ref{prop:BSvalidity} under the regularity conditions ensuring asymptotic consistency of the NGML estimator.

In contrast, Table \ref{tab:EstandCovnoICA} shows that when two Gaussian shocks are present, the estimates of $\hat{B}_T$ become inconsistent, and the empirical CI coverages fall well below the nominal level. Moreover, for large samples ($T = 500$), the estimates corresponding to the instantaneous impact of the single non-Gaussian shock in the specification \textbf{No ICA} ($\varepsilon_{1,t}$), remain consistent and comparable to those obtained from valid specifications (\textbf{ICA 0} and \textbf{ICA 1}). The studentized-bootstrap CI coverages also align with their asymptotic analogs, indicating asymptotic validity of these individual estimates.

 These findings are consistent with the literature on partial identification of non-Gaussian structural shocks in the presence of multiple Gaussian shocks \citep{maxand_identification_2020}. However, inference from such partially identified models requires prior knowledge of which shocks are Gaussian, which is generally infeasible in empirical settings. Nonetheless, the bootstrap diagnostic can detect such misspecification, in large samples, allowing practitioners to identify the Gaussian shocks in the \textbf{No ICA} specification through univariate diagnostics since their corresponding estimates' distribution deviate from their asymptotic distributions.

\input{Tables/EstandCovICA1.tex}

\input{Tables/EstandCovnoICA.tex}
\subsection{Bootstrap Diagnostic:}
We consider the multivariate and univariate (over each element of the estimated on-impact matrix) normality test of Doornik-Hansen
for the sequences,
 ${\{\hat{B}_{T,1}^*, \hat{B}_{T,2}^*, \cdots, \hat{B}_{T,M}^*\}}_{s}, \ 
s = 1, 2, \cdots, S, \ b = 1,2,\cdots,M$, 
where $M$ is the number of bootstrap replications considered for the number of tests $S$. As noted earlier, we choose,
 (i.) $M = (1/3)T^{3/5}$ and (ii.) $M = (1/2) T^{3/5}$, to maintain a balance between power and size control in finite samples, while ensuring that $M/T\to 0$. We consider the empirical rejection frequencies of the tests across the whole range of significance level i.e., $\alpha \in [0,1]$, and the empirical distribution function of the $p$-values, $p^*_{M,T,s}$, $s = 1, 2, \cdots, S$.
Under the null hypothesis, for continuous test statistics, it is well known that the $p$-values are uniformly distributed i.e., under valid specification, $p^*_{M,T,s} \xrightarrow{d^*}_p \mathbb{U}(0,1)$, as $S \to \infty$ and the average rejection frequencies can be computed as $\pi^*_{M,T,S}(x) = \frac{1}{S} \sum_{s=1}^S \mathbf{1}(p^*_{M,T,s} \leq x)$. Hence, under valid specification of \emph{at most} one Gaussian shock, $\pi^*_{M,T,S}(x) \xrightarrow{p} x$, as $S \to \infty$.
However, under the alternative hypothesis, the $p$-values are stochastically smaller than $\mathbb{U}(0,1)$ and hence, $\pi^*_{M,T,S}(x) > x$ for some $x \in (0,1)$.
(\citet{hung_behavior_1997}, \citet{tang_general_2021}).

\subsubsection{Multivariate Diagnostic:} 

\input{Figures/fanchart.tex}
Figures \ref{fig:fanchart500} show plots of the empirical distribution function of the $p$-values $p^*_{M,T,s}$
of the multivariate normality test\footnote{Empirical rejection frequencies for nominal significance level 5\% are also tabulated in Appendix, Table \ref{tab:mvnormtest}} for the three specifications (\textbf{ICA 0}, \textbf{ICA 1} and \textbf{No ICA}), with sample size\footnote{The empirical distribution plots for sample size $T=300$ can be found in Appendix \ref{subsec:resultsICA0} (Figure \ref{fig:fanchart300})} $T = 500$, the number of bootstrap replications in a sequence $M = (1/2)T^{3/5}$,
and the number of sequences $S = 1000$,
across Monte Carlo simulations $N_S = 500$. The dashed yellow line indicates the $45^{\circ}$ line, with the $p$-values on $x$-axis and their corresponding empirical rejection frequencies on the $y$-axis. The black (red) line is the median (mean) of the empirical rejection frequencies, while the dark and light shaded areas indicate the $50$ and $90$-percentile intervals, respectively, across Monte Carlo simulations. The plots of the empirical distribution function of the $p$-values reinforce the previous assertion that in specifications with 0 and 1 Gaussian shocks, (\textbf{ICA 0} and \textbf{ICA 1}), the $p$-values of the multivariate normality test on bootstrap replications are uniformly distributed, whereas under the invalid specification i.e., with \textit{extra} Gaussian shocks, the $p$-values diverge from their asymptotic distribution\footnote{Under the alternative hypothesis i.e., when there are two Gaussian shocks, the choice of $M$ becomes less relevant as the normality test soundly rejects the null hypothesis, even for unusually large sample sizes of $T = 10,000$, and $M \in [10,1000]$. So the choice of $M$ is made to control size of the test.}. Hence, a multivariate test of normality of the bootstrap replications of the estimated on-impact matrix can detect the invalidity of the identifying assumptions i.e., the presence of excess Gaussian shocks.

  \subsubsection{Univariate Diagnostic:} The bootstrap diagnostic can also be implemented in a univariate manner, by applying the univariate Doornik-Hansen test for normality\footnote{Results with the univariate Jarque-Bera test for normality can be found in Appendix \ref{subsec:JBuni}.} to each element of the estimated impact matrix $\hat{B}_T$. Figures \ref{fig:uni_NIGOGT500} and \ref{fig:uni_NIGWGT500} show the empirical distribution function of $p^*_{M,T,s}(x)$, of the tests over bootstrap sequences ${\{\hat{B}_{i,j,T,1}^*, \hat{B}_{i,j,T,2}^*, \cdots, \hat{B}_{i,j,T,M}^*\}}, \ i = 1,2,3, \ j = 1,2,3$ for the specifications \textbf{ICA 1} and \textbf{No ICA}, with sample size $T = 500$, the number of bootstrap replications in a sequence $M = (1/2)T^{3/5}$, and the number of tests $S = 1000$ across Monte Carlo simulations $N_S = 500$. 

 We can observe that, under the null of valid specification with 1 Gaussian shock, the empirical 
distributions of $p^*_{M,T,s}(x)$ are centered around the $45^{\circ}$ line, indicating that the empirical rejections maintain the nominal level of significance and the $p$-values are uniformly distributed. However, under the alternative of invalid specification of \textbf{No ICA} i.e., with 2 Gaussian shocks, the empirical distributions of $p^*_{M,T,s}(x)$ corresponding only to the elements of $\hat{B}_T$ associated with the two Gaussian shocks ($\varepsilon_{2,t}, \varepsilon_{3,t}$) are significantly above the $45^{\circ}$ line, indicating that the test has power against the invalidity that increases with $T$ and $M$. This is consistent with the results of partial identification from Table \ref{tab:EstandCovnoICA}, where the probability coverages (asymptotic and bootstrap) of estimates corresponding to the one non-Gaussian ($\varepsilon_{1,t}$) closely align with the nominal 90\% levels, whereas the estimates corresponding to the two Gaussian shocks were inconsistent, thereby deviating from their asymptotic distributions. 
\input{Figures/uni_NIGOGT500.tex}
\input{Figures/uni_NIGWGT500.tex}

%% file: Tables/unireducedformtest.tex
\begin{table}[H]
\centering
\renewcommand{\arraystretch}{1}
\scalebox{0.9}{
    \begin{tabular}{|c|ccc|ccc|}
     \hline
                & \multicolumn{6}{c|}{\textbf{Specification}} \\
                & \multicolumn{3}{c|}{\textbf{ICA 1}} & \multicolumn{3}{c|}{\textbf{No ICA}} \\
        % \hline
        \textbf{Sample Size} & \textbf{T = 100} & \textbf{T = 300} & \textbf{T = 500} & \textbf{T = 100} & \textbf{T = 300} & \textbf{T = 500}\\
        \hline
        \textit{DH} & & & & & & \\
        \rule{0pt}{2.5ex}
        $\hat{u}_{1,t}$ & 0.99 & 1.00 & 1.00 & 0.99 & 1.00 & 1.00\\
        $\hat{u}_{2,t}$ & 0.76 & 0.99 & 1.00 & 0.38 & 0.68 & 0.88\\
        $\hat{u}_{3,t}$ & 0.41 & 0.74 & 0.84 & 0.26 & 0.55 & 0.70\\
        \hline
        \textit{JB} & & & & & & \\
        \rule{0pt}{2.5ex}
        $\hat{u}_{1,t}$ & 0.99 & 1.00 & 1.00 & 0.99 & 1.00 & 1.00\\
        $\hat{u}_{2,t}$ & 0.75 & 0.99 & 1.00 & 0.35 & 0.69 & 0.88\\
        $\hat{u}_{3,t}$ & 0.41 & 0.73 & 0.85 & 0.26 & 0.57 & 0.70\\
        \hline
    \end{tabular}
}
\caption{\scriptsize Empirical rejection frequencies of the univariate Doornik-Hansen and Jarque-Bera tests for normality of the reduced-form innovations, $\hat{u}_t$, for different sample sizes, ${T} = 100, 300, 500$, across ${N_S} = 500$ Monte Carlo simulations, 
at the nominal 5\% level, for the two distributional specifications, \textbf{ICA 1} and \textbf{No ICA}, described in Section \ref{sec:dist_spec}.}
\label{tab:unireducedformtest}
\end{table}

%% file: Tables/reducedformtest.tex
\begin{table}[H]
\centering
\renewcommand{\arraystretch}{1}
\scalebox{1}{
    \begin{tabular}{|c|ccc|}
     \hline
                & \multicolumn{3}{c|}{\textbf{Specification}} \\
        % \hline
        \textbf{Sample Size} & \textbf{ICA 0} & \textbf{ICA 1} & \textbf{No ICA}\\
        \hline
        \rule{0pt}{2.5ex}
        \textbf{100} & 1.00 & 0.99 & 0.96\\
        \textbf{300} & 1.00 & 1.00 & 1.00\\
        \textbf{500} & 1.00 & 1.00 & 1.00\\
        \hline
    \end{tabular}
}
\caption{\scriptsize Empirical rejection frequencies of the multivariate Doornik-Hansen 
tests for normality of the reduced-form innovations, $\hat{u}_t$, for different sample sizes, \\
${T} = 100, 300, 500$, across ${N_S} = 500$ Monte Carlo simulations, 
at the nominal 5\% level, for the three different distributional specifications described in Section \ref{sec:dist_spec}.}
\label{tab:reducedformtest}
\end{table}

%% file: Tables/EstandCovICA1.tex
\begin{table}[H]
\centering 

    \renewcommand{\arraystretch}{0.9}

\scalebox{0.8}{

         \begin{tabular}{|c|c|cc|c|ccc|}
         \hline
         & & & & & & \multicolumn{2}{c|}{\textbf{Bootstrap Coverage}}\\

         & $\mathbf{B_{0}}$ & $\mathbf{\hat{B}_T}$ & $\mathbf{\hat{B}^*_T}$ & $\mathbf{90\%}$\textbf{CI} & \textbf{Asymp. Cov}. & \textbf{Studentized} & \textbf{Percentile} \\
         & (a) & (b) & (c) & (d) & (e) & (f) & (g) \\   
         
         \hline
        \rule{0pt}{1ex}
                                    & 1.29 & 1.25 & 1.22 & [0.86, 1.72] & 0.80 & 0.66 & 0.70 \\
                            && (0.26) & (0.26) &  &[0.82,1.61] & [0.55,1.42] & [0.83,1.48]\\
                            & 0.81 & 0.80 & 0.79 & [0.43, 1.24] & 0.89 & 0.85 & 0.89 \\
                            && (0.23) & (0.24) &  &[0.43,1.14] & [0.29,1.04] & [0.39,1.10]\\
                            & -0.66 & -0.64 & -0.62 & [-0.99, -0.29] & 0.91 & 0.84 & 0.85 \\
                            && (0.21) & (0.22) &  &[-0.95,-0.32] & [-0.87,-0.20] & [-0.92,-0.26]\\
                            & 0.28 & 0.27 & 0.25 & [0.09, 0.44] & 0.85 & 0.89 & 0.90 \\
                            && (0.10) & (0.11) &  &[0.13,0.42] & [0.07,0.41] & [0.05,0.44]\\
        $\mathbf{T = 100}$  & 1.41 & 1.36 & 1.29 & [0.96, 1.78] & 0.89 & 0.77 & 0.76 \\
                            && (0.25) & (0.25) &  &[1.01,1.72] & [0.88,1.59] & [0.87,1.61]\\
                            & 0.78 & 0.72 & 0.66 & [0.25, 1.09] & 0.88 & 0.87 & 0.87 \\
                            && (0.24) & (0.26) &  &[0.43,1.10] & [0.31,1.04] & [0.26,1.06]\\
                            & 0.00 & 0.01 & 0.01 & [-0.16, 0.19] & 0.88 & 0.92 & 0.95 \\
                            && (0.11) & (0.12) &  &[-0.15,0.16] & [-0.18,0.19] & [-0.19,0.22]\\
                            & -0.60 & -0.54 & -0.48 & [-0.88, -0.11] & 0.87 & 0.90 & 0.93 \\
                            && (0.25) & (0.26) &  &[-0.89,-0.27] & [-0.92,-0.12] & [-0.91,-0.05]\\
                            & 1.00 & 0.95 & 0.89 & [0.72, 1.17] & 0.86 & 0.86 & 0.85 \\
                            && (0.15) & (0.15) &  &[0.75,1.18] & [0.67,1.15] & [0.61,1.13]\\

\hline
                            & 1.29 & 1.28 & 1.27 & [1.09, 1.49] & 0.87 & 0.81 & 0.84 \\
                            && (0.12) & (0.12) &  &[1.08,1.47] & [1.01,1.42] & [1.09,1.45]\\
                            & 0.81 & 0.81 & 0.80 & [0.64, 0.97] & 0.90 & 0.86 & 0.87 \\
                            && (0.10) & (0.10) &  &[0.64,0.96] & [0.61,0.93] & [0.64,0.95]\\
                            & -0.66 & -0.66 & -0.65 & [-0.79, -0.53] & 0.92 & 0.89 & 0.89 \\
                            && (0.09) & (0.09) &  &[-0.80,-0.51] & [-0.77,-0.49] & [-0.78,-0.51]\\
                            & 0.28 & 0.28 & 0.28 & [0.23, 0.34] & 0.91 & 0.92 & 0.91 \\
                            && (0.04) & (0.04) &  &[0.22,0.34] & [0.21,0.34] & [0.21,0.34]\\
        $\mathbf{T = 500}$  & 1.41 & 1.41 & 1.40 & [1.26, 1.57] & 0.90 & 0.85 & 0.87 \\
                            && (0.10) & (0.10) &  &[1.25,1.57] & [1.23,1.53] & [1.24,1.55]\\
                            & 0.78 & 0.77 & 0.77 & [0.63, 0.92] & 0.90 & 0.89 & 0.90 \\
                            && (0.09) & (0.09) &  &[0.63,0.91] & [0.63,0.90] & [0.63,0.91]\\
                            & 0.00 & 0.00 & 0.00 & [-0.06, 0.07] & 0.87 & 0.88 & 0.89 \\
                            && (0.04) & (0.04) &  &[-0.06,0.06] & [-0.06,0.06] & [-0.06,0.07]\\
                            & -0.60 & -0.59 & -0.59 & [-0.72, -0.46] & 0.91 & 0.88 & 0.90 \\
                            && (0.08) & (0.08) &  &[-0.72,-0.46] & [-0.72,-0.46] & [-0.72,-0.45]\\
                            & 1.00 & 0.99 & 0.99 & [0.90, 1.08] & 0.89 & 0.86 & 0.86 \\
                            && (0.06) & (0.06) &  &[0.90,1.09] & [0.90,1.08] & [0.89,1.08]\\

\hline
\hline
    \end{tabular}
    }
    \caption{\scriptsize Estimates of true parameter on-impact matrix, $B_{0}$, from the process: \\
    \textbf{ICA 1:} $\epsilon_{1,t} \sim NIG(0.3,0.5), \epsilon_{2,t} \sim NIG(0.5,1)$, and \\
     $\epsilon_{3,t} \sim \mathcal{N}(0,1)$ 
     (see Section \ref{sec:dist_spec}),
    with ${T} = 100, 500$, Monte Carlo Simulations, ${N_S} = 500$ and residual-based MB bootstrap replications, ${N} = 999$.\\
    (a) True parameter values.\\
    (b) NGML estimates, averaged across Monte Carlo simulations.\\
    (c) Bootstrap estimates, averaged across Monte Carlo simulations.\\
    The parentheses contain their respective standard errors, averaged across Monte Carlo simulations.\\
    (d) Empirical 90\% percentile-confidence intervals (CIs). The square brackets contain the median of the upper and lower limits of CIs, averaged across Monte Carlo simulations.\\
    (e) Empirical probability coverage, i.e. the frequencies that the asymptotic $90\%$ CIs contain the true parameter values.\\
    (f) Bootstrap (studentized) probability coverage of nominal $90\%$ CIs.\\
    (g) Bootstrap (percentile) probability coverage of nominal $90\%$ CIs.\\
    The square brackets contain the median of the respective CIs, across Monte Carlo simulations.
    }
    \label{tab:EstandCovICA1}

\end{table}

%% file: Tables/EstandCovnoICA.tex
% \begin{table}[H]
% \begin{landscape}
\begin{table}[H]
\centering 
% \begin{adjustbox}{scale=0.75}
    \renewcommand{\arraystretch}{0.9}
%     \centering

\scalebox{0.8}{
% \begin{minipage}{0.25\textwidth}
    
% \centering
       
         \begin{tabular}{|c|c|cc|c|ccc|}
         \hline
         & & & & & & \multicolumn{2}{c|}{\textbf{Bootstrap Coverage}}\\
        
         % &\multicolumn{3}{c}{$\mathbf{[0]}$} 
         % &\multicolumn{3}{c}{$\mathbf{[1]}$}
         % &\multicolumn{3}{c|}{$\mathbf{[2]}$} \\
      %    & \multicolumn{3}{c|}{$\mathbf{[0]}$} 
      %    &\multicolumn{3}{c|}{$\mathbf{[1]}$}
      %    &\multicolumn{3}{c|}{$\mathbf{[2]}$} \\
         
      %    &\multicolumn{3}{c|}{\textbf{Sample size}} 
      %    &\multicolumn{3}{c|}{\textbf{Sample size}}
      %    &\multicolumn{3}{c|}{\textbf{Sample size}} \\
         & $\mathbf{B_{0}}$ & $\mathbf{\hat{B}_T}$ & $\mathbf{\hat{B}^*_T}$ & $\mathbf{90\%}$\textbf{CI} & \textbf{Asymp. Cov}. & \textbf{Studentized} & \textbf{Percentile} \\
         & (a) & (b) & (c) & (d) & (e) & (f) & (g) \\   
         
         \hline
        \rule{0pt}{1ex}
    %          % $\mathbf{\beta}$ 
        %                     & 1.30 & 1.26 & 1.31 & [0.82, 1.93] & 0.86 & 0.65 & 0.72 \\
        %                     && (0.26) & (0.20) &  &[0.73,1.67] & [0.39,1.43] & [0.71,1.52]\\
        %                     & 0.82 & 0.79 & 0.84 & [0.38, 1.23] & 0.93 & 0.84 & 0.86 \\
        %                     && (0.23) & (0.18) &  &[0.35,1.20] & [0.14,1.07] & [0.23,1.10]\\
        %                     & -0.66 & -0.65 & -0.67 & [-1.02, -0.24] & 0.94 & 0.86 & 0.88 \\
        %                     && (0.20) & (0.17) &  &[-1.02,-0.26] & [-0.89,-0.08] & [-0.93,-0.12]\\
        %                     & 0.20 & 0.10 & 0.14 & [-0.17, 0.34] & 0.85 & 0.90 & 0.93 \\
        %                     && (0.17) & (0.15) &  &[-0.10,0.37] & [-0.15,0.41] & [-0.13,0.36]\\
        % $\mathbf{T = 100}$  & 1.00 & 0.71 & 0.92 & [0.05, 1.23] & 0.83 & 0.88 & 0.92 \\
        %                     && (0.39) & (0.28) &  &[0.25,1.39] & [-0.39,1.32] & [0.05,1.18]\\
        %                     & 0.55 & -0.04 & -0.04 & [-1.15, 1.17] & 0.63 & 0.81 & 0.67 \\
        %                     && (1.04) & (0.73) &  &[-1.13,1.11] & [-1.23,1.17] & [-0.97,0.91]\\
        %                     & 0.00 & 0.07 & 0.08 & [-0.26, 0.35] & 0.84 & 0.91 & 0.98 \\
        %                     && (0.23) & (0.18) &  &[-0.18,0.34] & [-0.24,0.35] & [-0.21,0.32]\\
        %                     & -0.60 & 0.03 & 0.00 & [-1.19, 1.20] & 0.63 & 0.82 & 0.68 \\
        %                     && (1.12) & (0.72) &  &[-1.13,1.06] & [-1.34,1.28] & [-0.96,0.97]\\
        %                     & 1.00 & 0.69 & 0.90 & [0.04, 1.20] & 0.81 & 0.87 & 0.90 \\
        %                     && (0.37) & (0.25) &  &[0.28,1.32] & [-0.41,1.32] & [0.05,1.16]\\
                                        & 1.29 & 1.25 & 1.30 & [0.85, 1.73] & 0.79 & 0.61 & 0.65 \\
                            && (0.26) & (0.22) &  &[0.81,1.59] & [0.52,1.39] & [0.77,1.43]\\
                            & 0.81 & 0.79 & 0.83 & [0.44, 1.18] & 0.88 & 0.76 & 0.77 \\
                            && (0.23) & (0.19) &  &[0.42,1.11] & [0.26,1.00] & [0.31,1.01]\\
                            & -0.66 & -0.65 & -0.66 & [-0.98, -0.29] & 0.91 & 0.80 & 0.79 \\
                            && (0.20) & (0.18) &  &[-0.96,-0.32] & [-0.85,-0.18] & [-0.87,-0.19]\\
                            & 0.20 & 0.10 & 0.13 & [-0.13, 0.29] & 0.80 & 0.83 & 0.85 \\
                            && (0.18) & (0.15) &  &[-0.08,0.33] & [-0.10,0.35] & [-0.07,0.32]\\
        $\mathbf{T = 100}$  & 1.00 & 0.71 & 0.92 & [0.07, 1.20] & 0.75 & 0.80 & 0.85 \\
                            && (0.35) & (0.27) &  &[0.33,1.32] & [0.14,1.24] & [0.11,1.14]\\
                            & 0.55 & -0.03 & -0.03 & [-1.11, 1.13] & 0.58 & 0.71 & 0.61 \\
                            && (1.06) & (0.71) &  &[-1.04,0.97] & [-1.08,0.99] & [-0.80,0.72]\\
                            & 0.00 & 0.07 & 0.07 & [-0.20, 0.30] & 0.77 & 0.85 & 0.95 \\
                            && (0.21) & (0.17) &  &[-0.15,0.31] & [-0.16,0.30] & [-0.15,0.27]\\
                            & -0.60 & 0.02 & -0.00 & [-1.15, 1.15] & 0.59 & 0.71 & 0.61 \\
                            && (1.09) & (0.69) &  &[-1.03,0.99] & [-1.08,1.08] & [-0.81,0.82]\\
                            & 1.00 & 0.69 & 0.90 & [0.07, 1.16] & 0.74 & 0.82 & 0.85 \\
                            && (0.34) & (0.25) &  &[0.38,1.25] & [0.17,1.22] & [0.11,1.12]\\

                            \hline
            %                 & 1.30 & 1.29 & 1.38 & [1.08, 1.55] & 0.92 & 0.83 & 0.85 \\
            %                     && (0.12) & (0.10) &  &[1.04,1.51] & [0.95,1.44] & [1.03,1.48]\\
            %                     & 0.82 & 0.81 & 0.87 & [0.63, 1.02] & 0.95 & 0.89 & 0.88 \\
            %                     && (0.10) & (0.09) &  &[0.62,1.00] & [0.58,0.95] & [0.59,0.97]\\
            %                     & -0.66 & -0.66 & -0.70 & [-0.84, -0.50] & 0.95 & 0.90 & 0.89 \\
            %                     && (0.09) & (0.08) &  &[-0.83,-0.49] & [-0.79,-0.46] & [-0.80,-0.46]\\
            %                     & 0.20 & 0.09 & 0.13 & [-0.11, 0.24] & 0.72 & 0.84 & 0.73 \\
            %                     && (0.09) & (0.10) &  &[-0.03,0.26] & [-0.05,0.28] & [-0.02,0.23]\\
            % $\mathbf{T = 500}$  & 1.00 & 0.67 & 0.95 & [0.02, 1.20] & 0.65 & 0.82 & 0.94 \\
            %                     && (0.26) & (0.28) &  &[0.24,1.29] & [-0.29,1.37] & [0.07,1.17]\\
            %                     & 0.55 & -0.11 & -0.08 & [-1.18, 1.14] & 0.65 & 0.77 & 0.68 \\
            %                     && (0.84) & (0.84) &  &[-1.22,1.10] & [-1.29,1.18] & [-1.02,0.95]\\
            %                     & 0.00 & 0.07 & 0.09 & [-0.18, 0.24] & 0.69 & 0.76 & 0.90 \\
            %                     && (0.14) & (0.14) &  &[-0.14,0.26] & [-0.13,0.28] & [-0.10,0.21]\\
            %                     & -0.60 & 0.08 & 0.04 & [-1.19, 1.18] & 0.68 & 0.77 & 0.67 \\
            %                     && (0.87) & (0.85) &  &[-1.19,1.19] & [-1.32,1.27] & [-1.00,1.02]\\
            %                     & 1.00 & 0.65 & 0.93 & [0.02, 1.16] & 0.67 & 0.83 & 0.94 \\
            %                     && (0.26) & (0.28) &  &[0.23,1.28] & [-0.47,1.37] & [0.06,1.15]\\
                                    & 1.29 & 1.28 & 1.36 & [1.10, 1.49] & 0.88 & 0.80 & 0.75 \\
                                    && (0.12) & (0.10) &  &[1.08,1.47] & [1.00,1.41] & [1.05,1.43]\\
                                    & 0.81 & 0.81 & 0.86 & [0.65, 0.96] & 0.91 & 0.84 & 0.80 \\
                                    && (0.10) & (0.08) &  &[0.65,0.97] & [0.61,0.93] & [0.62,0.93]\\
                                    & -0.66 & -0.66 & -0.70 & [-0.80, -0.53] & 0.91 & 0.84 & 0.80 \\
                                    && (0.09) & (0.08) &  &[-0.80,-0.52] & [-0.77,-0.49] & [-0.77,-0.49]\\
                                    & 0.20 & 0.09 & 0.13 & [-0.09, 0.23] & 0.64 & 0.77 & 0.64 \\
                                    && (0.10) & (0.09) &  &[-0.01,0.23] & [-0.01,0.26] & [0.00,0.22]\\
            $\mathbf{T = 500}$      & 1.00 & 0.66 & 0.94 & [0.08, 1.17] & 0.62 & 0.76 & 0.87 \\
                                    && (0.30) & (0.27) &  &[0.31,1.25] & [0.16,1.28] & [0.14,1.15]\\
                                    & 0.55 & -0.10 & -0.08 & [-1.14, 1.11] & 0.60 & 0.71 & 0.58 \\
                                    && (0.85) & (0.82) &  &[-1.15,0.94] & [-1.13,1.00] & [-0.89,0.74]\\
                                    & 0.00 & 0.07 & 0.09 & [-0.17, 0.23] & 0.62 & 0.70 & 0.83 \\
                                    && (0.14) & (0.13) &  &[-0.10,0.24] & [-0.10,0.24] & [-0.06,0.20]\\
                                    & -0.60 & 0.08 & 0.04 & [-1.18, 1.16] & 0.62 & 0.74 & 0.59 \\
                                    && (0.88) & (0.83) &  &[-1.02,1.08] & [-1.11,1.11] & [-0.79,0.88]\\
                                    & 1.00 & 0.65 & 0.92 & [0.07, 1.15] & 0.64 & 0.76 & 0.87 \\
                                    && (0.29) & (0.27) &  &[0.30,1.23] & [0.11,1.27] & [0.13,1.13]\\
            
     \hline
    \hline            
    \end{tabular}
    }
    \caption{\scriptsize Estimates of true parameter on-impact matrix, $B_{0}$, from the process: \\
    \textbf{No ICA:} $\epsilon_{1,t} \sim NIG(0.3,0.5), \epsilon_{2,t} \sim \mathcal{N}(0,1)$, and 
     $\epsilon_{3,t} \sim \mathcal{N}(0,1)$ 
     (see Section \ref{sec:dist_spec}),
    with ${T} = 100, 500$, Monte Carlo Simulations, ${N_S} = 500$ and residual-based MB bootstrap replications, ${N} = 999$.\\
    (a) True parameter values.\\
    (b) NGML estimates, averaged across Monte Carlo simulations.\\
    (c) Bootstrap estimates, averaged across Monte Carlo simulations.\\
    The parentheses contain their respective standard errors, averaged across Monte Carlo simulations.\\
   (d) Empirical 90\% percentile-confidence intervals (CIs). The square brackets contain the median of the upper and lower limits of CIs, averaged across Monte Carlo simulations.\\
    (e) Empirical probability coverage, i.e. the frequencies that the asymptotic $90\%$ CIs contain the true parameter values.\\
    (f) Bootstrap (studentized) probability coverage of nominal $90\%$ CIs.\\
    (g) Bootstrap (percentile) probability coverage of nominal $90\%$ CIs.\\
    The square brackets contain the median of the respective CIs, across Monte Carlo simulations.
    }
    \label{tab:EstandCovnoICA}
%     \end{adjustbox}
% \end{minipage}
\end{table}
% \end{landscape}

%% file: Figures/fanchart.tex
\begin{figure}[h]
    \centering
          \begin{subfigure}{0.45\textwidth}
          \centering
          \includegraphics[width = \textwidth]{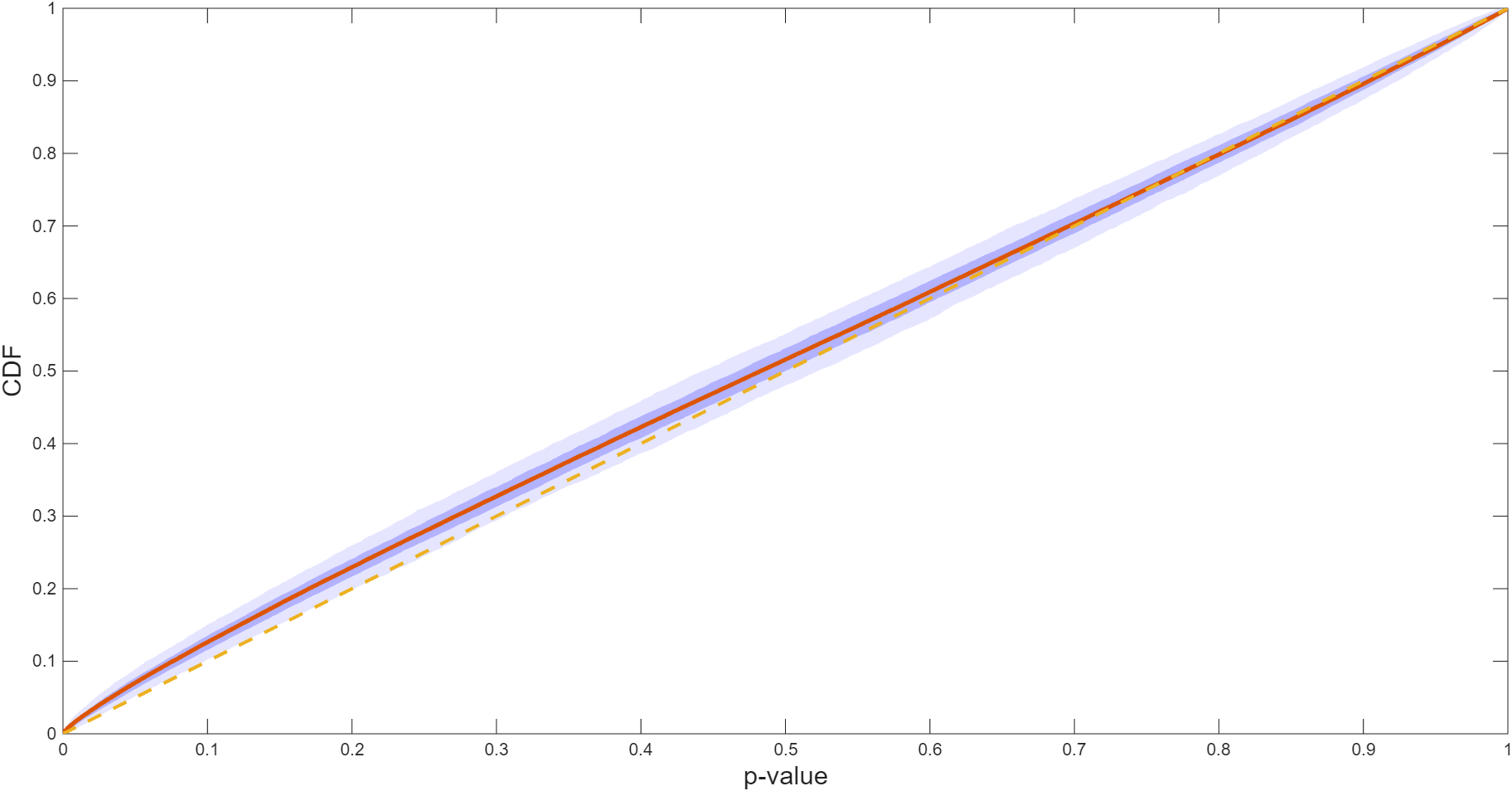}
          \subcaption{\textbf{ICA 0}}
          \label{fig:500ICA0}
          \end{subfigure}
      \quad
          \begin{subfigure}{0.45\textwidth}
          \centering
          \includegraphics[width = \textwidth]{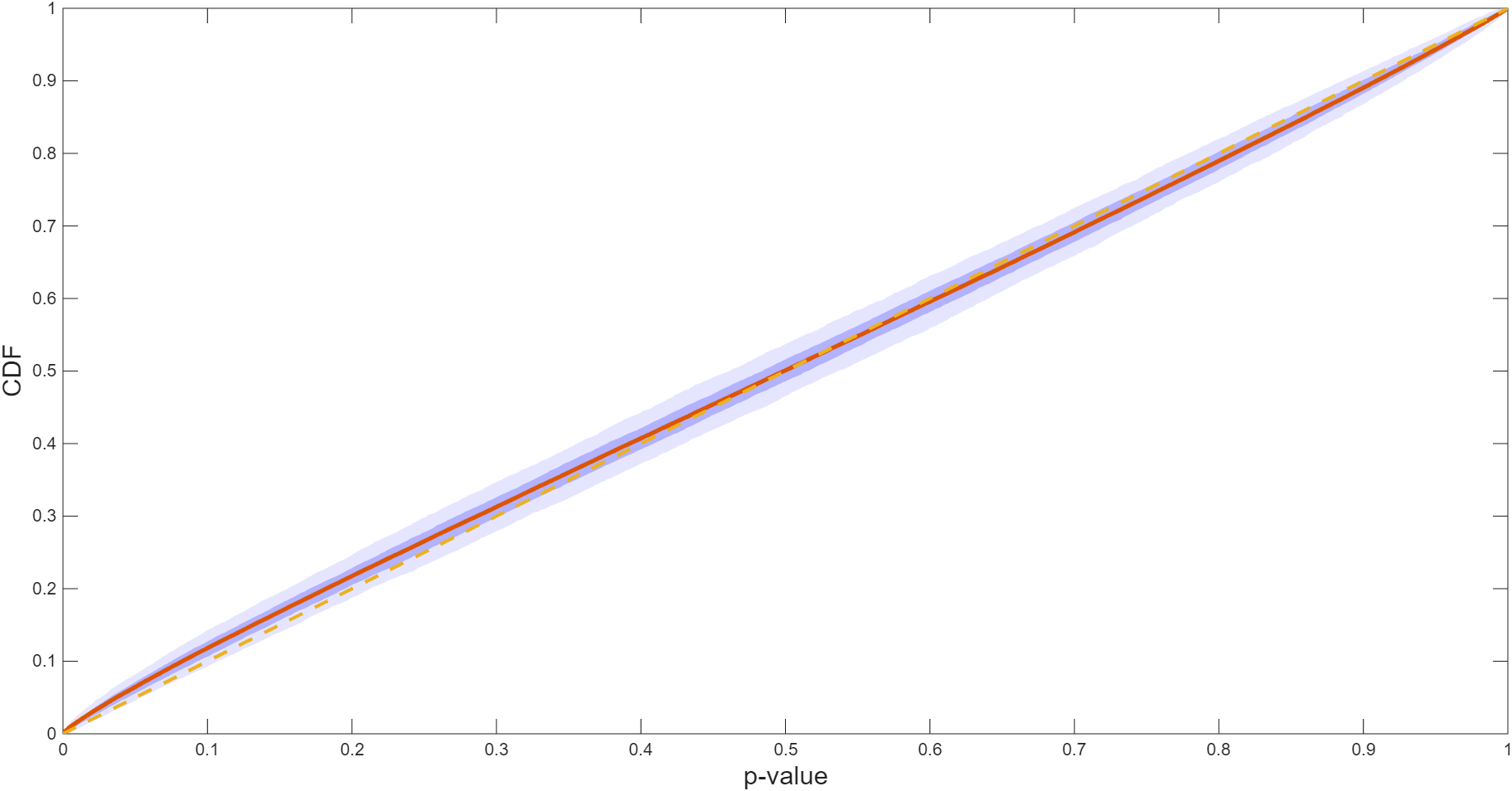}
          \subcaption{\textbf{ICA 1}}
          \label{fig:500ICA1}
          \end{subfigure}
      \quad
      % \vspace{0.2cm}
          \begin{subfigure}{0.45\textwidth}
          \centering
          \includegraphics[width = \textwidth]{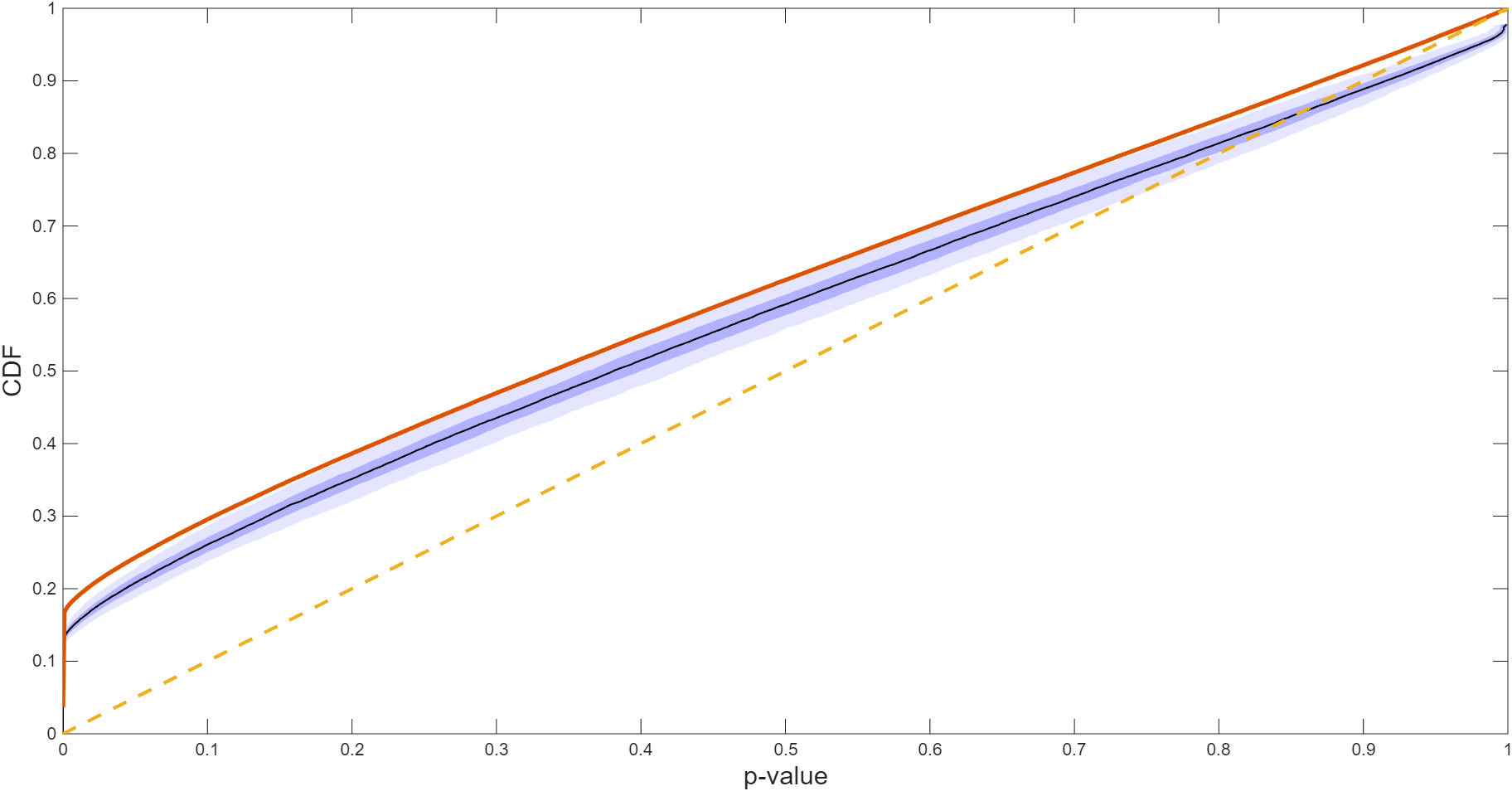}
          \subcaption{\textbf{No ICA}}
          \label{fig:500NoICA}
          \end{subfigure}
      \caption{Empirical distribution functions of the $p$-values, $p^*_{M,T,s}(x)$, $s = 1,2,\cdots, \mathbf{S}$ of the multivariate Doornik-Hansen normality test for the estimates of $B_T$ for the three specifications - \textbf{ICA 0}, \textbf{ICA 1} and \textbf{No ICA} - with sample size $T = 500$ and $M = (1/2)T^{3/5}$, for the number of bootstrap sequences, $S = 1000$, across Monte Carlo simulations, $N_S = 500$ The darker and lighter shades indicate the 50 and 90- percentile bands.}
      \label{fig:fanchart500}
\end{figure}

  % \begin{minipage}{0.95\textwidth}
  %       \footnotesize
  %       \setstretch{1.2}
  %       \textit{\textbf{Notes.} The figures display the estimated slope of the Phillips curve from simulated data generated by the DSGE model outlined in equations ... . Panel (A) shows the true slope (0.65) when the economy only presents demand  shocks. Panels (B) and (C) illustrate the estimated slopes when using the true output gap $x_t$ and the mismeasured output gap $\tilde{x}_t$, respectively 0.64 and 0.53. In this simulation, $\varepsilon_m \sim \mathcal{N}(0, 0.25)$, $\zeta_{x} \sim \mathcal{N}(0, 0.34)$ and $\zeta_{\pi} \sim \mathcal{N}(0, 0.02)$.}
  %   \end{minipage}

%% file: Figures/uni_NIGOGT500.tex
\begin{figure}[h]
    \centering 
    \includegraphics[scale = 0.45]{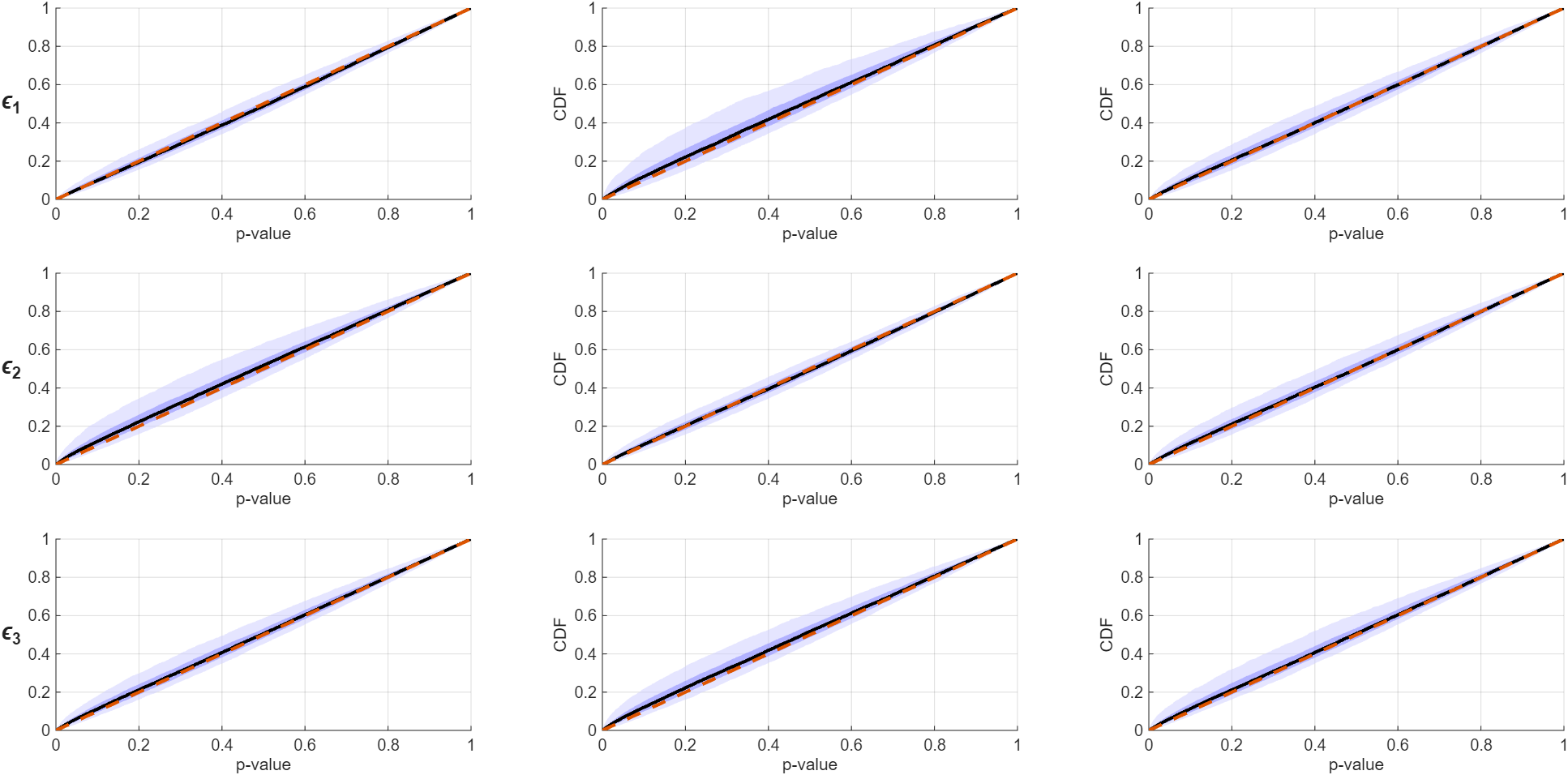}
    \caption{Univariate \textbf{ICA 1}:  Empirical distribution function of $p^*_{M,T,S}(x)$ from the univariate Doornik-Hansen test for normality of each element of (bootstrapped) estimated 
    on-impact matrix $\hat{B}_{T}^*$ for the (valid) specification \textbf{ICA 1} (see Section \ref{sec:dist_spec}), with sample size $\mathbf{T} = 500$, the number of bootstrap replicates in a sequence, $\mathbf{M} = (1/2)T^{3/5}$,
    and the number of sequences/tests, ${S} = 1000$, across Monte Carlo Simulations ${N_S} = 500$. The dashed red line indicates the $45^{\circ}$ line. The black line is the median of the empirical distribution of the $p$-values, $\pi^*_{M,T,S}(x)$, while the dark and light shaded areas indicating the $50$ and $90$-percentile intervals, respectively, with the $p$-values on $x$-axis and their corresponding empirical frequencies on the $y$-axis.}
    \label{fig:uni_NIGOGT500}
\end{figure}

%% file: Figures/uni_NIGWGT500.tex
\begin{figure}[H]
    \centering 
    \includegraphics[scale = 0.45]{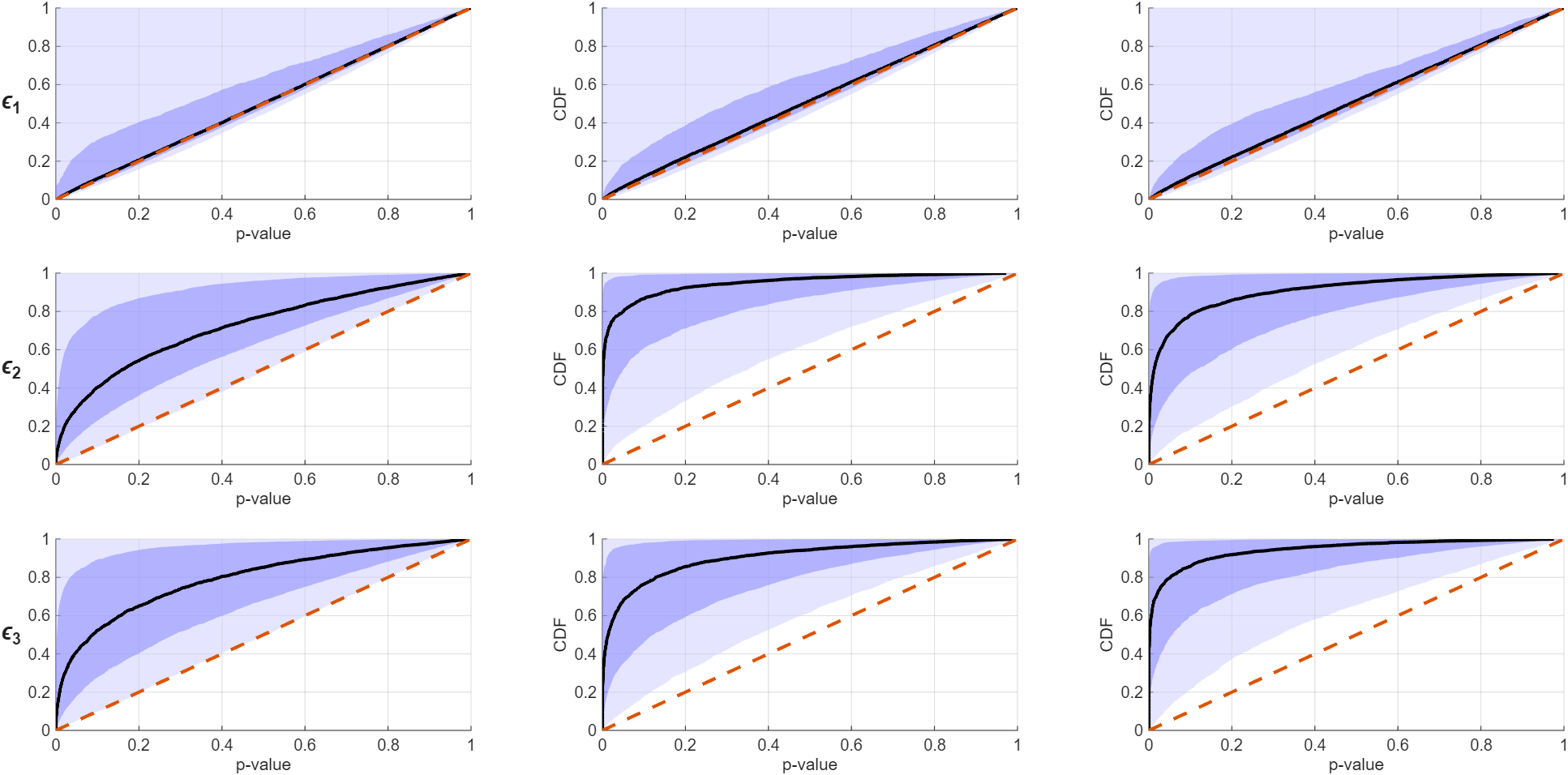}
    \caption{Univariate \textbf{No ICA}: Empirical distribution function of $p^*_{M,T,S}(x)$ from the univariate Doornik-Hansen test for normality of each element of (bootstrapped) estimated 
    on-impact matrix $\hat{B}_{T}^*$ for the (invalid) specification \textbf{No ICA} (see Section \ref{sec:dist_spec}), with sample size $\mathbf{T} = 500$, the number of bootstrap replicates in a sequence, $\mathbf{M} = (1/2)T^{3/5}$,
    and the number of sequences/tests, ${S} = 1000$, across Monte Carlo Simulations ${N_S} = 500$. The dashed red line indicates the $45^{\circ}$ line. The black line is the median of the empirical distribution of the $p$-values, $\pi^*_{M,T,S}(x)$, while the dark and light shaded areas indicating the $50$ and $90$-percentile intervals, respectively, with the $p$-values on $x$-axis and their corresponding empirical frequencies on the $y$-axis.}
    \label{fig:uni_NIGWGT500}
\end{figure}

%% file: 05_WKId.tex
This section examines the performance of the bootstrap diagnostic under the conditions of weak identification. We define weak identification as a sequence of data-generating processes under \textbf{ICA 1} for a sample size, $T$, in which one of the non-Gaussian structural shocks, say $\varepsilon_{2,t}$, follows a NIG distribution with excess kurtosis $\kappa_T \rightarrow 0$ as $T \rightarrow \infty$, while remaining structurally non-Gaussian for any finite $T$. Specifically, we parameterize the approach to Gaussianity through the NIG parameters $\alpha$ and $\delta$: as $\alpha \rightarrow \infty$ and $\delta/\alpha = \sigma^2$, the NIG distribution converges to $N(0, \sigma^2)$. In our simulation, we calibrate $\kappa_{2,T} \sim 0.75$ at $T = 300$, which represents a regime where the information content of the fourth-order cumulant for identification is substantially reduced relative to the strongly identified case ($\kappa_{2,T} \sim 2$). The analysis is simulation-based, consistent with the approach of \citet{moneta_identification_2022}.

 Under the conventional strategy of verification through univariate normality tests, subsequent inference regarding the estimates of the structural impact matrix, $B_0$ is conditional upon the rejection of these normality tests. In finite samples, this pre-testing bias is particularly acute under weak identification, where estimated residuals more closely approximate Gaussian distributions. This implies  an increase in the probability of failing to reject the null hypothesis of normality, as documented in Table \ref{tab:reducedformtestcompare}.
\input{Tables/reducedformtestcompare.tex}

To assess the performance of the bootstrap diagnostic, we compute and compare conditional and unconditional probability coverages for estimates of the impact matrix for a realistic finite sample size of $T = 300$ across $N_S = 500$ Monte Carlo simulations. The conditional coverage based on residual non-normality is calculated by conditioning on the rejection of the univariate Jarque-Bera normality test at the 1\% significance level for all estimated reduced-form innovations' series\footnote{Nearly four percent of the total Monte Carlo replications, $N_S = 500$, reject the univariate Jarque-Bera test of normality, at 1\% nominal significance level, for all three VAR residuals' series, $\hat{u}_{i,t}, \, i = 1,2,3$.}, $\hat{u}_{i,t}, \ i = 1,2,3$. The conditioning on bootstrap diagnostic procedure operates as follows. For each Monte Carlo replication, we randomly choose $S = 1000$ bootstrap sequences of the on-impact matrix estimates ${\{\hat{B}_{T,1}^*, \hat{B}_{T,2}^*, \cdots, \hat{B}_{T,M}^*\}}_{s}, s = 1, 2, \cdots, S$ of length $M = (1/2)T^{3/5}$. We apply the multivariate Doornik-Hansen normality test at nominal 1\% significance level to each bootstrap sequence and compute the rejection rate across the $S$ bootstrap sequences for each simulation. Conditional coverage (bootstrap diagnostic) is then calculated over only those simulations in which the bootstrap-based rejection rate is equal or lower than the overall rejection rate, computed across all $N_S$ Monte Carlo simulations. Coverage probabilities are reported for: nominal percentile confidence intervals (constructed from the empirical distribution of the point estimates across Monte Carlo simulations), asymptotic confidence intervals (based on the asymptotic normal approximation), studentized bootstrap confidence intervals, and percentile bootstrap confidence intervals. All confidence intervals employ a nominal 90\% confidence level. Results are presented in Table \ref{tab:strongidencondcov} for the strongly identified scenario (excess kurtosis approximately 20 and 2) and Table \ref{tab:weakidencondcov} for the weakly identified scenario (excess kurtosis approximately 3 and 0.75).

\subsection{Finite-Sample Performance Under Strong Identification}

Under the strongly identified specification (Table \ref{tab:strongidencondcov}), the unconditional coverage probabilities for all four inference methods are close to their nominal levels. Point estimates exhibit negligible bias relative to true parameter values (column a), with discrepancies not exceeding 0.03 in absolute value. We note a minimal impact of residual-based pre-test conditioning. The conditional coverage probabilities on rejection of univariate normality tests, reported in columns (g)-(j), differ only marginally from their unconditional counterparts in columns (c)-(f). Analogously, the bootstrap diagnostic conditioning criterion yields coverage probabilities in columns (k)-(n) that are virtually indistinguishable from unconditional measures. 

\subsection{Inferential Distortion Under Weak Identification:}
We observe substantial undercoverage across parameters which is consistent with the breakdown of non-Gaussian identification as more than one structural shock approaches Gaussianity. The heterogeneity in coverage across parameters reflects the differential identification strength: elements of $B_0$ corresponding to the more strongly non-Gaussian shock (excess kurtosis approximately 3) retain moderate identification and exhibit relatively modest undercoverage, whereas elements associated with the nearly Gaussian shock (excess kurtosis approximately 0.75) suffer a severe lack of identification strength.

The third element provides a stark illustration: unconditional asymptotic coverage stands at 0.84, yet conditioning on residuals' normality rejection reduces this to 0.69. This adverse selection extends to the bootstrap inference as well. The third element displays studentized bootstrap probability coverage declining from 0.83 to 0.75 upon conditioning, while the fourth element exhibits a reduction from 0.88 to 0.75. 
Paradoxically, the same conditioning criterion that induces severe undercoverage for some parameters generates artificial inflation in probability coverages for other parameters, and hence an illusion of precision. The ninth element displays conditional asymptotic coverage of 0.94 compared to unconditional coverage of 0.84. This spurious elevation of coverage probabilities for poorly identified parameters represents an incoherent form of diagnostic failure: researchers examining these conditional probabilities might erroneously infer that the parameter is well-identified and precisely estimated, when in fact the conditioning mechanism has merely selected those \enquote{fortunate} samples in which estimation happened to succeed despite the fundamental lack of identification strength.

The heterogeneity of coverage distortions across parameters is equally problematic from a practical standpoint. Partial identification, see \citet{maxand_identification_2020}, \citet{guay_identification_2021}, involves focusing on a subset of estimates corresponding to a target structural shock, and deriving implications from the inferred impulse response functions. 
Unlike specifications exhibiting uniform coverage inflation, where the artificial precision might alert careful researchers to selection bias, the mixed pattern signals a common source: the unreliability of residual-based pre-testing under weak identification.

\subsection{The Bootstrap Diagnostic as a Stable Alternative:}

In contrast, conditioning on the bootstrap diagnostic exhibits robustness across the parameter space, inference methods, and identification regimes. The conditional coverage probabilities reported in columns (k)-(n) track their unconditional counterparts in columns (c)-(f) without exhibiting any systematic pattern of inflation or deflation. This coherence suggests that the bootstrap diagnostic conditioning event does not interact arbitrarily with an inference method but preserves the relative strength of identification, or lack thereof, under unconditional inference.

Overall, the bootstrap-diagnostic conditioning evaluates the stability of the identified structure rather than the distributional features of the reduced-form innovations. Therefore, it avoids both the adverse selection that generates undercoverage and the favorable selection that generates spurious precision under residual-based conditioning, and yields probability coverage that reflects the inferential uncertainty inherent in the weakly identified scenarios.

\input{Tables/OGdf56covcompare.tex}
\newpage
\input{Tables/OGdf515covcompare.tex}

%% file: Tables/reducedformtestcompare.tex
\begin{table}[H]
\centering
\renewcommand{\arraystretch}{1}
\scalebox{0.8}{
    \begin{tabular}{|c|c|c|}
     \hline
                & \multicolumn{2}{c|}{\textbf{Specification ICA 1}} \\
                        \rule{0pt}{2.5ex}
                & \textbf{Strongly Identified} & \textbf{Weakly Identified} \\
        % \hline
        
        \hline
        \textit{DH} & & \\
        \rule{0pt}{2.5ex}
        $\hat{u}_{1,t}$ & 1 & 0.88\\
        $\hat{u}_{2,t}$ & 0.85 & 0.17\\
        $\hat{u}_{3,t}$ & 0.50 & 0.09\\
        \hline
        \textit{JB} & & \\
        \rule{0pt}{2.5ex}
        $\hat{u}_{1,t}$ & 1 & 0.87 \\
        $\hat{u}_{2,t}$ & 0.84 & 0.16\\
        $\hat{u}_{3,t}$ & 0.49 & 0.08\\
        \hline
    \end{tabular}
}
\caption{\scriptsize Relative rejection frequencies of the univariate Doornik-Hansen and Jarque-Bera tests for normality of the reduced-form innovations, $\hat{u}_t$, for sample size, ${T} = 300$, across ${N_S} = 500$ Monte Carlo simulations, 
at the nominal 1\% significance level, for the distributional specification, \textbf{ICA 1} (two non-Gaussian shocks and one Gaussian shock), under the strongly identified and weakly identified scenarios. The two non-Gaussian shocks are independent, NIG distributed with \textbf{excess kurtosis ~ 20 and 2} for the strongly identified scenario, and \textbf{excess kurtosis ~ 3 and 0.75} for the weakly identified scenario.}
\label{tab:reducedformtestcompare}
\end{table}

%% file: Tables/OGdf56covcompare.tex
\begin{landscape}
    \begin{table}[p]
% \begin{sidewaystable}[H]
\vspace*{\fill}
    \centering 
% \rotatebox{90}{
% \begin{adjustbox}{scale=0.75}
    \renewcommand{\arraystretch}{1.5}
%     \centering
\scalebox{0.68}{
% \begin{minipage}{0.25\textwidth}
    
% \centering
 \begin{tabular}{|cc|cccc|cccc|ccc@{\hspace{2pt}}c|}
    \hline
    & & \multicolumn{4}{c|}{\textbf{Unconditional Coverage}} & \multicolumn{4}{c|}{\textbf{Conditional Coverage (Residuals' Non-Normality)}} & \multicolumn{4}{c|}{\textbf{Conditional Coverage (Bootstrap Diagnostic)}} \\
     $\mathbf{B_{0}}$ & $\mathbf{\hat{B}_T}$ & $\mathbf{90\%}$\textbf{CI} & \textbf{Asymp.} & \textbf{Stu. BootS} & \textbf{Perc. BootS} & $\mathbf{90\% }$\textbf{CI} & \textbf{Asymp.} & \textbf{Stu. BootS} &\textbf{Perc. BootS} & $\mathbf{90\% }$\textbf{CI} & \textbf{Asymp.} & \textbf{Stu. BootS} & \textbf{Perc. BootS} \\
        (a) & (b) & (c) & (d) & (e) & (f) & (g) & (h) & (i) & (j) & (k) & (l) & (m) & (n) \\
        \hline

                     1.30 & 1.28 & [1.03, 1.59] & 0.85 & 0.77 & 0.80 & [1.09, 1.60] & 0.92 & 0.86 & 0.90 & [1.04, 1.56] & 0.86 & 0.77 & 0.81 \\
                    & (0.16) &  & (0.49) & (0.52) & (0.45) &  & (0.54) & (0.58) & (0.51) &  & (0.49) & (0.49) & (0.44) \\
                    0.82 & 0.81 & [0.62, 1.02] & 0.88 & 0.82 & 0.83 & [0.66, 1.05] & 0.93 & 0.90 & 0.88 & [0.62, 1.01] & 0.89 & 0.82 & 0.83 \\
                    & (0.12) &  & (0.37) & (0.38) & (0.37) &  & (0.40) & (0.41) & (0.40) &  & (0.37) & (0.37) & (0.36) \\
                    -0.66 & -0.66 & [-0.85, -0.48] & 0.89 & 0.84 & 0.84 & [-0.87, -0.53] & 0.93 & 0.94 & 0.90 & [-0.84, -0.48] & 0.90 & 0.84 & 0.85 \\
                    & (0.11) &  & (0.36) & (0.34) & (0.33) &  & (0.36) & (0.37) & (0.37) &  & (0.36) & (0.33) & (0.33) \\
                    0.16 & 0.16 & [0.08, 0.23] & 0.89 & 0.89 & 0.91 & [0.08, 0.24] & 0.88 & 0.89 & 0.91 & [0.08, 0.23] & 0.86 & 0.88 & 0.90 \\
                    & (0.05) &  & (0.15) & (0.16) & (0.18) &  & (0.15) & (0.16) & (0.19) &  & (0.15) & (0.17) & (0.19) \\
                    0.82 & 0.80 & [0.61, 0.95] & 0.91 & 0.92 & 0.92 & [0.60, 0.95] & 0.89 & 0.90 & 0.90 & [0.58, 0.95] & 0.90 & 0.91 & 0.93 \\
                    & (0.11) &  & (0.34) & (0.33) & (0.35) &  & (0.35) & (0.34) & (0.36) &  & (0.34) & (0.33) & (0.36) \\
                    0.45 & 0.43 & [0.16, 0.67] & 0.89 & 0.91 & 0.92 & [0.15, 0.68] & 0.87 & 0.90 & 0.91 & [0.15, 0.70] & 0.86 & 0.89 & 0.92 \\
                    & (0.16) &  & (0.50) & (0.60) & (0.66) &  & (0.50) & (0.63) & (0.68) &  & (0.50) & (0.62) & (0.72) \\
                    0.00 & 0.00 & [-0.10, 0.09] & 0.88 & 0.90 & 0.93 & [-0.08, 0.09] & 0.88 & 0.89 & 0.93 & [-0.10, 0.10] & 0.85 & 0.90 & 0.93 \\
                    & (0.05) &  & (0.17) & (0.20) & (0.21) &  & (0.18) & (0.20) & (0.21) &  & (0.18) & (0.20) & (0.22) \\
                    -0.60 & -0.57 & [-0.78, -0.36] & 0.86 & 0.90 & 0.90 & [-0.78, -0.36] & 0.85 & 0.89 & 0.90 & [-0.79, -0.36] & 0.83 & 0.89 & 0.90 \\
                    & (0.12) &  & (0.37) & (0.47) & (0.53) &  & (0.38) & (0.48) & (0.55) &  & (0.37) & (0.47) & (0.60) \\
                    1.00 & 0.97 & [0.82, 1.10] & 0.89 & 0.89 & 0.87 & [0.80, 1.11] & 0.88 & 0.89 & 0.86 & [0.79, 1.10] & 0.85 & 0.87 & 0.86 \\
                    & (0.08) &  & (0.25) & (0.26) & (0.30) &  & (0.25) & (0.26) & (0.31) &  & (0.25) & (0.25) & (0.33) \\

         \hline
         \hline
    \end{tabular}
    }
    \caption{\scriptsize \textbf{Strong Identification}: A comparison of asymptotic and bootstrap coverage probabilities (unconditional and conditional) of the NGML estimates of the on-impact matrix, $\mathbf{B_0}$, under the identification scenario \textbf{ICA 1} (two non-Gaussian shocks and one Gaussian shock) in the trivariate SVAR model \ref{eq:model_param}, with sample size $T = 300$, based on ${N = 999}$ residual-based MBB (Section \ref{subsec:RMBB}) replications and ${N_S = 500}$ Monte Carlo simulations. The two non-Gaussian shocks are independent, NIG distributed with \textbf{excess kurtosis ~ 20 and 2}, respectively.\\
    The true parameter values, $\mathbf{B_{0}}$ and the NGML estimate, $\mathbf{\hat{B}_T}$ are reported in columns \textit{(a)} and \textit{(b)}. The unconditional coverage columns \textit{(c)-(f)} report the coverage probabilities based on all the DGPs of the simulation (unconditional on any pre-test or specification test).\\
    The conditional coverage (residuals' normality) columns \textit{(g)-(j)}  report the coverage probabilities conditional on the \textit{rejection} of the univariate Jarque-Bera normality test at 1\% significance level for \textit{all estimated VAR residual series} - $\hat{u}_{i,t}, \ i = 1,2,3$.\\ %61rej
    The bootstrap diagnostic is conducted by multivariate Doornik-Hansen test for normality at 1\% significance level over the number of bootstrap sequences, $S = 1000$, of length $M = (1/2)T^{3/5}$. The conditional coverage (bootstrap diagnostic) columns \textit{(k)-(n)} report the coverage probabilities conditional on the relative rejection rates (across the bootstrap sequences, ${\{\hat{B}_{T,1}^*, \hat{B}_{T,2}^*, \cdots, \hat{B}_{T,M}^*\}}_{s}, s = 1, 2, \cdots, S$) being \textit{equal to or lower than} the overall rejection rates (across the Monte Carlo replications). The rejection rate is computed at 1\% nominal significance level for the multivariate normality test.\\
    The coverage probabilities are reported for Percentile CIs (across the replications), Asymptotic CIs, Studentized Bootstrap CIs and Percentile Bootstrap CIs, respectively. All coverage probabilities are reported for nominal 90\% confidence levels. 
    }
    \label{tab:strongidencondcov}
\vspace*{\fill}
\end{table}
    % \end{adjustbox}
% \end{minipage}
% \end{sidewaystable}
\end{landscape}

%% file: Tables/OGdf515covcompare.tex
\begin{landscape}
    \begin{table}[p]
% \begin{sidewaystable}[H]
\vspace*{\fill}
    \centering 
% \rotatebox{90}{
% \begin{adjustbox}{scale=0.75}
    \renewcommand{\arraystretch}{1.5}
%     \centering
\scalebox{0.68}{
% \begin{minipage}{0.25\textwidth}
    
% \centering
 \begin{tabular}{|cc|cccc|cccc|ccc@{\hspace{2pt}}c|}
    \hline
    & & \multicolumn{4}{c|}{\textbf{Unconditional Coverage}} & \multicolumn{4}{c|}{\textbf{Conditional Coverage (Residuals' Non-Normality)}} & \multicolumn{4}{c|}{\textbf{Conditional Coverage (Bootstrap Diagnostic)}} \\
     $\mathbf{B_{0}}$ & $\mathbf{\hat{B}_T}$ & $\mathbf{90\%}$\textbf{CI} & \textbf{Asymp.} & \textbf{Stu. BootS} & \textbf{Perc. BootS} & $\mathbf{90\% }$\textbf{CI} & \textbf{Asymp.} & \textbf{Stu. BootS} &\textbf{Perc. BootS} & $\mathbf{90\% }$\textbf{CI} & \textbf{Asymp.} & \textbf{Stu. BootS} & \textbf{Perc. BootS} \\
        (a) & (b) & (c) & (d) & (e) & (f) & (g) & (h) & (i) & (j) & (k) & (l) & (m) & (n) \\
        \hline

                1.47 & 1.43 & [1.25, 1.60] & 0.84 & 0.74 & 0.75 & [1.24, 1.72] & 0.81 & 0.81 & 0.81 & [1.25, 1.60] & 0.86 & 0.75 & 0.76 \\
                & (0.10) &  & (0.32) & (0.32) & (0.36) &  & (0.38) & (0.45) & (0.43) &  & (0.32) & (0.32) & (0.36) \\
                0.92 & 0.89 & [0.57, 1.15] & 0.84 & 0.86 & 0.87 & [0.50, 1.27] & 0.75 & 0.88 & 0.94 & [0.57, 1.14] & 0.86 & 0.88 & 0.87 \\
                & (0.15) &  & (0.48) & (0.55) & (0.57) &  & (0.40) & (0.46) & (0.49) &  & (0.48) & (0.55) & (0.57) \\
                -0.75 & -0.73 & [-1.00, -0.42] & 0.84 & 0.83 & 0.87 & [-1.05, -0.69] & 0.69 & 0.75 & 0.94 & [-1.00, -0.42] & 0.86 & 0.85 & 0.88 \\
                & (0.15) &  & (0.47) & (0.52) & (0.54) &  & (0.39) & (0.50) & (0.53) &  & (0.47) & (0.52) & (0.54) \\
                0.20 & 0.18 & [-0.08, 0.47] & 0.83 & 0.88 & 0.93 & [-0.05, 0.59] & 0.69 & 0.75 & 0.81 & [-0.08, 0.48] & 0.84 & 0.90 & 0.94 \\
                & (0.18) &  & (0.51) & (0.61) & (0.64) &  & (0.43) & (0.62) & (0.64) &  & (0.51) & (0.61) & (0.65) \\
                1.00 & 0.93 & [0.40, 1.24] & 0.84 & 0.89 & 0.94 & [0.47, 1.35] & 0.88 & 0.88 & 0.75 & [0.48, 1.25] & 0.86 & 0.91 & 0.94 \\
                & (0.22) &  & (0.59) & (0.69) & (0.79) &  & (0.47) & (0.61) & (0.94) &  & (0.59) & (0.69) & (0.78) \\
                0.55 & 0.40 & [-0.75, 0.98] & 0.79 & 0.87 & 0.89 & [-0.26, 0.94] & 0.88 & 0.88 & 0.88 & [-0.72, 0.95] & 0.80 & 0.89 & 0.91 \\
                & (0.41) &  & (0.86) & (1.24) & (1.41) &  & (0.80) & (1.03) & (1.58) &  & (0.86) & (1.24) & (1.42) \\
                0.00 & 0.01 & [-0.31, 0.35] & 0.85 & 0.89 & 0.95 & [-0.22, 0.54] & 0.69 & 0.81 & 0.81 & [-0.31, 0.35] & 0.86 & 0.91 & 0.95 \\
                & (0.20) &  & (0.56) & (0.62) & (0.64) &  & (0.45) & (0.49) & (0.57) &  & (0.56) & (0.62) & (0.64) \\
                -0.60 & -0.44 & [-1.04, 0.72] & 0.78 & 0.88 & 0.89 & [-1.07, 0.22] & 0.69 & 0.69 & 0.81 & [-1.02, 0.69] & 0.79 & 0.90 & 0.90 \\
                & (0.39) &  & (0.87) & (1.27) & (1.45) &  & (0.84) & (1.28) & (1.53) &  & (0.87) & (1.26) & (1.46) \\
                1.00 & 0.92 & [0.42, 1.20] & 0.84 & 0.89 & 0.92 & [0.31, 1.10] & 0.94 & 0.94 & 0.94 & [0.51, 1.20] & 0.86 & 0.91 & 0.92 \\
                & (0.20) &  & (0.53) & (0.62) & (0.71) &  & (0.47) & (0.62) & (0.74) &  & (0.53) & (0.61) & (0.71) \\
                            
         \hline
         \hline
    \end{tabular}
    }
    \caption{\scriptsize \textbf{Weak Identification}: A comparison of asymptotic and bootstrap coverage probabilities (unconditional and conditional) of the NGML estimates of the on-impact matrix, $\mathbf{B_0}$, under the identification scenario \textbf{ICA 1} (two non-Gaussian shocks and one Gaussian shock) in the trivariate SVAR model \ref{eq:model_param}, with sample size $T = 300$, based on ${N = 999}$ residual-based MBB (Section \ref{subsec:RMBB}) replications and ${N_S = 500}$ Monte Carlo simulations. The two non-Gaussian shocks are independent, NIG distributed with \textbf{excess kurtosis ~ 3 and 0.75}, respectively.\\
    The true parameter values, $\mathbf{B_{0}}$ and the NGML estimate, $\mathbf{\hat{B}_T}$ are reported in columns \textit{(a)} and \textit{(b)}. The unconditional coverage columns \textit{(c)-(f)} report the coverage probabilities based on all the DGPs of the simulation (unconditional on any pre-test or specification test).\\
    The conditional coverage (residuals' normality) columns \textit{(g)-(j)}  report the coverage probabilities conditional on the \textit{rejection} of the univariate Jarque-Bera normality test at 1\% significance level for \textit{all estimated VAR residual series} - $\hat{u}_{i,t}, \ i = 1,2,3$.\\ %4rej
    The bootstrap diagnostic is conducted by multivariate Doornik-Hansen test for normality at 1\% significance level over the number of bootstrap sequences, $S = 1000$, of length $M = (1/2)T^{3/5}$. The conditional coverage (bootstrap diagnostic) columns \textit{(k)-(n)} report the coverage probabilities conditional on the relative rejection rates (across the bootstrap sequences, ${\{\hat{B}_{T,1}^*, \hat{B}_{T,2}^*, \cdots, \hat{B}_{T,M}^*\}}_{s}, s = 1, 2, \cdots, S$) being \textit{equal to or lower than} the overall rejection rates (across the Monte Carlo replications).  The rejection rate is computed at 1\% nominal significance level for the multivariate normality test.\\
    The coverage probabilities are reported for Percentile CIs (across the replications), Asymptotic CIs, Studentized Bootstrap CIs and Percentile Bootstrap CIs, respectively. All coverage probabilities are reported for nominal 90\% confidence levels. 
    }
        \label{tab:weakidencondcov}
% }
\vspace*{\fill}
\end{table}
    % \end{adjustbox}
% \end{minipage}
% \end{sidewaystable}
\end{landscape}

%% file: 06_EMP.tex
In this section, we consider an empirical illustration to demonstrate the potential of our bootstrap-based approach to verify the identification conditions for non-Gaussian SVARs, and investigate whether uncertainty is an exogenous source of business cycle fluctuations or an endogenous response to the dynamics of macroeconomic drivers.

We consider a trivariate SVAR which includes (i) a measure of macroeconomic uncertainty ($U_{Mt}$), (ii) a measure of real economic activity ($Y_{t}$), and (iii) a measure of financial uncertainty ($U_{Ft}$), adopted from \citet{ludvigson_uncertainty_2021} and \citet{angelini_uncertainty_2019}. To achieve identification, the former imposes a series of inequality restrictions to allow simultaneous feedback between uncertainty and real activity (shock-based restrictions), which involve identified shocks to have certain statistical properties during historically influential events, such as the Black Monday Crash of 1987 and the Global Financial Crisis of 2008. It also includes external variables and their correlations with uncertainty shocks, such as stock market returns and gold prices, to generate additional inequality constraints. On the other hand, the latter extends the approach of heteroscedasticity-based identification by exploiting breaks in the unconditional volatility and merges it with non-recursive zero restrictions.

Instead, we adopt identification based on non-Gaussianity and hence, do not require economic identification restrictions. The monthly data spans from 1960:07 to 2025:06 ($T = 780$). For a detailed exposition on the construction of uncertainty measures, we refer the reader to \citet{jurado_measuring_2015} and \citet{ludvigson_uncertainty_2021}. For the measure of real activity, we consider the log of real industrial production (FRED database). Based on AIC and BIC criteria, we use $p = 4$ lags in the VAR specification. Also, the multivariate Ljung-Box test for serial correlation (up to 12 lags) in the estimated residuals does not show evidence against the null of no serial correlation. For the non-Gaussian identification, we use the NGML estimator with the structural shocks assumed to follow NIG distribution\footnote{This choice is motivated by the empirical evidence suggesting that the identified shocks in \citet{ludvigson_uncertainty_2021} display significant skewness and excess kurtosis. Moreover, using the Student-$t$ distribution for the structural shocks we obtain similar results.}. Hence, the estimated reduced-form VAR is specified as:

\begin{equation}
    X_{t} = \Pi_0 + \Pi_{1}X_{t-1} + \Pi_{2}X_{t-2} + ... + \Pi_{4}X_{t-4} + u_{t}
    \label{eq:ludvig_VAR}
\end{equation} 
where $X_{t} = (U_{Mt}, Y_{t}, U_{Ft})'$ is the vector of measures of uncertainty and real activity, $B$ is the impact matrix and $u_{t} = (u_{1,t}, u_{2,t}, u_{3,t})'$ are the reduced-form innovations. We are interested in the dynamic impulse responses of $X_{t+h}$, at horizon $h = 0,1,2,\cdots$, to structural shocks, $(\varepsilon_{Mt}, \varepsilon_{Yt},\varepsilon_{Ft})'$ corresponding to macro uncertainty, real activity and financial uncertainty shocks, respectively. Once identified, we can compute the IRFs to one-standard deviation structural shocks as:

\begin{equation}
\underset{3 \times 3}{\hat{\Psi}_i} =  (R \ \hat{\mathbf{\Pi}}^i \ R') \ \hat{B}_T, \quad i = 0,1,2,\cdots,h,
\label{eq:IRF_emp}
\end{equation}
where, $R = [I_3 , 0_{3 \times 9}]$ is a selection matrix such that $R'R = I_3$, $\hat{\mathbf{\Pi}}$ is the companion form of the estimated VAR coefficient matrices, and $\hat{B}_T$ is the estimated mixing matrix from NGML. The instantaneous response i.e., response of $X_{t+h}$, $h = 0$, to one-standard deviation shock in $\varepsilon_t$ is captured by $B$  ($\hat{\Psi}_0 = \hat{B}_T$).

Table \ref{tab:ludvig_est} reports the estimates of the impact matrix $B$ (to one-standard deviation in structural shocks) and the excess kurtosis $\kappa$ for the structural shocks. The estimates show significant excess kurtosis in all three structural shocks, supporting the assumption of non-Gaussianity in the structural shocks.
\input{Tables/ludvigest.tex}

Before interpreting the dynamic responses of the variables to the estimated structural shocks, we assess the validity of the identification conditions through our bootstrap-based approach. Table \ref{tab:emp_test} reports the relative rejection frequencies of the multivariate Doornik-Hansen test for normality of $S = 1000$ bootstrapped estimate sequences,
${\{\hat{B}_{T,1}^*, \hat{B}_{T,2}^*, \cdots, \hat{B}_{T,M}^*\}}_{s},
s = 1, 2, \cdots, S$, for $T = 780$ and for different values of $M = 10,12,15$. The choice of $M$ is deliberately small relative to $T$. The results show that for all nominal significance levels, the rejection frequencies mostly follow the nominal levels, suggesting no evidence against the null hypothesis of valid identification conditions. This is consistent with the simulation results where under the null of valid identification conditions, the conditional (on the data) distribution of the $p$-values of the bootstrap tests converges to the Uniform distribution in probability, and the relative rejection frequencies are close to their corresponding significance levels\footnote{The rejection frequencies increase with $M$. For $T = 780$, the three choices $M \in \{10, 12, 15\}$ yield $M/T \in \{0.013, 0.015, 0.019\}$, all satisfying the joint divergence requirement. Though the rejection frequencies at $M = 15$ and the 5\% and 10\% significance levels (0.10 and 0.19, respectively) exceed their nominal levels, it does not imply test degeneracy since they have small-sample validity but suggest that even within the valid $M/T$ range, practitioners should prefer smaller values of $M$ when $T$ is moderate. This is consistent with our simulation evidence, where $M = (1/3)T^{3/5}$ provides better size control than $M = (1/2)T^{3/5}$ across both $T = 300$ and $T = 500$.}. This is consistent with the findings of \citet{ludvigson_uncertainty_2021} who document significant heavy tails (excess kurtosis) and skewness in the identified shocks.

\input{Tables/emp_test.tex}

  \textbf{IRFs:} Although exact identification is achieved through NGML, the statistically identified shocks lack direct economic interpretation. Consequently, we use sign restrictions to label the identified shocks. Consistent with the literature in which macroeconomic and financial uncertainty co-move positively \citep{jurado_measuring_2015, carriero_measuring_2018}, we assume that a positive shock to either uncertainty index elicits a positive response in both indices. To economically distinguish the two uncertainty shocks, we impose that macroeconomic uncertainty is countercyclical i.e., its response to a positive real activity shock is \emph{negative} \citep{bloom_impact_2009}, while leaving the response of financial uncertainty unrestricted. Figure \ref{fig:IRF_NGMLBS} reports the estimated IRFs of $X_{t}$ to one-standard deviation positive shocks in $\varepsilon_{Mt}$, $\varepsilon_{Yt}$ and $\varepsilon_{Ft}$ respectively, along with the 68\% bootstrap CIs (dashed red lines). The bootstrap-based CIs are constructed using residual-based MB bootstrap, with $N = 10,000$ replications and block length\footnote{We choose the block length, $l$ as the largest integer smaller than $5.03T^{1/4}$, in line with the current literature see \citet{jentsch_proxy_2016}, \citet{mertens_dynamic_2019}, \citet{angelini_identification_2024}. The results are robust to varying block lengths between $10$ and $40$.} = $26$.

First, the response of real activity to a financial uncertainty shock is
statistically insignificant on impact and turns significantly negative only with a
lag. This delayed effect is consistent with financial uncertainty
acting as an \emph{endogenous} response rather than an exogenous impulse, in contrast
to \citet{ludvigson_uncertainty_2021}, who find that an increase in financial
uncertainty is an exogenous impulse causing an immediate and persistent decline in
real activity. Notably, the immediate response of financial uncertainty to a real
activity shock is statistically insignificant too, so the procedure
independently recovers the contemporaneous zero restrictions $B_{FY}=B_{YF}=0$ of
\citet{angelini_uncertainty_2019} without imposing them. Second, a positive real activity shock lowers macroeconomic uncertainty significantly
and persistently, consistent with the documented counter-cyclicality of macro
uncertainty. Third, real activity declines significantly and persistently following a positive
macroeconomic uncertainty shock, although its immediate response is positive. This
short-run increase is consistent with \citet{ludvigson_uncertainty_2021} and with
\emph{growth-options} theories, in which a mean-preserving spread in risk raises expected profits and can induce firms
to invest and hire \citep{kraft_growth_2018, segal_good_2015}; the subsequent
persistent decline reflects the conventional contractionary effect of heightened
uncertainty.

Overall, non-Gaussian identification suggests that macroeconomic uncertainty acts as a
driver of business-cycle fluctuations whereas financial uncertainty behaves largely as an endogenous
response, affecting real activity only indirectly through its positive co-movement
with macroeconomic uncertainty.
\input{Figures/IRF_NGMLBS.tex}

%% file: Tables/ludvigest.tex
\begin{table}[H]
\centering
% \begin{adjustbox}{scale=0.75}
    \renewcommand{\arraystretch}{1.1}
%     \centering

\begin{tabular}{ccc @{\hspace{1.5cm}} lc}
\hline\hline
\multicolumn{3}{c}{$\hat{B}$} & \multicolumn{2}{c}{$\hat{\kappa}$} \\     
\hline
& & & Equations & \\
0.0100  & -0.0060 &   0.0027 & $U_{Mt}$ & 5.36  \\
(0.0008) & (0.0010) & (0.0005) & &    \\
0.0030  &  0.0073  &  -0.0004 & $Y_{t}$ & 5.04 \\
(0.0007) & (0.0005) & (0.0003) & &  \\
 0.0017 &  0.0008   & 0.0256 & $U_{Ft}$ & 9.27  \\
 (0.0009) &   (0.0008) &   (0.0015) && \\
\hline
\hline
\end{tabular}

\caption{Estimates of the contemporaneous on-impact matrix $B$ (to one-standard deviation in structural shocks) and excess kurtosis $\hat{\kappa}$ for the estimated structural shocks in the trivariate SVAR system $X_{t} = (U_{Mt}, Y_{t}, U_{Ft})'$  identified through ICA with NGML estimation with NIG distributed structural shocks along with standard errors in parentheses. The sample period is from 1960:07 to 2025:06 ($T = 780$) and the VAR lag length is set to $p = 4$.}
\label{tab:ludvig_est}
\end{table}

%% file: Tables/emp_test.tex
\begin{table}[H]
\centering

    \renewcommand{\arraystretch}{1}

\scalebox{1}{
\begin{tabular}{|c|ccc|}
\hline
 \textbf{Significance Levels}& \multicolumn{3}{c|}{\textbf{M}} \\
    & $\mathbf{10}$ & $\mathbf{12}$ & $\mathbf{15}$ \\
 \hline
 0.01 & 0.021 & 0.024 & 0.030 \\
 0.02 & 0.034 & 0.036 & 0.055 \\
 0.03 & 0.042 & 0.051 & 0.080 \\
 0.05 & 0.055 & 0.072 & 0.103 \\
 0.10 & 0.110 & 0.125 & 0.192 \\   
 \hline
\end{tabular}
}
\caption{Relative rejection frequencies of multivariate Doornik-Hansen tests for normality of $S = 1000$ bootstrapped estimate
sequences, ${\{\hat{B}_{T,1}^*, \hat{B}_{T,2}^*, \cdots, \hat{B}_{T,M}^*\}}_{s},
s = 1, 2, \cdots, S$, for $T = 780$ and for different values of $M = 10,12,15$. The choice of $M$ is consciously small relative to $T$. The results are for the trivariate SVAR system $X_{t} = (U_{Mt}, Y_{t}, U_{Ft})'$ identified through ICA with NGML estimation with NIG-distributed structural shocks, using monthly data from 1960:07 to 2025:06. The bootstrap estimates are computed using the residual-based moving block bootstrap, with number of replications, $N = 10,000$, and block length = $26$. The nominal significance levels are reported in the first column, with the corresponding relative rejection frequencies reported in the subsequent columns for different choices of $M$.}
\label{tab:emp_test}
\end{table}

%% file: Figures/IRF_NGMLBS.tex
\begin{figure}[H]
    \centering 
    \includegraphics[scale = 0.60]{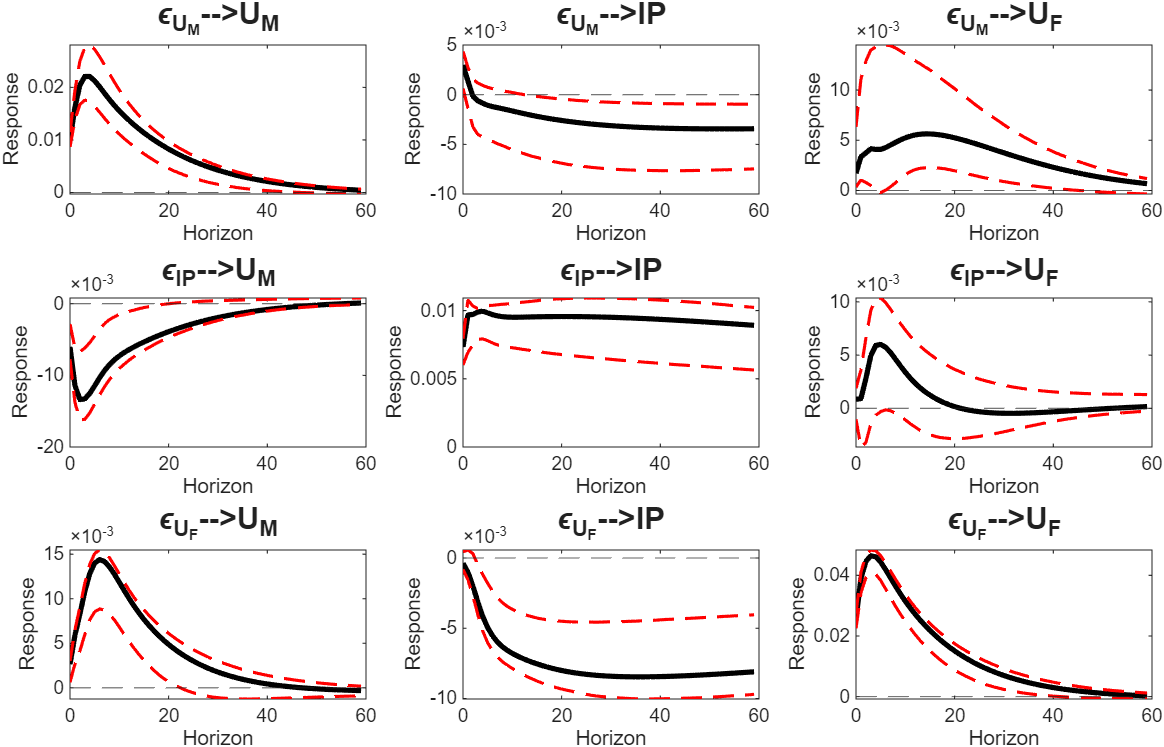}
    \caption{Estimated IRFs of the trivariate SVAR system $X_{t} = (U_{Mt}, Y_{t}, U_{Ft})'$  identified through ICA with NGML estimation with NIG-distributed structural shocks, along with the bootstrap-based confidence intervals (CIs) at $68\%$ level (dashed red lines). The solid lines report the impulse responses of $X_t$ to one-standard deviation positive shock. The bootstrap-based CIs are constructed using residual-based MB bootstrap, with $N = 10,000$ replications. The $x$-axis represents the time horizon (in months) and the $y$-axis represents the magnitude of the impulse responses. The sample period is from 1960:07 to 2025:06 ($T = 780$) and the VAR lag length is set to $p = 4$.}
    \label{fig:IRF_NGMLBS}
\end{figure}

%% file: 11_con.tex
This paper develops a bootstrap-based approach to assess the identification conditions
of non-Gaussian SVARs, in which identification through independent shocks
requires that at most one structural shock be Gaussian. Since the reduced-form
innovations are linear mixtures of the structural shocks, their non-Gaussianity is uninformative about how many of the
underlying shocks are Gaussian. This implies that the conventional approach of pre-testing the reduced-form innovations for
normality cannot detect the failure of identification due to extra Gaussian shocks. We
instead propose a bootstrap-based approach where under valid identification and maintained regularity conditions, the
conditional bootstrap distribution of the NGML impact-matrix estimator is
asymptotically normal, and it departs from normality when identification
fails. The verification thus reduces to a test of the normality of the
bootstrap replications of the impact matrix.

We provide three theoretical results which underpin the diagnostic. First, the diagnostic is valid
across the entire null: even in the single-Gaussian configuration that lies on the
boundary of the null, where the full-parameter information matrix is singular but the impact-matrix
estimator retains a non-singular profile information and remains consistent
and asymptotically normal (Proposition~\ref{prop:onegaussian}). Second, by letting the
number of bootstrap replications and the sample size diverge jointly (at an appropriate rate) rather than
sequentially, the test statistic is asymptotically pivotal and ancillary, so
that conditioning on the diagnostic does not induce any pre-testing bias. Third, a univariate version of the diagnostic
localizes, in large samples, which shocks are responsible for an identification
failure, extending its use to partially identified non-Gaussian SVARs.

The Monte Carlo evidence, based on Normal-inverse Gaussian shocks, shows that the
diagnostic attains near-nominal size under valid identification and power that
increases in both the sample size and the number of bootstrap replications, whereas
conventional residual-normality tests cannot discriminate between valid and invalid specifications.
Under weak identification with a near-Gaussian shock, conditioning on the diagnostic
preserves the probability coverage of the estimates, while conditioning on residual pre-tests substantially
distorts it. Finally, in an empirical illustration investigating the relationship between business cycle fluctuations and macro-financial uncertainty measures, the diagnostic finds no evidence against the validity of non-Gaussian
identification. The implied impulse responses suggest that macroeconomic
uncertainty acts as an exogenous driver of real activity while financial uncertainty
behaves largely as an endogenous response.

Several extensions remain open. The finite-sample power of the diagnostic is governed
by the size-power trade-off intrinsic to the joint regime of $M$ and $T$, and its behavior against
local, near-Gaussian alternatives requires further research. The
framework also allows extension beyond the fully independent, likelihood-based
setting considered here, for instance, to moment-based estimators that relax complete
independence, for which analogous
bootstrap diagnostic could be developed to verify their identifying assumptions.

%% file: 12_app.tex
\section{Preliminaries} \label{appendix:a} 
We denote the probability measure for the data as $\mathbb{P}$, with $\mathbb{E}(.)$ and $\operatorname{Var}(.)$ denoting the expectation and variance operators with respect to $\mathbb{P}$. Let $\mathbb{E}^*(.)$ and $\operatorname{Var}^*(.)$ denote the expectation and variance operators with respect to the bootstrap probability measure $\mathbb{P}^*$, conditional on the observed data. For $\epsilon >0$, define $\delta^*(\epsilon) \coloneqq \mathbb{P}^*(\|\hat{\lambda}_T^* - \hat{\lambda}_T \| > \epsilon)$, where $\hat{\lambda}_T^*$ is the bootstrap analog of the NGML estimator $\hat{\lambda}_T$, and $\|.\|$ is the Euclidean norm. We denote $\hat{\lambda}_T^* - \hat{\lambda}_T \xrightarrow{p^*}_p 0$, as $\hat{\lambda}_T^* - \hat{\lambda}_T$ convergences in $p^*$-probability to $0$, in probability, if for every $\epsilon >0$, $\delta^*(\epsilon) \xrightarrow{p} 0$, as $T \rightarrow \infty$. Similarly, consider a random variable $X_T$, with CDF denoted as $G_{X_T}(x) \coloneqq P(X_T \leq x)$, and let $X_T^*$ be its bootstrap analog, with CDF denoted as $G_{X_T}^*(x) \coloneqq \mathbb{P}^*(X_T^* \leq x)$, conditional on the observed data. We say that $X_T^* \xrightarrow{d^*}_p X_T$, as $X_T^*$ converges in conditional distribution to $X_T$, in probability, if for every continuity point $x$ of $G_{X_T}(x)$, $G_{X_T}^*(x) \xrightarrow{p} G_{X_T}(x)$, as $T \rightarrow \infty$. 

\subsection{Regularity Conditions for NGML Estimation and Bootstrap Validity:} \label{subsec:regcond}

    \begin{enumerate}
    \item The distribution of the structural shock, $\varepsilon_{i,t}$ has a density $f_{i,\sigma_i}(x;\theta_i) = \sigma_i^{-1}f_i(\sigma_i^{-1}x;\theta_i)$
    which depends on the parameter vector $\theta_i$.

    \item The true parameter value $\lambda_0$ belongs to the parameter space $\Theta = \Theta_\Pi \times \Theta_\beta \times \Theta_\sigma \times
    \Theta_\theta$, where, (i) $\Theta_\Pi = \mathbb{R}^n \times \Theta_{\Pi_2}$ with $\Theta_{\Pi_2} \subseteq \mathbb{R}^{n^2p}$ such that the VAR stability
    condition holds for every $\Pi_2 \in \Theta_{\Pi_2}$, (ii)$\Theta_\beta = \operatorname{vecd}(\mathcal{B})$\footnote{$\mathcal{B}$ denotes the set of $n \times n$ 
    non-singular
    matrices which have a specific permutation and scaling transformation to identify a unique $B$ matrix from each equivalence class corresponding
    to observationally equivalent SVAR processes, for details see \citet{ilmonen_semiparametrically_2011} and \citet{lanne_identification_2017}.} 
    $= \{\beta \in \mathbb{R}^{n(n-1)}: \beta = \operatorname{vecd}(B)
    \text{for some } B \in \mathcal{B}\}$, (iii)$\Theta_\sigma = \mathbb{R}^{n}_{+}$ and (iv)
    $\Theta_\theta = \Theta_{\theta_1} \times \cdots \times \Theta_{\theta_n}$,
    where 
    $\Theta_{\theta_i} = \left\{ (\alpha_i, \gamma_i, \delta_i, \mu_i) \in 
    (0,\infty) \times \mathbb{R} \times (0,\infty) \times \mathbb{R} \; : \; |\gamma_i| < \alpha_i, \alpha_i\delta_i \geq c \right\}, \text{for some} \ c > 0, 
    i = 1,2,\cdots,n.$

    \item  $f_i(x;\theta_i) > 0$ and $f_{i}(x;\theta_i)$ is twice continuously differentiable 
    with respect to $x$ and $\theta_i$, for all $x \in \mathbb{R}$ and
     $\theta_i \in \Theta_{\theta_i}, i = 1,2,\cdots,n$.

     \item  $ \int |\nabla f_{i,x}(x;\theta_{i,0})| dx < \infty$, where $\nabla f_{i,x}(x;\theta_{i,0}) = {\partial f_i(x;\theta_{i,0})}/{\partial x_i}$.
    
     \item For all $x \in \mathbb{R}$, $x^2\frac{\nabla f_{i,x}^2(x;\theta_{i,0})}{f_i(x;\theta_{i,0})}, \quad \frac{|| \nabla f_{i,\theta_i}(x;\theta_{i,0})||^2}{f_i^2(x;\theta_{i,0})}$ are dominated by 
     $c_1(1  + |x|^{c_2})$ \text{ for some constants } $c_1 > 0$ and $c_2 > 0$ \text{ and } $\int |x|^{c_2}f_i(x;\theta_i,0) dx < \infty$, where
     $\nabla f_{i,\theta_i}(x;\theta_{i,0}) = {\partial f_i(x;\theta_{i,0})}/{\partial \theta_i}$.

     \item $\int \text{sup}_{\theta_i \in \Theta_{0,\theta_i}} ||\nabla f_{i,\theta_i}(x;\theta_{i,0})|| dx < \infty $.
     
     \item The matrix $\mathbb{E}[\nabla \ell_{\lambda,t}(\lambda_0)\nabla \ell'_{\lambda,t}(\lambda_0)]$ is positive definite,
    where $\nabla \ell_{\lambda,t}(\lambda_0) = {\partial \ell_t(\lambda_0)}/{\partial \lambda}$.
    \end{enumerate}

\subsection{Residual-based Moving Block Bootstrap Algorithm:}\label{subsec:RMBB}

    \begin{enumerate}
        \item Compute the residuals from the fitted VAR model (in companion form), 
        
        \begin{equation}
            \hat{u}_t = Y_t - \hat{\Pi}_1 Y_{t-1}, \ t = 1,2,\cdots,T
        \end{equation} 
        
        where $\hat{\Pi}_1$ is the $n \times n$ estimate of the autoregressive parameters, and obtain the NGML estimates of $\hat{\beta}_T$ (the off-diagonal elements of column-normalized impact matrix $B$) and $\hat{\sigma}_T$; consequently obtaining the estimate of the impact matrix, $\hat{B}_T \coloneqq B(\hat{\beta}_T)\operatorname{diag}(\hat{\sigma}_T)$, by maximising the log-likelihood function \eqref{eq:loglik} and \eqref{eq:lik}, see Section {\ref{sec:ICA}}.

        \item Choose a block length\footnote{We fix the block length in the residual-based MBB algorithm to the largest integer
        smaller than $5.03T^{1/4}$, see \citet{hall_blocking_1995}, \citet{jentsch_proxy_2016}, \citet{jentsch_dynamic_2019}, \citet{mertens_dynamic_2019} and
        \citet{angelini_identification_2024}. As a robustness check,  block lengths of $l=15,35$ (versus the rule-implied
        $\lfloor 5.03\cdot 500^{1/4}\rfloor=23$) yield similar results.}, $l \leq T$ and let  $L= \lfloor T/l \rfloor$ be the number of blocks, such that $lL \geq T$. 
        Form ($n \times l$)-dimensional blocks of residuals, $\hat{U}_{i,l} = [\hat{u}_{i+1}, \hat{u}_{i+2}, \cdots, \hat{u}_{i+l}], i = 0, \cdots, T-l$,
        and let $i_0,\cdots,i_{L-1}$ be i.i.d. random variables uniformly distributed on {$0,1,2,\cdots, T-l$}.

        \item Arrange the blocks $\hat{U}_{i_0,l},\cdots, \hat{U}_{i_{L-1},l}$ end to end, and discard the last $(lL -T)$ residuals, and obtain the bootstrap residuals,
        $(\hat{u}_1^*, \hat{u}_2^*, \cdots, \hat{u}_T^*)$.

        \item Center the bootstrap residuals, $(\hat{u}_1^*, \hat{u}_2^*, \cdots, \hat{u}_T^*)$, by computing:
        \begin{equation}
            \hat{u}_{jl+s}^{*} = \hat{u}_{jl+s}^* - \mathbb{E}^*(\hat{u}_{jl+s}^*) = \hat{u}_{jl+s}^* - \frac{1}{T-l+1}\sum_{r=0}^{T-l}\hat{u}_{s+r}
        \end{equation}
        for $s = 1,2,\cdots, l$, $j = 0,1,\cdots, L-1$ and $t = 1,2,\cdots,T$ to get $\mathbb{E}^*(u_t^*) = 0$.

        \item Set bootstrap pre-sample values (here, $Y_0^*$) equal to zero, and generate the bootstrap time series as:
         $Y_t^* = \hat{\Pi}_1 Y_{t-1}^* + \hat{u}_t^{*}, \quad t = 1,2,\cdots,T.$ 
         \item Compute the residuals from the bootstrap fitted VAR model, 
         \begin{equation}
            \hat{u}_t^{**} = Y_t^* - \hat{\Pi}_1^* Y_{t-1}^*, \ t = 1,2,\cdots,T
        \end{equation}
        where, $\hat{\Pi}_1^*$ is the $n \times n$ bootstrap analog of $\hat{\Pi}_1$.

                \item Using the bootstrap residuals $\hat{u}_t^{**}$, compute the bootstrap NGML
                estimates
                \begin{equation}
                \hat{\lambda}_T^{*}
                = \arg\max_{\beta,\sigma,\theta}\ \mathcal{L}^{*}_T(\lambda),
                \qquad
                \mathcal{L}^{*}_T(\lambda) \coloneqq T^{-1}\sum_{t=1}^{T}\ell_t^{*}(\lambda),
                \end{equation}
                \begin{equation}
                \ell_t^{*}(\lambda)
                = \sum_{j=1}^{n}\log f_j\!\bigl(\sigma_j^{-1}\imath_j'B(\beta)^{-1}\hat{u}_t^{**};\,\theta_j\bigr)
                    \;-\;\log\det B(\beta)\;-\;\sum_{j=1}^{n}\log\sigma_j .
                \end{equation}
                This is the bootstrap analog of \eqref{eq:lik}: the bootstrap enters \emph{only}
                through the resampled residuals $\hat{u}_t^{**}$. The bootstrap analog is then
                $\hat{B}_T^{*}=B(\hat{\beta}_T^{*})\operatorname{diag}(\hat{\sigma}_T^{*})$.

                \item Repeat steps 2-7 independently to generate the bootstrap sample of impact
                    matrix $\{\hat{B}^{*}_{T,b}\}_{b=1}^{N}$, equivalently the standardized statistics
                    $\{Q^{*}_{T,b}\}_{b=1}^{N}$ with
                    \begin{equation}
                     Q^{*}_{T,b}=\hat{\Sigma}^{-1/2}_{B_T}T^{1/2}(\hat{B}^{*}_{T,b}-\hat{B}_T)   
                    \end{equation}
                    where $\hat{\Sigma}^{-1/2}_{B_T}$ is the estimated covariance of the impact matrix, see \eqref{eq:SigmaB}. The empirical bootstrap distribution $\hat{G}^{*}_{T,N}$ is formed from these draws as in \eqref{eq:edf}. For
                    confidence intervals and coverage we use $N$ replications; for the diagnostic
                    (Section~\ref{sec:BSdiag}) a normality test is applied to $M$ replications per test
                    sequence, and $S$ such sequences are used to summarize the empirical distribution of the $p$-values.

    \end{enumerate}

\section{Proofs}\label{appendix:Proofs}
\subsection{Proof of Proposition \ref{prop:onegaussian}}\label{proof:onegaussian}
Proof of Lemma \ref{lem:profileinfo}:
\subsubsection{Singularity and the null space of the shape block}
By \eqref{eq:NIGpdf} the NIG density of
shock $k$ converges to $\mathcal{N}(0,\sigma_k^2)$ as $\alpha_k\to\infty$ with
$\gamma_k=0$ and $\delta_k=\sigma_k^2\alpha_k$. In this limit, the shock depends on
$(\alpha_k,\delta_k)$ only through $\delta_k/\alpha_k=\sigma_k^2$. We can represent
$\rho_k\coloneqq1/\alpha_k\in[0,\infty)$, so the Gaussian limit is the
\emph{finite} boundary point $\rho_k=0$. Along the curve $\{(\alpha_k,\delta_k):\delta_k/\alpha_k=\sigma_k^2\}$, the NIG
density of shock $k$ approaches $\mathcal{N}(0,\sigma_k^2)$ with excess kurtosis
$3/(\delta_k\alpha_k)=3/(\sigma_k^2\alpha_k^2)=O(\rho_k^2)$ and all higher
cumulants of order $O(\rho_k^2)$.  The shock's deviation from the Gaussian is therefore
\emph{second order} in $\rho_k=1/\alpha_k$, so the first derivative of
$f_k$  along this curve vanishes at $\rho_k=0$.
Together with the $\gamma_k$-symmetry direction (the $\gamma_k$-score is odd and
vanishes at $\gamma_k=0$), this defines a non-trivial subspace $\mathcal{N}$ of
shape directions of shock $k$ along which the directional score satisfies
$s_{\theta,t}\,'v = 0$ a.s.\ at $\lambda_0$, so
$\mathcal{I}_{\theta\theta}\mathcal{N}=\{0\}$, i.e.\ $\mathcal{N}=\ker\mathcal{I}_{\theta\theta}$.
For the remaining $n-1$ non-Gaussian shocks the NIG family with finite tail index
is regular, so their shape blocks are non-singular. This proves (i).

\subsubsection{Cross-information block annihilation} The $\psi$-score
$s_{\psi,t}=\partial\ell_t/\partial\psi$ depends on $\theta$ only through the shock $\epsilon_t(\psi)$ and the densities $f_j(\cdot;\theta_j)$.
At $\lambda_0$, shock $k$'s density is exactly $\mathcal{N}(0,\sigma_k^2)$ and, by
(i), is exactly invariant along $\mathcal{N}$: the score
$s_{\theta,t}\,'v = 0$ a.s.\ identically for $\psi$ in a neighborhood of
$\psi_0$. Differentiating with respect to $\psi$ gives
$(\partial s_{\psi,t}/\partial\theta)\,v=0$ a.s., where
$\mathcal{I}_{\psi\theta}v=-\mathbb{E}[\partial s_{\psi,t}/\partial\theta]\,v=0$;
and $\mathcal{J}_{\psi\theta}v=\mathbb{E}[s_{\psi,t}\,(s_{\theta,t}'v)]=0$ since
$s_{\theta,t}'v = 0$. The annihilation
$\mathcal{I}_{\psi\theta}\mathcal{N}=\mathcal{J}_{\psi\theta}\mathcal{N}=\{0\}$ is
therefore \emph{exact}. Since the true parameter lies at the boundary
$\alpha_k=\infty$, there is no $O(\alpha_k^{-1})$ remainder (the leading departure
from invariance is $O(\alpha_k^{-2})$ and vanishes at first order). This proves (ii).

\subsubsection{Schur-complement information}
The information matrix $\mathcal{I}_{\theta\theta}$ is symmetric and positive semidefinite, so
$\operatorname{range}(\mathcal{I}_{\theta\theta})=\mathcal{N}^{\perp}$, the orthogonal complement of the null space. By (ii)
every row of $\mathcal{I}_{\psi\theta}$ is orthogonal to $\mathcal{N}$, i.e.\
$\operatorname{range}(\mathcal{I}_{\theta\psi})\subseteq\operatorname{range}(\mathcal{I}_{\theta\theta})$,
which is why
$\mathcal{I}_{\psi\theta}\mathcal{I}_{\theta\theta}^{+}\mathcal{I}_{\theta\psi}$
is invariant to the choice of generalised inverse and equals its value at the
Moore--Penrose inverse. Hence, $\mathcal{I}_\psi$ is well-defined. For positive
definiteness, $\mathcal{I}_\psi$ is the information for $\psi$ in the model with
$\theta$ concentrated out along its identified directions. The impact matrix, equivalently $\psi$ up to the fixed signed column permutation, is point-identified
whenever at most one shock is Gaussian,
and the $n-1$ non-Gaussian shocks supply strictly negative criterion curvature in
every rotation direction in which they participate (the local-concavity condition
of \citet{gourieroux_statistical_2017} (Assumption A.4), holds for each
non-Gaussian pair). No non-zero $\psi$-direction therefore
lies in the null space of the profile curvature, so $\mathcal{I}_\psi\succ0$. This
proves (iii).

\begin{lemma}[Parity cancellation at the Gaussian boundary]\label{lem:parity}
Under $\mathcal{H}_0$ with shock $k$ Gaussian and the other shocks independent and
non-Gaussian, $\mathbb{E}[\partial^3\ell_t/\partial\psi\,\partial\rho_k^2|_{\rho_k=0}]=0$.
\end{lemma}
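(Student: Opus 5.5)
The plan is to write the log-likelihood contribution of shock $k$ as an explicit expansion in $\rho_k$ around the Gaussian boundary, and read off the mixed derivative $\partial^3\ell_t/\partial\psi\,\partial\rho_k^2$ at $\rho_k=0$. Along the curve $\gamma_k=0$, $\delta_k=\sigma_k^2\alpha_k$, the NIG density admits (via the large-argument expansion of $K_1$, or equivalently an Edgeworth/cumulant expansion using that all cumulants of order $\ge 3$ are $O(\rho_k^2)$ and odd cumulants vanish by symmetry) the representation
\begin{equation*}
\log f_k(x;\rho_k)=\log\phi_{\sigma_k}(x)+\rho_k^2\,h(x/\sigma_k)+o(\rho_k^2),
\end{equation*}
where $h$ is an \emph{even} polynomial in its argument, proportional to the fourth Hermite polynomial $H_4$ (the excess-kurtosis term $3/(\sigma_k^2\alpha_k^2)$ times $H_4/24$). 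Then $\partial^2\ell_t/\partial\rho_k^2|_{\rho_k=0}=2h(\varepsilon_{k,t}(\psi))$, where $\varepsilon_{k,t}(\psi)=\sigma_k^{-1}\imath_k'B(\beta)^{-1}\hat u_t$, since the remaining terms of $\ell_t$ in \eqref{eq:lik} do not involve $\rho_k$. Interchanging derivative and expectation is justified by the dominance conditions (5)--(6) of Appendix~\ref{subsec:regcond} and finite moments of all orders of the shocks.

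Next I differentiate with respect to $\psi$: $\partial^3\ell_t/\partial\psi\,\partial\rho_k^2=2h'(\varepsilon_{k,t})\,\partial\varepsilon_{k,t}/\partial\psi$. The derivative of $\varepsilon_{k,t}$ is linear in the reduced-form innovation (for $\beta,\sigma$) or in lagged $Y$ (for $\Pi$). For the $\beta$ direction, $\partial\varepsilon_{k,t}/\partial\beta$ is a linear combination $\sum_j c_j\varepsilon_{j,t}$ of the true shocks; for $\sigma_k$ it is $-\varepsilon_{k,t}/\sigma_k$; for $\Pi$ it is a linear function of $Y_{t-1},\dots$, which are independent of $\varepsilon_t$. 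Hence the expectation splits into (a) terms $\mathbb{E}[h'(\varepsilon_k)]\,\mathbb{E}[\varepsilon_j]=0$ for $j\neq k$ by independence and mean zero, (b) terms $\mathbb{E}[h'(\varepsilon_k)]\mathbb{E}[Y_{t-s}]$ where $h'$ is odd so $\mathbb{E}[h'(\varepsilon_k)]=0$ by Gaussian symmetry, and (c) the diagonal term $\mathbb{E}[h'(\varepsilon_k)\varepsilon_k]$. This is the ``parity'' structure: every term carries an odd function of the Gaussian $\varepsilon_k$ unless it is paired with $\varepsilon_k$ itself.

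The main obstacle is term (c), $\mathbb{E}[\varepsilon_k h'(\varepsilon_k)]$, which is an even integrand and does not vanish by parity alone. Here I will use the Hermite structure: by Gaussian integration by parts (Stein's identity), $\mathbb{E}[\varepsilon_k h'(\varepsilon_k)]=\sigma_k^{2}\mathbb{E}[h''(\varepsilon_k)]$, and for $h\propto H_4(x/\sigma_k)$ one has $H_4''\propto H_2$, with $\mathbb{E}[H_2]=0$. Equivalently, the $\rho_k^2$ correction must preserve the variance along the chosen curve ($\delta_k/\alpha_k=\sigma_k^2$ fixes the variance), so its second-moment projection is zero. If the exact NIG expansion contains additional $H_2$ or constant pieces at order $\rho_k^2$ (from normalization), I will check that the constant is killed by $\psi$-differentiation and the $H_2$ coefficient is zero precisely because the variance is held at $\sigma_k^2$; this bookkeeping of the $K_1$ expansion is the step I expect to need care. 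Combining (a)--(c) yields $\mathbb{E}[\partial^3\ell_t/\partial\psi\,\partial\rho_k^2|_{\rho_k=0}]=0$.
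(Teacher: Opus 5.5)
Your proposal is correct and follows the paper's own route. You expand $\log f_k$ to order $\rho_k^2$ with an $He_4$ term, differentiate through $\varepsilon_{k,t}(\psi)$ to obtain $He_3(\varepsilon_k)\,\partial\varepsilon_k/\partial\psi$, dispose of the cross-shock and autoregressive terms by independence, and kill the diagonal term by Hermite orthogonality, which your Stein step ($\mathbb{E}[\varepsilon h'(\varepsilon)]=\mathbb{E}[h''(\varepsilon)]\propto\mathbb{E}[He_2(\varepsilon)]=0$) just restates. Your fallback observation is a useful robustness check: all that is really needed is that the $\rho_k^2$ correction preserves normalization, mean and variance along $\delta_k/\alpha_k=\sigma_k^2$. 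This matters because ``all cumulants of order $\ge 3$ are $O(\rho_k^2)$'' by itself would not pin the coefficient to a pure $He_4$; it is the $K_1$ expansion that does so, and the conclusion does not depend on it.
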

\begin{proof}
The order-$\rho_k^2$ term of $\log f_k$ is $c_0 He_4(\varepsilon_k)+o(\rho_k^2)$ for a
constant $c_0$\footnote{$He_n$ denotes the Hermite polynomial of
degree $n$, $He_n(x)=(-1)^n\phi(x)^{-1}\mathrm{d}^n\phi(x)/\mathrm{d}x^n$ with
$\phi$ the standard normal density. In particular, $He_3(x)=x^3-3x$ and
$He_4(x)=x^4-6x^2+3$. They follow $He_n'(x)=n\,He_{n-1}(x)$ and, for
$\varepsilon\sim\mathcal{N}(0,1)$, the orthogonality
$\mathbb{E}[He_m(\varepsilon)He_n(\varepsilon)]=n!\,\delta_{mn}$, so
$\mathbb{E}[He_n(\varepsilon)\,p(\varepsilon)]=0$ for every polynomial $p$ of degree
below $n$.}, so $\frac{\partial^2\ell_t}{\partial\rho_k^2}=2c_0He_4(\varepsilon_k)$ and
$\frac{\partial^3\ell_t}{\partial\psi\,\partial\rho_k^2}|_{\psi_0}=2c_0\,He_4'(\varepsilon_k)\,\frac{\partial\varepsilon_k}{\partial\psi}
=8c_0\,He_3(\varepsilon_k)\,\frac{\partial\varepsilon_k}{\partial\psi}$. The score $\frac{\partial\varepsilon_k}{\partial\psi}$ is, for the impact and scale
blocks $(\beta,\sigma)$, a linear function of the structural shocks,
and, for the autoregressive block, a function of the lagged data $Y_{t-1}$
independent of the current shock $\varepsilon_{k,t}$. In the first case
$\mathbb{E}[He_3(\varepsilon_k)\,\frac{\partial\varepsilon_k}{\partial\psi}]=0$ because $He_3$ is
orthogonal under the standard Gaussian distribution to every polynomial of degree at most
two, so $\mathbb{E}[He_3(\varepsilon_k)\,\varepsilon_{m}]=0$ for all $m$ (including
$m=k$); in the second, $\mathbb{E}[He_3(\varepsilon_{k,t})\,\partial\varepsilon_{k,t}/\partial\psi]
=\mathbb{E}[He_3(\varepsilon_{k,t})]\,\mathbb{E}[\partial\varepsilon_{k,t}/\partial\psi]=0$
by independence and $\mathbb{E}[He_3]=0$. Hence, $\bar c=0$.
\end{proof}

\subsubsection{Proof of Proposition \ref{prop:onegaussian}}
Under the representation $\rho_k=1/\alpha_k$ the Gaussian limit is the finite
boundary point $\rho_k=0$ and the criterion is smooth in $\psi$ for $\theta$ in a
neighborhood of this point.
\begin{enumerate}
    \item At $\rho_k=0$, the NIG density departs from the Gaussian limit case at order $\rho_k^2$ (its
    excess kurtosis is $3\rho_k^2/\sigma_k^2$). Hence, the $\rho_k$-score vanishes and the
    information for $\rho_k$ is zero: the shape parameter is locally unidentified and is
    estimated at the slow rate $\hat\rho_{k,T}=O_p(T^{-1/4})$. The boundary theory of
    \citet{andrews_estimation_1999}, which requires a non-singular information at the
    boundary, therefore does not apply, and we argue directly.

    \item Since $\mathcal{I}_{\psi\rho_k}=0$ (Lemma~\ref{lem:profileinfo}(ii)) and the
    cross-Hessian $\partial^2\ell_t/\partial\psi\,\partial\rho_k$ vanishes at $\rho_k=0$,
\begin{equation}
  \Delta s_\psi=\tfrac12\,\bar c\,\hat\rho_{k,T}^{2}+o_p(\hat\rho_{k,T}^{2}),
  \qquad
  \bar c=\mathbb{E}\!\left[\partial^3\ell_t/\partial\psi\,\partial\rho_k^2\big|_{\rho_k=0}\right].    
\end{equation}
The leading correction is the kurtosis term, proportional to $He_4(\varepsilon_k)$, so
$\partial^3\ell_t/\partial\psi\,\partial\rho_k^2|_{0}\propto
He_3(\varepsilon_k)\,\partial\varepsilon_k/\partial\psi$. By Lemma~\ref{lem:parity},
$\bar c=0$, because $He_3$ is orthogonal under the Gaussian distribution to the degree-one
sensitivities $\partial\varepsilon_k/\partial\psi$. Thus, $\sqrt{T}\,\Delta s_\psi=o_p(1)$:
the zero-information nuisance does not enter the distribution of $\hat\psi_T$.

\item Finally, since profiling out $\rho_k$ perturbs the $\psi$-score by $o_p(T^{-1/2})$,
the profile score for $\psi$ is asymptotically equivalent to the score evaluated at a
known $\rho_k$. The estimation of $\psi$ is then a regular ML object with non-singular profile information $\mathcal{I}_\psi$
of Lemma~\ref{lem:profileinfo}(iii); standard QMLE asymptotics from
\citet{white_maximum_1982} gives us
\[
  T^{1/2}(\hat\psi_T-\psi_0)\xrightarrow{d}
  \mathcal{N}\!\bigl(0,\;\mathcal{I}_\psi^{-1}\mathcal{J}_\psi\mathcal{I}_\psi^{-1}\bigr),
\]
where the profile-likelihood construction follows \citet{murphy_profile_2000}, and the delta method through $B(\beta)\operatorname{diag}(\sigma)$ yields
\eqref{eq:Bblock_normal}. When all shocks are non-Gaussian the model is a correctly
specified, the information identity gives
$\mathcal{I}_\psi=\mathcal{J}_\psi$, and $\mathcal{V}_B$ collapses to \eqref{eq:SigmaB}.
\end{enumerate}

\subsection{Proof of Proposition \ref{prop:BSvalidity}} \label{subsec:proofBSvalidity}

We collect all the model parameters in a vector form as $\lambda \coloneqq (\operatorname{vec}(\Pi), \operatorname{vecd}(B), \sigma, \theta)$, where $\operatorname{vec}(\Pi)$ is a vector of autoregressive coefficients, $\operatorname{vecd}(B)$ is a vector containing the off-diagonal elements of on-impact matrix $B$, $\sigma$ is a vector of standard deviations of shocks' distributions and $\theta = (\theta_1', \theta_2', \cdots, \theta_n')'$ is a vector collecting distributional parameters of each of the $n$ independent, structural shocks, see Section \ref{subsec:NGMLest}. Let $\lambda_0$ be the true parameter vector and $\hat{\lambda}_T$ be the NGML estimator defined in (\ref{eq:NGMLestimator}) and (\ref{eq:BT}), and its bootstrap analog $\hat{\lambda}_T^{\ast}$ defined in Algorithm (\ref{subsec:RMBB}). Furthermore, 
let $\mathcal{L}_T(\lambda) \coloneqq T^{-1}\sum_{t = 1}^{T}\ell_t(\lambda)$ be the log-likelihood function defined in (\ref{eq:loglik}) and (\ref{eq:lik}), and its bootstrap analog $\mathcal{L}_T^{\ast}(\lambda) \coloneqq T^{-1}\sum_{t = 1}^{T}\ell_t^{\ast}(\lambda)$, where $\ell_t^{\ast}(\lambda)$ is defined in step 7 of Algorithm (\ref{subsec:RMBB}). Also, let the score functions be defined as $S_T(\lambda) \coloneqq T^{-1} \sum_{t = 1}^{T}\nabla_{\lambda}\ell_t(\lambda)$ and its bootstrap analog as $\hat{S}_T^{\ast}(\lambda) \coloneqq T^{-1}\sum_{t = 1}^{T}\nabla_{\lambda}\ell_t^{\ast}(\lambda)$. 

We establish Proposition \ref{prop:BSvalidity} through two lemmas. Lemma \ref{lemma:scoreconditions} verifies that the NGML score satisfies the conditions required for a bootstrap central limit theorem, and Lemma \ref{lemma:boot_score} uses them to establish the asymptotic normality of the bootstrap score under the residual-based moving-block bootstrap. Combining these with the consistency of $\hat{\lambda}_T$ \citet{lanne_identification_2017}(Theorem 1) and of the bootstrap estimator $\hat{\lambda}^{\ast}_T$ \citet{bruggemann_inference_2016}(Theorem~4.1), a standard Taylor expansion of the bootstrap score completes the proof.

\begin{lemma}[Score Conditions under NIG shocks]\label{lemma:scoreconditions}
For the likelihood function defined through Equations (\ref{eq:NIGpdf})-(\ref{eq:lik}) under the parameter space $\Theta_{\theta_i}$ of the NIG distributed shocks, 
\[
\Theta_{\theta_i}
=
\left\{
(\alpha_i,\gamma_i,\delta_i,\mu_i)
\in
(0,\infty)\times\mathbb{R}\times(0,\infty)\times\mathbb{R}
:
|\gamma_i|<\alpha_i, \alpha_i\delta_i \geq c
\right\}, \ \text{for some} \ c > 0,
\]
and Assumptions (\ref{as:stability}) - (\ref{as:gau}) with \ref{subsec:regcond}, the following hold for each structural shock $i=1,2,\ldots,n$:

\begin{enumerate}
\item[(i)] The score function $\nabla_\lambda \ell_t(\lambda_0)$ is a stationary, ergodic martingale difference sequence with respect to the natural filtration
\[
\mathcal{F}_{t-1}
=
\sigma(\{\varepsilon_s:s\le t-1\}),
\]
equivalently, $\sigma(\{Y_s:s\le t-1\})$, at the true parameter $\lambda_0$, satisfying
\[
\mathbb{E}\!\left[
\nabla_\lambda \ell_t(\lambda_0)
\mid
\mathcal{F}_{t-1}
\right]
=
0
\quad \text{a.s.}
\]

\item[(ii)] The score is square-integrable:
\[
\mathbb{E}\!\left[
\|\nabla_\lambda \ell_t(\lambda_0)\|^2
\right]
<\infty.
\]

\item[(iii)] The score components with respect to the NIG parameters
$(\alpha_i,\gamma_i,\delta_i,\mu_i)$ satisfy the integrability conditions (4) and (5) of \ref{subsec:regcond}.
\end{enumerate}
\end{lemma}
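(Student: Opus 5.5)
I will verify the three properties directly from the explicit form of the log-likelihood \eqref{eq:lik} evaluated at $\lambda_0$, exploiting that at the true parameter the structural shocks entering the likelihood are $\varepsilon_{j,t}=\sigma_j^{-1}\imath_j'B^{-1}u_t$, which are i.i.d.\ and mutually independent by Assumption~\ref{as:indep}.

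\emph{Step (i): martingale difference property.} Write the score as blocks $(\nabla_\Pi\ell_t,\nabla_\beta\ell_t,\nabla_\sigma\ell_t,\nabla_\theta\ell_t)$. The autoregressive block has the form $\sum_j \phi_j(\varepsilon_{j,t};\theta_j)\,\sigma_j^{-1}(\imath_j'B^{-1}\otimes Y_{t-1}')'$ with $\phi_j=\partial\log f_j/\partial x$. Since $Y_{t-1}$ is $\mathcal{F}_{t-1}$-measurable and $\varepsilon_t$ is independent of $\mathcal{F}_{t-1}$, its conditional expectation is $\mathbb{E}[\phi_j(\varepsilon_{j,t})]$ times a measurable factor. $\mathbb{E}[\phi_j]=\int f_j'=0$ by condition (4) of \ref{subsec:regcond} (integrability of $\nabla f_{i,x}$, plus decay of $f_j$ at infinity). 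The $\beta$ and $\sigma$ blocks depend only on $\varepsilon_t$. They reduce to the familiar identities $\mathbb{E}[\phi_j(\varepsilon_j)\varepsilon_m]=0$ for $m\neq j$ (independence and zero mean) and $\mathbb{E}[1+\phi_j(\varepsilon_j)\varepsilon_j]=0$ (integration by parts, using the tail condition (5)). These identities cancel the $-\log\det B$ and $-\log\sigma_j$ derivatives. The $\theta$ block has mean $\int\nabla_{\theta_j}f_j\,dx=\nabla_{\theta_j}\int f_j=0$, with differentiation under the integral justified by condition (6). Stationarity and ergodicity follow because the score is a fixed measurable function of $(\varepsilon_t,Y_{t-1})$, and $Y_t$ is a stable causal linear process in i.i.d.\ shocks (Assumption~\ref{as:stability}).

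\emph{Step (ii): square integrability.} Bound each block using condition (5): $x^2\phi_j(x)^2 f_j$ and $\|\nabla_\theta f_j\|^2/f_j^2$ are dominated by $c_1(1+|x|^{c_2})$. For the NIG density I will check this explicitly. Using $K_1(z)\sim\sqrt{\pi/2z}\,e^{-z}$, one gets $\phi_j(x)\to \mp\alpha_j+\gamma_j$, which is bounded. The $\theta$-derivatives of $\log f$ grow at most like $\sqrt{\delta^2+x^2}$ plus logarithmic terms. Since NIG has all moments for $|\gamma_j|<\alpha_j$, the moments are finite. For the $\Pi$ block, independence of $\varepsilon_t$ and $Y_{t-1}$ gives $\mathbb{E}\|\cdot\|^2\le C\,\mathbb{E}\phi_j^2\,\mathbb{E}\|Y_{t-1}\|^2<\infty$, with finite second moments of $Y_t$ from stability.

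\emph{Step (iii), the main obstacle.} I must verify conditions (4)--(5) for the NIG family itself. This requires uniform asymptotics of $K_1$ and of $K_1'/K_1$ (via $K_1'=-K_0-K_1/z$) both as $|x|\to\infty$ and near $x=\mu$ where $\sqrt{\delta^2+(x-\mu)^2}\ge\delta>0$. Here the constraint $\alpha_j\delta_j\ge c$ keeps the Bessel argument bounded away from zero, so $\log K_1$ and its derivatives are smooth and of polynomial growth. I will compute the derivatives in $(\alpha,\gamma,\delta,\mu)$ by hand and bound each by a polynomial in $|x|$. The delicate point is the Gaussian-shock case, where $\alpha_k=\infty$ lies outside the interior. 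There I plan to treat the $\rho_k=1/\alpha_k$ parametrisation as in Lemma~\ref{lem:profileinfo}: the Gaussian-shock score components either vanish (the directions in $\mathcal{N}$) or reduce to Gaussian scores, which are polynomial and hence trivially integrable. The conditions are therefore stated for the non-degenerate directions only, consistent with Proposition~\ref{prop:onegaussian}.
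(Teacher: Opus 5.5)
Your proposal is correct and follows essentially the same route as the paper. For (i) it uses independence of $\varepsilon_t$ from $\mathcal{F}_{t-1}$ plus zero-mean scores, and for (ii)--(iii) explicit NIG score computations: the Bessel argument stays above $\alpha_i\delta_i\ge c$, the $x$-, $\mu$- and $\delta$-scores are bounded, the $\alpha$- and $\gamma$-scores grow linearly, and NIG moments are finite. Your block-by-block treatment of (i) is more careful than the paper's, which treats the whole score as a function of $\varepsilon_t$ alone even though the $\Pi$-block involves $Y_{t-1}$; your Gaussian-boundary discussion is not needed, since $\Theta_{\theta_i}$ requires $\alpha_i<\infty$.
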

\begin{remark}
The proofs of Parts (i), (ii), and (iii) are mutually independent. Part (i) invokes the integrability conditions of Appendix \ref{subsec:regcond}, which are verified for the NIG distribution in Part (iii); and invokes finite first moments, which follow from Part (ii) via Jensen's inequality $\mathbb{E}[\|\nabla_\lambda \ell_t\|] \leq \sqrt{\mathbb{E}[\|\nabla_\lambda \ell_t\|^2]} < \infty$. These forward references are non-circular since Parts (ii) and (iii) are established without reference to Part (i).
\end{remark}
\subsubsection{Proof of Part (i)}

At the true parameter $\lambda_0$, the score $\nabla_\lambda \ell_t(\lambda_0)$ is a function of $\varepsilon_t$ and $\lambda_0$ (up to some constant). Therefore, $\nabla_\lambda \ell_t(\lambda_0)$ is a measurable function of $\varepsilon_t$ and $\lambda_0$.  By Assumption \ref{as:indep}, $\varepsilon_t$ is \textit{i.i.d.}, so it is independent of $\mathcal{F}_{t-1}$. Since $\nabla_\lambda \ell_t(\lambda_0)$ is a measurable function of $\varepsilon_t$, and independent of $\mathcal{F}_{t-1}$, it follows from \citet{billingsley_probability_1995} (Section 35.2) that $\nabla_\lambda \ell_t(\lambda_0)$ is a martingale difference sequence with respect to $\mathcal{F}_{t-1}$, since:
\[
\mathbb{E}\!\left[
\nabla_\lambda \ell_t(\lambda_0)
\mid
\mathcal{F}_{t-1}
\right]
=
\mathbb{E}\!\left[
\nabla_\lambda \ell_t(\lambda_0)
\right].
\]
The latter expectation equals zero by the Fisher score identity, through assumptions (3)-(5) in \ref{subsec:regcond}. Stationarity and ergodicity follow from the \textit{i.i.d.} structure of $\varepsilon_t$ (\citet{billingsley_probability_1995}, Theorem 24.1).
\begin{remark}\label{remark:ergodicity}
Finite first moments $\mathbb{E}[\|\nabla_\lambda \ell_t(\lambda_0)\|] < \infty$ follow from 
\begin{equation}
    \mathbb{E}[\|\nabla_\lambda \ell_t(\lambda_0)\|] \leq \sqrt{\mathbb{E}[\|\nabla_\lambda \ell_t(\lambda_0)\|^2]} < \infty
\end{equation}
by Jensen's inequality, where the square-integrability is established independently in Part (ii). The expectation operator is applied componentwise to each element of $\nabla_\lambda \ell_t(\lambda_0)$ (\citet{billingsley_probability_1995}, Theorem 22.1), with vector convergence following from the equivalence of componentwise and joint convergence in $\mathbb{R}^d$ (\citet{billingsley_probability_1995}, Theorem 29.4).
\end{remark}

\subsubsection{Proof of Part (ii)}

We establish $\mathbb{E}\!\left[
\|\nabla_\lambda \ell_t(\lambda_0)\|^2
\right]
<\infty$ by examining each block of the parameter vector $\lambda=(\Pi, \beta, \sigma, \theta)$ separately. The parameter vector $\theta_i = (\alpha_i,\gamma_i,\delta_i,\mu_i)$, for each shock $i$, also contains four components, each requiring separate treatment. We first establish a preliminary result on Bessel function ratios that is used throughout all blocks.

\textbf{Bessel Ratio Uniform Boundedness:} The score components with respect $(\alpha_i,\delta_i, \mu_i)$ involve ratios of modified Bessel functions $K_\nu(r_{i,t})/K_1(r_{i,t})$ for $\nu \in \{0,2\}$, where:
\begin{equation}
r_{i,t} \coloneqq \alpha_i\sqrt{\delta_i^2 + (\varepsilon_{i,t} - \mu_i)^2}
\end{equation}

We establish that these ratios are uniformly bounded over all $\varepsilon_{i,t} \in \mathbb{R}$ and all $(\alpha_i,\gamma_i,\delta_i) \in \Theta_{\theta_i}$.
\begin{enumerate}
\item Lower bound on $r_{i,t}$:
Since $\delta_i > 0$ and $\alpha_i > 0$ on $\Theta_{\theta_i}$:
$r_{i,t} = \alpha_i\sqrt{\delta_i^2+(\varepsilon_{i,t}-\mu_i)^2} \geq \alpha_i\delta_i \geq c > 0 \quad \text{for all } \varepsilon_{i,t} \in \mathbb{R}$.
This rules out $r_{i,t} \to 0$, where $K_1$ has a singularity of the form $K_1(z) \sim 1/z$ as $z \to 0$. The modified Bessel functions $K_\nu$ are strictly positive for all $r > 0$, so the ratios $K_\nu(r_{i,t})/K_1(r_{i,t})$
are well-defined. Each ratio is continuous on $[\alpha_i\delta_i,\infty)$ since $K_\nu$ is continuous on $(0,\infty)$
and $K_1(r) > 0$ for all $r > 0$ \citep{abramowitz_handbook_1974} (henceforth, \textbf{AM-SI}), Section 9.6.

\item Behavior as $r_{i,t} \to \infty$:
By the uniform asymptotic expansion of modified Bessel functions \textbf{AM-SI} (Equation 9.7.2):

\[
K_\nu(z) = \sqrt{\frac{\pi}{2z}}e^{-z}\left(1 + \frac{4\nu^2-1}{8z} + O(z^{-2})\right) \quad \text{as } z \to \infty
\]

Taking ratios for $\nu \in \{0,1,2\}$ and simplifying:

\begin{equation}\label{eq:ratio1bound}
\frac{K_0(r)}{K_1(r)} = \frac{1 - \frac{1}{8r} + O(r^{-2})}{1 + \frac{3}{8r} + O(r^{-2})} \longrightarrow 1 \quad \text{as } r \to \infty    
\end{equation}

\begin{equation}\label{eq:ratio2bound}
\frac{K_2(r)}{K_1(r)} = \frac{1 + \frac{15}{8r} + O(r^{-2})}{1 + \frac{3}{8r} + O(r^{-2})} \longrightarrow 1 \quad \text{as } r \to \infty
\end{equation}

\item Uniform boundedness on $[\alpha_i\delta_i, \infty)$:
Each ratio $K_\nu(r)/K_1(r)$ is continuous on $[\alpha_i\delta_i,\infty)$ from equations (\ref{eq:ratio1bound}) and (\ref{eq:ratio2bound}). On any compact subset $[\alpha_i\delta_i,R]$ for finite $R$, continuity implies boundedness. On $[R,\infty)$, both ratios converge to 1, so they are bounded there as well. Combining these, there exist finite constants $C_0,C_2<\infty$ depending only on the compact parameter space $\Theta_{\theta_i}$ such that:

\begin{equation}
    \label{eq:BesselRatioBound}
\frac{K_\nu(r_{i,t})}{K_1(r_{i,t})} \leq C_\nu \quad \text{for all } \varepsilon_{i,t} \in \mathbb{R}, \text{ all } (\alpha_i,\gamma_i,\delta_i,\mu_i) \in \Theta_{\theta_i}, \quad \nu \in \{0,2\}
\end{equation}
We use (\ref{eq:BesselRatioBound}) throughout all subsequent blocks.
\end{enumerate}

\textbf{Block 1: Score with respect to $\gamma_i$.}
By direct differentiation of \eqref{eq:NIGpdf} with respect to $\gamma_i$:
\begin{equation}
\frac{\partial\log f_i(\varepsilon_{i,t};\theta_{i,0})}{\partial\gamma_i} = (\varepsilon_{i,t}-\mu_i) - \frac{\delta_i\gamma_i}{\sqrt{\alpha_i^2-\gamma_i^2}}
\end{equation}
The second term is a constant and finite in $\varepsilon_{i,t}$ since $|\gamma_i| < \alpha_i$ on $\Theta_{\theta_i}$
ensures $\alpha_i^2-\gamma_i^2 > 0$, and $\delta_i < \infty$
on any compact subset of $\Theta_{\theta_i}$. Square-integrability requires $\mathbb{E}[(\varepsilon_{i,t}-\mu_i)^2] < \infty$. The NIG distribution has finite variance: $\operatorname{Var}(\varepsilon_{i,t}) = \frac{\delta_i\alpha_i^2}{(\alpha_i^2-\gamma_i^2)^{3/2}} < \infty$
for all $(\alpha_i,\gamma_i,\delta_i) \in \Theta_{\theta_i}$, since $|\gamma_i| < \alpha_i$ ensures $(\alpha_i^2-\gamma_i^2) > 0$ by \citet{barndorff-nielsen_normal_1997} (Proposition 2). Therefore:
\begin{equation}
\mathbb{E}\left[\left(\frac{\partial\log f_i}{\partial\gamma_i}\right)^2\right] \leq 2\mathbb{E}[(\varepsilon_{i,t}-\mu_i)^2] + 2\left(\frac{\delta_i\gamma_i}{\sqrt{\alpha_i^2-\gamma_i^2}}\right)^2 = 2\,\operatorname{Var}(\varepsilon_{i,t}) + 2\left(\frac{\delta_i\gamma_i}{\sqrt{\alpha_i^2-\gamma_i^2}}\right)^2 < \infty
\end{equation}
where the inequality uses $(a+b)^2 \leq 2a^2+2b^2$.

\textbf{Block 2: Score with respect to $\alpha_i$.}
By direct differentiation of \eqref{eq:NIGpdf} with respect to $\alpha_i$, applying the modified Bessel recurrence relation:
\begin{equation}\label{eq:besselreccurence}
    K_1'(z) = -\frac{1}{2}(K_0(z)+K_2(z))
\end{equation}
\textbf{AM-SI} (Equation 9.6.26) via the chain rule to differentiate $\log K_1(r_{i,t})$
 with respect to $\alpha_i$ and noting that $\partial r_{i,t}/\partial\alpha_i = \sqrt{\delta_i^2+(\varepsilon_{i,t}-\mu_i)^2}$, we have:
\begin{equation}
\frac{\partial\log f_i}{\partial\alpha_i} = \frac{1}{\alpha_i} + \frac{\delta_i\alpha_i}{\sqrt{\alpha_i^2-\gamma_i^2}} - \frac{\sqrt{\delta_i^2+(\varepsilon_{i,t}-\mu_i)^2}}{2}\left[\frac{K_0(r_{i,t})}{K_1(r_{i,t})} + \frac{K_2(r_{i,t})}{K_1(r_{i,t})}\right]    
\end{equation}

Applying the uniform bound \eqref{eq:BesselRatioBound}, the Bessel-ratio sum
satisfies,
\begin{equation}
    |K_0(r_{i,t})/K_1(r_{i,t})+K_2(r_{i,t})/K_1(r_{i,t})|\le C_0+C_2<\infty
\end{equation}
and $\sqrt{\delta_i^2+(\varepsilon_{i,t}-\mu_i)^2}\le\delta_i+|\varepsilon_{i,t}-\mu_i|$
by $\sqrt{a^2+b^2}\le a+b$ for $a,b\ge0$. Hence,
\begin{equation}
\left|\frac{\partial\log f_i}{\partial\alpha_i}\right|
 \le \frac{1}{\alpha_i}+\frac{\delta_i\alpha_i}{\sqrt{\alpha_i^2-\gamma_i^2}}
   +\frac{C_0+C_2}{2}\bigl(\delta_i+|\varepsilon_{i,t}-\mu_i|\bigr)
 \;=\; A_i + B_i\,|\varepsilon_{i,t}-\mu_i|,
\end{equation}
where,
$A_i \coloneqq \dfrac{1}{\alpha_i}+\dfrac{\delta_i\alpha_i}{\sqrt{\alpha_i^2-\gamma_i^2}}
+\dfrac{(C_0+C_2)\delta_i}{2}$ and $B_i \coloneqq \dfrac{C_0+C_2}{2}$ are finite constants
on $\Theta_{\theta_i}$. 
Squaring, using $(a+b)^2\le2a^2+2b^2$, and taking expectations,
\begin{equation}
\mathbb{E}\!\left[\left(\frac{\partial\log f_i}{\partial\alpha_i}\right)^2\right]
 \le 2A_i^2 + 2B_i^2\,\mathbb{E}\!\bigl[(\varepsilon_{i,t}-\mu_i)^2\bigr] < \infty,
\end{equation}
where $\mathbb{E}[(\varepsilon_{i,t}-\mu_i)^2]
=\operatorname{Var}(\varepsilon_{i,t})+\bigl(\delta_i\gamma_i/\sqrt{\alpha_i^2-\gamma_i^2}\bigr)^2<\infty$
by the finite mean and variance of the NIG distribution on $\Theta_{\theta_i}$.

\textbf{Block 3: Score with respect to $\mu_i$.}
By direct differentiation of \eqref{eq:NIGpdf} with respect to $\mu_i$, applying the Bessel recurrence relation \eqref{eq:besselreccurence} via the chain rule to differentiate $\log K_1(r_{i,t})$ with respect to $\mu_i$, noting that $\partial r_{i,t}/\partial\mu_i = -\alpha_i(\varepsilon_{i,t}-\mu_i)/\sqrt{\delta_i^2+(\varepsilon_{i,t}-\mu_i)^2}$, and differentiating the remaining terms in equation \eqref{eq:lik}:
\begin{equation}
    \frac{\partial\log f_i}{\partial\mu_i} = \frac{(\varepsilon_{i,t}-\mu_i)}{\delta_i^2+(\varepsilon_{i,t}-\mu_i)^2} + \frac{\alpha_i(\varepsilon_{i,t}-\mu_i)}{2\sqrt{\delta_i^2+(\varepsilon_{i,t}-\mu_i)^2}}\left[\frac{K_0(r_{i,t})}{K_1(r_{i,t})} + \frac{K_2(r_{i,t})}{K_1(r_{i,t})}\right] - \gamma_i
\end{equation}

We establish that all three terms are uniformly bounded over all $\varepsilon_{i,t} \in \mathbb{R}$.
\begin{enumerate}
    \item Term 1: By the inequality $(a+b)^2 \geq 2ab$, $\delta_i^2+(\varepsilon_{i,t}-\mu_i)^2 \geq 2\delta_i|\varepsilon_{i,t}-\mu_i|$ for all $\varepsilon_{i,t} \in \mathbb{R}$, so:
\begin{equation}
\left|\frac{\varepsilon_{i,t}-\mu_i}{\delta_i^2+(\varepsilon_{i,t}-\mu_i)^2}\right| \leq \frac{|\varepsilon_{i,t}-\mu_i|}{2\delta_i|\varepsilon_{i,t}-\mu_i|} = \frac{1}{2\delta_i} < \infty    
\end{equation}
uniformly over all $\varepsilon_{i,t} \in \mathbb{R}$, with $\delta_i > 0$ guaranteed on $\Theta_{\theta_i}$.
\item Term 2: The prefactor satisfies:
\begin{equation}
\left|\frac{\alpha_i(\varepsilon_{i,t}-\mu_i)}{2\sqrt{\delta_i^2+(\varepsilon_{i,t}-\mu_i)^2}}\right| \leq \frac{\alpha_i}{2}
\end{equation}
for all $\varepsilon_{i,t} \in \mathbb{R}$, since $\sqrt{\delta_i^2+(\varepsilon_{i,t}-\mu_i)^2} \geq |\varepsilon_{i,t}-\mu_i|$ for all $\varepsilon_{i,t}$. Combined with the uniform bound \eqref{eq:BesselRatioBound} on the Bessel ratios, Term 2 is uniformly bounded by $\alpha_i(C_0+C_2)/2 < \infty$ over all $\varepsilon_{i,t} \in \mathbb{R}$.
\item Term 3: $|-\gamma_i| < \alpha_i < \infty$. Since all three terms are uniformly bounded, the score with respect to $\mu_i$ satisfies:
\begin{equation}
\left|\frac{\partial\log f_i}{\partial\mu_i}\right| \leq \frac{1}{2\delta_i} + \frac{\alpha_i(C_0+C_2)}{2} + |\gamma_i| \coloneqq D < \infty
\end{equation}
where $D$ depends only on $\Theta_{\theta_i}$ and the Bessel bounds $C_0,C_2$. Therefore:
\begin{equation}
\mathbb{E}\left[\left(\frac{\partial\log f_i}{\partial\mu_i}\right)^2\right] \leq D^2 < \infty    
\end{equation}
\end{enumerate}

This bound requires no moment conditions on $\varepsilon_{i,t}$ beyond the existence of its density, which is a stronger result than was needed for $(\gamma_i,\alpha_i,\delta_i)$. The uniform boundedness arises because both $(\varepsilon_{i,t}-\mu_i)/[\delta_i^2+(\varepsilon_{i,t}-\mu_i)^2]$
and $(\varepsilon_{i,t}-\mu_i)/\sqrt{\delta_i^2+(\varepsilon_{i,t}-\mu_i)^2}$ are bounded functions of $\varepsilon_{i,t}$: in both cases the denominator grows at least as fast as the numerator as $|\varepsilon_{i,t}| \to \infty$.

\textbf{Block 4: Score with respect to $\delta_i$.}
By direct differentiation of equation \eqref{eq:lik} with respect to $\delta_i$,
applying the Bessel recurrence relation \eqref{eq:besselreccurence} to
differentiate $\log K_1(r_{i,t})$, and noting
\begin{equation}
 \frac{\partial r_{i,t}}{\partial\delta_i}
   =\frac{\alpha_i\delta_i}{\sqrt{\delta_i^2+(\varepsilon_{i,t}-\mu_i)^2}}   
\end{equation}
we obtain
\begin{equation}
\frac{\partial\log f_i}{\partial\delta_i}
 = \frac{1}{\delta_i}
 - \frac{\delta_i}{\delta_i^2+(\varepsilon_{i,t}-\mu_i)^2}
 + \sqrt{\alpha_i^2-\gamma_i^2}
 - \frac{\alpha_i\delta_i}{2\sqrt{\delta_i^2+(\varepsilon_{i,t}-\mu_i)^2}}
   \left[\frac{K_0(r_{i,t})}{K_1(r_{i,t})}+\frac{K_2(r_{i,t})}{K_1(r_{i,t})}\right].
\end{equation}
The first and third terms are constants on $\Theta_{\theta_i}$. The second term
satisfies $\bigl|\delta_i/(\delta_i^2+(\varepsilon_{i,t}-\mu_i)^2)\bigr|\leq 1/\delta_i$,
attained at $\varepsilon_{i,t}=\mu_i$. For the fourth term, the coefficient of the
Bessel-ratio sum satisfies
\begin{equation}
\left|\frac{\alpha_i\delta_i}{2\sqrt{\delta_i^2+(\varepsilon_{i,t}-\mu_i)^2}}\right|
 \leq \frac{\alpha_i\delta_i}{2\delta_i}=\frac{\alpha_i}{2},
\end{equation}
using $\sqrt{\delta_i^2+(\varepsilon_{i,t}-\mu_i)^2}\geq\delta_i$; combined with the
uniform bound \eqref{eq:BesselRatioBound}, the fourth term is bounded by
$\alpha_i(C_0+C_2)/2$. Every term is therefore uniformly bounded, and
\begin{equation}
\left|\frac{\partial\log f_i}{\partial\delta_i}\right|
 \leq \frac{2}{\delta_i}+\sqrt{\alpha_i^2-\gamma_i^2}+\frac{\alpha_i(C_0+C_2)}{2}
 \;=:\; D' \;<\;\infty .
\end{equation}
As in Block 3, uniform boundedness yields square-integrability with no moment
condition on $\varepsilon_{i,t}$.

\textbf{Block 5: Score with respect to $\sigma_i$.}
The parameter $\sigma_i$ enters the log-likelihood contribution $\ell_t(\lambda_0)$ in equation \eqref{eq:lik} through the standardization of the structural shock and through the Jacobian term $\log\sigma_i$. At the true parameter $\lambda_0$, differentiating $\ell_t(\lambda_0)$ with respect to $\sigma_i$ via the chain rule yields:
\begin{equation}
\frac{\partial \ell_t(\lambda_0)}{\partial\sigma_i} = -\frac{\varepsilon_{i,t}}{\sigma_{i,0}^2} \frac{\partial\log f_i(s;\theta_{i,0})}{\partial s}\bigg|_{s=\sigma_{i,0}^{-1}\varepsilon_{i,t}} - \frac{1}{\sigma_{i,0}}
\end{equation}
By direct differentiation of equation \eqref{eq:lik} with respect to the argument $s_i$, the score with respect to $s_i$ has the identical structure as the score with respect to $\mu_i$ in Block 3, with the same Bessel ratios and bounded fractions. Therefore, by the same argument as in Block 3, the score with respect to $s_i$ is uniformly bounded by a finite constant $D''$ depending only on $\Theta_{\theta_i}$ and the Bessel bounds $C_0,C_2$. Hence: 
\begin{equation}\label{eq:score_s_bound}
\left|\frac{\partial\log f_i(s;\theta_{i,0})}{\partial s}\right| \leq \frac{1}{2\delta_i} + \frac{\alpha_i(C_0+C_2)}{2} + |\gamma_i| =: D < \infty 
\end{equation}
Substituting this into the score with respect to $\sigma_i$:
\begin{equation}
\left|\frac{\partial \ell_t(\lambda_0)}{\partial\sigma_i}\right| \leq \frac{|\varepsilon_{i,t}|}{\sigma_{i,0}^2}D + \frac{1}{\sigma_{i,0}}
\end{equation}
Squaring (and applying $(a+b)^2 \leq 2a^2+2b^2$) and taking expectations:
\begin{equation}
    \mathbb{E}\left[(\frac{\partial \ell_t(\lambda_0)}{\partial\sigma_i})^2\right] \leq \frac{2D^2}{\sigma_{i,0}^4}\mathbb{E}[\varepsilon_{i,t}^2] + \frac{2}{\sigma_{i,0}^2} < \infty
\end{equation}
With the finite variance of $\varepsilon_{i,t}$, the score with respect to $\sigma_i$ is square-integrable.

\textbf{Block 6: Score with respect to autoregressive parameters $\Pi$ and off-diagonal elements $\beta$.}
At the true parameter $\lambda_0$, the score components with respect to $(\Pi,\beta)$ are linear functions of the structural score $\frac{\partial\log f_i(s;\theta_{i,0})}{\partial s}$ evaluated at $s = \sigma_{i,0}^{-1}\varepsilon_{i,t}$, scaled by $Y_{t-1}$ and elements of $B_0^{-1}$. By (\ref{eq:score_s_bound}), the structural score is uniformly bounded by $D < \infty$ over all $\varepsilon_{i,t} \in \mathbb{R}$. Therefore,
\begin{equation}
    \mathbb{E}[||Y_{t-1}||^2||\nabla_\varepsilon\log f_i(\varepsilon_{i,t};\theta_{i,0})||^2] \leq D^2\mathbb{E}[||Y_{t-1}||^2]
\end{equation}
Under Assumption (\ref{as:stability}) and the stationarity of $Y_t$, $\mathbb{E}[||Y_{t-1}||^2] < \infty$. Therefore, the score components with respect to $(\Pi,\beta)$ are square-integrable.

\subsubsection{Proof of Part (iii)}
Since $\nabla_{f_i,x}(x;\theta_{i,0})=f_i(x;\theta_{i,0})\,\partial\log f_i(x;\theta_{i,0})/\partial x$,
the uniform bound gives
$\int|\nabla_{f_i,x}(x;\theta_{i,0})|\,dx
  \leq D\int f_i(x;\theta_{i,0})\,dx = D < \infty$. Condition (4) of Appendix \ref{subsec:regcond} requires $\int|\nabla_{f_i,x}(x;\theta_{i,0})|\,dx < \infty$, where $\nabla_{f_i,x}(x;\theta_{i,0})$ is the score of the density $f_i$ with respect to its argument $x$.
By direct differentiation of equation \eqref{eq:lik} with respect to $x$, the score $\partial\log f_i(x;\theta_{i,0})/\partial x$ has the same algebraic structure as $\partial\log f_i/\partial\mu_i$
established in Block 3 of Part (ii). Specifically, every occurrence of $(x-\mu_i)$
is divided by $\delta_i^2+(x-\mu_i)^2$ or $\sqrt{\delta_i^2+(x-\mu_i)^2}$, which are uniformly bounded fractions as shown in Block 3. Therefore, $|\partial\log f_i/\partial x| \leq D < \infty$ uniformly over all $x \in \mathbb{R}$, where $D$ is the same constant as Block 3. Condition (4) in Appendix \ref{subsec:regcond} holds.

Condition (5) of Appendix \ref{subsec:regcond} requires $\|\nabla_{f_i,\theta_i}(x;\theta_{i,0})\|^2/f_i^2(x;\theta_{i,0})$ to be dominated by $c_1(1+|x|^{c_2})$ for constants $c_1,c_2 > 0$, with $\int|x|^{c_2}f_i(x;\theta_{i,0})\,dx < \infty$.

The ratio $\|\nabla_{f_i,\theta_i}(x;\theta_{i,0})\|^2/f_i^2(x;\theta_{i,0}) = \|\nabla_{\theta_i}\log f_i(x;\theta_{i,0})\|^2$ is the squared norm of the score with respect to $\theta_i = (\alpha_i,\gamma_i,\delta_i,\mu_i)$. From Blocks 1-4: the $\gamma_i$ and $\alpha_i$
components satisfy $\partial\log f_i/\partial\theta_{i,k}| \leq A_k+B_k|x-\mu_i|$; the $\delta_i$ and $\mu_i$ components are uniformly bounded by finite constants $D',D$ respectively. Therefore:
\begin{equation}
    |\nabla_{\theta_i}\log f_i(x;\theta_{i,0})\|^2 = \sum_{k=1}^4\left(\frac{\partial\log f_i}{\partial\theta_{i,k}}\right)^2 \leq (A_1+B_1|x-\mu_i|)^2 + (A_2+B_2|x-\mu_i|)^2 + D'^2 + D^2
\end{equation}
Applying $(a+b)^2 \leq 2a^2+2b^2$ to the first two terms and combining constants:
\begin{equation}
|\nabla_{\theta_i}\log f_i(x;\theta_{i,0})\|^2 \leq c_1(1+|x-\mu_i|^2) \leq c_1'(1+|x|^2)    
\end{equation}
for finite constants $c_1,c_1'$ depending only on $\Theta_{\theta_i}$, establishing $c_2=2$. The required moment condition $\int|x|^2 f_i(x;\theta_{i,0})\,dx = \mathbb{E}[\varepsilon_{i,t}^2] < \infty$
holds by ${Var}(\varepsilon_{i,t}) < \infty$. \\

\paragraph{Condition (MBB)}
The block length $l=l_T$ in Algorithm \ref{subsec:RMBB} satisfies $l\to\infty$ and $l/T\to 0$ as $T\to\infty$ (e.g.\ $l=\lfloor 5.03\,T^{1/4}\rfloor$), so that the number of blocks $L=\lfloor T/l\rfloor\to\infty$, and the block start indices $i_1,\dots,i_L$ are drawn independently and uniformly from $\{0,1,\dots,T-l\}$, independently of the data.

\paragraph{Condition (BV) [Bootstrap validity of the two-step estimator]}
The residual-based MBB of Algorithm~\ref{subsec:RMBB}, composed with
the NGML second step, is first-order valid: conditional on the data and under
$\mathcal{H}_0$, the bootstrap estimator $\hat{\lambda}^{\ast}_T$ admits the same
first-order expansion as $\hat{\lambda}_T$, with the bootstrap score
$\hat{S}^{\ast}_T$ and Hessian $\nabla_\lambda\hat{S}^{\ast}_T$ following, respectively,
the moving-block bootstrap central limit theorem of \citet{goncalves_maximum_2004}. This two-step composition is maintained; the $\beta$-mixing and $(2+\delta)$-moment conditions it
requires are verified in Lemma~\ref{lemma:scoreconditions}.

\begin{lemma}[Bootstrap Score Convergence]\label{lemma:boot_score}
   Under Assumptions (\ref{as:stability})--(\ref{as:gau}), the regularity conditions in \ref{subsec:regcond}, Conditions (MBB) and (BV), and Lemma \ref{lemma:scoreconditions}:
    \[
     T^{1/2}\big(\hat{S}^{\ast}_T(\hat{\lambda}_T)-S_T(\hat{\lambda}_T)\big) \  \xrightarrow{d^*}_p \ \mathcal{N}(0, \mathcal{I}(\hat{\lambda}_T))
    \]
where $\mathcal{I}(\hat{\lambda}_T) = -\mathbb{E}[\nabla^2 \ell_{\lambda\lambda,t}(\hat{\lambda}_T)]$ and $\nabla^2 \ell_{\lambda\lambda,t}(\hat{\lambda}_T) = {\partial^2 \ell_t(\hat{\lambda}_T)}/{\partial \lambda \partial \lambda'}$.
\end{lemma}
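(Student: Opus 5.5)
We have to prove Lemma (Bootstrap Score Convergence): $T^{1/2}(\hat S^*_T(\hat\lambda_T)-S_T(\hat\lambda_T))$ converges conditionally to $\mathcal{N}(0,\mathcal{I}(\hat\lambda_T))$ in probability.

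The plan is to reduce the bootstrap score to a block-bootstrap sum of (nearly) martingale-difference terms and apply a moving-block bootstrap CLT in the spirit of \citet{goncalves_maximum_2004}. The proceeds in three steps.

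\textbf{Step 1: linearization.} First write $\nabla_\lambda\ell_t^*(\hat\lambda_T)=g(\hat u_t^{**};\hat\lambda_T)$ and $\nabla_\lambda\ell_t(\hat\lambda_T)=g(\hat u_t;\hat\lambda_T)$, with $g$ the score map implied by \eqref{eq:lik}. Next decompose
\[
\hat u_t^{**}=\hat u_t^{*}-(\hat\Pi_1^*-\hat\Pi_1)Y_{t-1}^*,
\]
and use the uniform bound \eqref{eq:score_s_bound} on $\partial\log f_i/\partial s$, together with the Lipschitz-type bounds from Blocks 1--6 of Lemma~\ref{lemma:scoreconditions}. These bounds show that replacing $\hat u_t^{**}$ by $\hat u_t^*$ changes $T^{1/2}\hat S_T^*$ by a term $A_T^*(\hat\Pi^*_1-\hat\Pi_1)T^{1/2}$, where $A_T^*$ is a bootstrap sample average converging to $\mathbb{E}[\partial g/\partial u\, Y_{t-1}']$. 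That expectation is a cross-expectation of a score in $\varepsilon_t$ with lagged data. By the independence used in Part (i) of Lemma~\ref{lemma:scoreconditions} and the zero-mean score identity, it vanishes for the $(\beta,\sigma,\theta)$ blocks. For the $\Pi$ block it is precisely the first-stage adjustment already built into the paper's definition of $\lambda$. The first-stage term therefore either vanishes or is absorbed into the limit covariance. I would treat this by stacking the OLS estimating equation with the NGML score. This is the standard two-step device maintained under Condition (BV), and it is the place I expect to have to lean on (BV) rather than prove everything from scratch.

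\textbf{Step 2: MBB CLT for the stacked score.} After Step 1 it suffices to show
\[
T^{-1/2}\sum_t\bigl(g(\hat u_t^*;\hat\lambda_T)-\mathbb{E}^*g(\hat u_t^*;\hat\lambda_T)\bigr)\xrightarrow{d^*}_p\mathcal{N}(0,\Omega),
\]
together with $T^{1/2}\bigl(\mathbb{E}^*\hat S_T^*-S_T(\hat\lambda_T)\bigr)=o_p(1)$. The latter holds because the centering in step 4 of Algorithm~\ref{subsec:RMBB} makes the block average match the sample average up to $O_p(l/T)$ end effects. Conditionally on the data the blocks are i.i.d., so the sum equals $L^{1/2}$ times an average of $L$ i.i.d.\ block sums scaled by $l^{-1/2}$. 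A Lindeberg CLT for triangular arrays then applies. Its Lyapunov condition follows from a $(2+\delta)$-moment of $g$; this holds because the scores are either bounded (Blocks 3--6) or linear in $|\varepsilon|$ (Blocks 1, 2), and NIG has all moments. Consistency of the conditional variance of block sums, $\Omega_T^*\to\Omega$, follows from the standard Künsch/Gonçalves--White argument, given $l\to\infty$ and $l/T\to0$ from Condition (MBB) and the mixing of the stable VAR under Assumption~\ref{as:stability}.

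\textbf{Step 3: identify the limit covariance.} The MDS property in Lemma~\ref{lemma:scoreconditions}(i) kills all autocovariances, so the long-run variance reduces to $\mathcal{J}(\lambda_0)=\mathbb{E}[\nabla\ell_t\nabla\ell_t']$. The information identity under correct specification (assumptions (3)--(5)) gives $\mathcal{J}=\mathcal{I}(\lambda_0)$, and continuity in $\lambda$ with consistency of $\hat\lambda_T$ replaces this by $\mathcal{I}(\hat\lambda_T)$.

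\textbf{Main obstacle.} I expect the main difficulty in Step 1: controlling the bootstrap first-stage estimation error inside a nonlinear score uniformly, and handling the single-Gaussian boundary. At that boundary the $\theta$-block of $\mathcal{I}$ is singular, so the statement must be read for the $\psi$-sub-vector, using Lemma~\ref{lem:profileinfo}(ii) to discard the null directions $\mathcal{N}$.
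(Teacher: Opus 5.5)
Your proposal follows essentially the paper's route: $(2+\delta)$-moment and $\beta$-mixing conditions from Lemma~\ref{lemma:scoreconditions} and VAR stability, the Gon\c{c}alves--White moving-block CLT under Conditions (MBB) and (BV), the martingale-difference property to reduce the long-run variance to $\mathcal{J}(\lambda_0)$, and the information identity (read for $\psi$ at the single-Gaussian boundary) to get $\mathcal{I}$. Your explicit Step~1 goes beyond the paper, which simply subsumes the first-stage effect into (BV), but its justification is off. Independence only factorizes the term as $\mathbb{E}[\partial g/\partial u]\,\mathbb{E}[Y_{t-1}]'$, and the zero-mean score identity gives $\mathbb{E}[g]=0$, not $\mathbb{E}[\partial g/\partial u]=0$; the latter is a cross-information with a location shift, nonzero e.g.\ for asymmetric shocks. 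The same quantity makes the step-4 recentering an $O_p(1)$ term in your Step~2 when the residuals do not sum to zero. Where the term does not vanish, ``absorbing'' it would contradict the $\mathcal{J}=\mathcal{I}$ of your Step~3, so at that point you are resting on (BV), exactly as the paper does.
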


 \begin{proof} By Lemma \ref{lemma:scoreconditions}, at $\lambda_0$ the score $\nabla_\lambda\ell_t$ is a strictly stationary, ergodic martingale-difference sequence with $\mathbb{E}\|\nabla_\lambda\ell_t(\lambda_0)\|^{2+\delta}<\infty$ for some $\delta>0$ (the NIG distribution has finite moments of all orders, \citet{barndorff-nielsen_normal_1997}) and nonsingular variance $\mathcal{I}(\lambda_0)$ by condition~(7) of \ref{subsec:regcond}. Under Assumption \ref{as:stability} the companion process is stable and, the NIG density being strictly positive on $\mathbb{R}$, geometrically ergodic; hence $\nabla_\lambda\ell_t$ is geometrically $\beta$-mixing, as a measurable function of the geometrically ergodic companion process. These are the conditions under which the {residual-based}
moving-block bootstrap of \citet{bruggemann_inference_2016} is first-order valid for the reduced-form VAR. The accompanying central limit
theorem for the smooth extremum estimator follows from the moving-block bootstrap
theory for extremum estimators of \citet{goncalves_maximum_2004}, whose
data-blocking arguments we invoke for the residual-recursive construction. By
Condition~(BV) this two-step composition is first-order valid, and the $\beta$-mixing
and $(2+\delta)$-moment conditions it requires hold by Lemma~\ref{lemma:scoreconditions}. Consequently, evaluated at the bootstrap recentering point $\hat{\lambda}_T$ and under Condition (MBB),
\[
T^{1/2}\big(\hat{S}^{\ast}_T(\hat{\lambda}_T)-S_T(\hat{\lambda}_T)\big) \ \xrightarrow{d^*}_p \ \mathcal{N}\big(0,\mathcal{I}(\hat{\lambda}_T)\big),
\]
the limiting covariance being the long-run variance of the score. Since the
score is a martingale difference (Lemma \ref{lemma:scoreconditions}), it is
serially uncorrelated, so this long-run variance carries no autocovariance terms
and equals the contemporaneous score covariance
$\mathcal{J}(\lambda_0)=\mathbb{E}[\nabla_\lambda\ell_t\nabla_\lambda\ell_t']$.
Under the valid specification where all structural shocks non-Gaussian,
the information identity
$\mathcal{J}(\lambda_0)=-\mathbb{E}[\nabla^2_{\lambda\lambda}\ell_t]=\mathcal{I}(\lambda_0)$
holds, giving the stated $\mathcal{I}(\hat{\lambda}_T)$; when one shock is
Gaussian the fitted density is misspecified on the boundary, $\mathcal{J}\neq\mathcal{I}$,
and the robust sandwich of Proposition \ref{prop:onegaussian} applies to the
structural sub-vector $\psi$.
\end{proof}

\subsubsection{Proof of Proposition \ref{prop:BSvalidity}}
A Taylor expansion of the bootstrap first-order condition $\hat{S}^{\ast}_T(\hat{\lambda}^{\ast}_T)=0$ about $\hat{\lambda}_T$, using $S_T(\hat{\lambda}_T)=0$, gives
\[
T^{1/2}\big(\hat{\lambda}^{\ast}_T-\hat{\lambda}_T\big) = -\big[\nabla_\lambda\hat{S}^{\ast}_T(\tilde{\lambda}_T)\big]^{-1} T^{1/2}\big(\hat{S}^{\ast}_T(\hat{\lambda}_T)-S_T(\hat{\lambda}_T)\big) + o^{\ast}_p(1),
\]
where $\tilde{\lambda}_T$ lies between $\hat{\lambda}^{\ast}_T$ and $\hat{\lambda}_T$; the bootstrap Hessian satisfies $\nabla_\lambda\hat{S}^{\ast}_T(\tilde{\lambda}_T)\xrightarrow{p^*}_p -\mathcal{I}(\lambda_0)$ by a uniform law of large numbers under Condition~(MBB) and the moment bounds of Lemma~\ref{lemma:scoreconditions} \citep{goncalves_maximum_2004}, hence the limit is nonsingular. By the consistency of $\hat{\lambda}_T$ through \citet{lanne_identification_2017}(Theorem~1) and of $\hat{\lambda}^{\ast}_T$ through \citet{bruggemann_inference_2016}(Theorem~4.1), combining with Lemma \ref{lemma:boot_score} yields
\[
T^{1/2}\big(\hat{\lambda}^{\ast}_T-\hat{\lambda}_T\big) \ \xrightarrow{d^*}_p \ \mathcal{N}\big(0,\mathcal{I}(\hat{\lambda}_T)^{-1}\big),
\]
and pre-multiplying by $\hat{\Sigma}^{-1/2}_{\lambda_T}$, a consistent estimator of the asymptotic covariance of $\hat{\lambda}_T$, gives the studentized statement of Proposition \ref{prop:BSvalidity}.

The argument presumes the regularity conditions hold for the full
    $\lambda$, i.e. all structural shocks are non-Gaussian so that the score
    covariance equals the Hessian information,
    $\mathcal{J}(\lambda_0)=\mathbb{E}[\nabla_\lambda\ell_t\nabla_\lambda\ell_t']
    =\mathcal{I}(\lambda_0)$; when one shock is Gaussian the same steps apply
    to the structural sub-vector $\psi=(\operatorname{vec}(\Pi)',\beta',\sigma')'$ and yield the robust (sandwich)
    covariance of Corollary \ref{cor:Bblock} (Proposition \ref{prop:onegaussian}).
\begin{remark}[Why a moving-block bootstrap]\label{rem:whyMBB}
Because the score is a martingale difference (Lemma \ref{lemma:scoreconditions}),
it is serially uncorrelated, and its long-run variance is the contemporaneous
$\mathcal{J}(\lambda_0)$; a block bootstrap is therefore not \emph{necessary} to
reproduce serial dependence of the score, and under the i.i.d.\ design the
residual-based moving-block bootstrap and the i.i.d.\ residual bootstrap share the
same first-order limit. We nonetheless use the residual-based moving-block
bootstrap throughout as the robust default: it remains first-order valid for the
reduced-form VAR under conditional heteroscedasticity of unknown form
\citep{bruggemann_inference_2016}, where the i.i.d.\ residual bootstrap would be
inconsistent. The NGML score conditions (Lemma~\ref{lemma:scoreconditions}) are verified
here under the i.i.d.\ design; their extension to conditionally heteroscedastic shocks in the sense of \citet{lanne_identification_2017} is maintained. Moreover, the
studentization of the impact matrix estimator uses the robust covariance estimator $\hat{\Sigma}_{B_T}$, which
reduces to the efficient \eqref{eq:SigmaB} under correct specification.
\end{remark}

\subsection{Proof of Proposition \ref{prop:bootstraptest}}\label{subsec:proofbootstraptest}
\subsubsection{Validity under $\mathcal{H}_0$.}
By Proposition \ref{prop:BSvalidity}, under $\mathcal{H}_0$ and the regularity conditions, the conditional distribution of
$Q^{*}_{T}$ given $D_T$ converges in probability to the standard
normal:
\begin{equation}
  \label{eq:bsvalid}
  \sup_{x \in \mathbb{R}}\,
  \bigl|\hat{G}^{*}_{T}(x) - \Phi(x)\bigr|
  \;\xrightarrow{p^{*}}_{p}\; 0,
  \qquad \text{as } T \to \infty.
\end{equation}
This follows from the asymptotic normality of the NGML estimator
(equations \eqref{eq:asymp}--\eqref{eq:fisher}), the consistency of
$\hat{\Sigma}_{B_T}$, and the asymptotic validity of the residual-based
MBB established in
\citet{bruggemann_inference_2016}, which ensures that conditional
heteroscedasticity of unknown form does not invalidate the bootstrap
approximation.

\subsubsection{CLT for the empirical bootstrap distribution.}
For fixed $T$ and any $x \in \mathbb{R}^{n^{2}}$, the bootstrap
replications $\{\mathbf{1}(\mathrm{vec}(Q^{*}_{T,b}) \leq x)\}_{b=1}^{M}$
are conditionally i.i.d.\ given $D_T$, each with conditional mean
$\hat{G}^{*}_{T}(x)$ and conditional variance
$\Omega_{T}(x) = \hat{G}^{*}_{T}(x)(1 - \hat{G}^{*}_{T}(x))$.
By the conditional Lindeberg--Levy central limit theorem, as
$M \to \infty$ for fixed $T$,
\begin{equation}
  \label{eq:clt}
  M^{1/2}\,\hat{\Omega}^{-1/2}_{T}(x)
  \Bigl(\hat{G}^{*}_{T,M}(x) - \hat{G}^{*}_{T}(x)\Bigr)
  \;\xrightarrow{d}\;
  \mathcal{N}(0,1).
\end{equation}

\begin{condition}[Convergence rate of the conditional bootstrap distribution]\label{cond:BErate}
The consistent bootstrap used (the i.i.d.\ residual bootstrap, or the moving-block
bootstrap of Condition~(MBB) under serial dependence) is such that, under
$\mathcal{H}_0$ and the regularity conditions of \ref{subsec:regcond}, the
conditional distribution of $Q^{\ast}_T=\hat{\Sigma}^{-1/2}_{B_T}T^{1/2}(\hat{B}^{\ast}_T-\hat{B}_T)$
satisfies
\[
  \sup_{x\in\mathbb{R}^{n^2}}\bigl|\hat{G}^{\ast}_T(x)-\Phi(x)\bigr|=O_p\!\bigl(T^{-\rho}\bigr),
  \qquad \text{for some } \rho>0 .
\]
\end{condition}

\begin{remark} $Q^{\ast}_T$ is a twice continuously differentiable function of
sample averages of the NGML score (Lemma \ref{lemma:scoreconditions}), studentized
by the non-singular robust covariance of Proposition \ref{prop:onegaussian}. The
NIG density is analytic, so the Cramer condition holds and the studentized
statistic admits a multivariate Edgeworth expansion uniform in
$x\in\mathbb{R}^{n^2}$ \citep{bhattacharya_normal_1976,hall_bootstrap_1992}. The
leading departure of $\hat{G}^{\ast}_T$ from $\Phi$ is the $O(T^{-1/2})$ term
governed by the standardised third cumulant of the statistic $Q^{\ast}_T$, giving $\rho=\tfrac12$ which we maintain throughout;
for the moving-block scheme the same second-order rate follows from
\citet{gotze_second-order_1996}. At the single-Gaussian boundary this rate is preserved. The boundary shape nuisance
$\rho_k$ is information-orthogonal to $\psi$ and parity-cancelling
(Lemma~\ref{lem:parity}), so it does not enter $\hat{B}_T$ and
hence does not affect the $O(T^{-1/2})$ Edgeworth term of $Q^{\ast}_T$. Thus,
$\rho=\tfrac12$ holds across the entire null, including the boundary.
The same expansion controls the standardised cumulants of $\hat{G}^{\ast}_T$: its
standardised skewness and excess kurtosis satisfy $\gamma_{1,T}=O_p(T^{-1/2})$ and
$\gamma_{2,T}=O_p(T^{-1})$, both $O_p(T^{-\rho})$ with $\rho=\tfrac12$. These cumulant
rates are used in Corollary~\ref{cor:sizetest}.
\end{remark}

Now, decompose the test statistic as
\begin{align}
  d^{*}_{T,M}(x)
  &= M^{1/2}\,\hat{\Omega}^{-1/2}_{T}(x)
     \Bigl(\hat{G}^{*}_{T,M}(x) - \hat{G}^{*}_{T}(x)\Bigr)
  \notag \\
  &\quad\;+\;
     M^{1/2}\,\hat{\Omega}^{-1/2}_{T}(x)
     \Bigl(\hat{G}^{*}_{T}(x) - \Phi(x)\Bigr).
  \label{eq:decomp}
\end{align}
The first term on the right-hand side of \eqref{eq:decomp} converges in
distribution to $\mathcal{N}(0,1)$ by \eqref{eq:clt}. The second term
represents a potential source of pre-testing bias: it is a
data-dependent centering error of order
$M^{1/2}\sup_{x}|\hat{G}^{*}_{T}(x) - \Phi(x)|$.
Under $\mathcal{H}_0$, Condition \ref{cond:BErate} gives
     $\sup_{x}|\hat{G}^{*}_{T}(x) - \Phi(x)| = O_p(T^{-\rho})$ with $\rho=\tfrac12$.
     The condition $MT^{-2\rho} = o_p(1)$
therefore ensures that
\begin{equation}
  M^{1/2}\,\hat{\Omega}^{-1/2}_{T}(x)
  \Bigl(\hat{G}^{*}_{T}(x) - \Phi(x)\Bigr)
  \;=\;
  O_p\!\bigl(M^{1/2}T^{-\rho}\bigr)
  \;=\;
  o_p(1),
\end{equation}
as $T, M \to \infty$ jointly. Consequently, the second term in
\eqref{eq:decomp} is asymptotically negligible, eliminating pre-testing
bias. The asymptotic null distribution \eqref{eq:testdist} then follows
from Slutsky's theorem and the distributional result of CLT.

\begin{lemma}[Characterisation under alternative-hypothesis ]\label{lemma:altlaw}
Under $\mathcal{H}_1$ with $k\ge 2$ Gaussian shocks, the regularity conditions of
\ref{subsec:regcond} holding for the non-Gaussian block, conditional on $D_T$ and as
$T\to\infty$:
\begin{enumerate}
\item[(i)] the pseudo-log-likelihood is invariant to rotations $Q\in\mathcal{O}(k)$ of
the Gaussian subspace, so $\mathcal{I}(\lambda_0)$ is singular along $\mathcal{O}(k)$;
the identified directions of $\hat{B}^{\ast}_T$ concentrate at rate $T^{1/2}$, the
rotation $Q$ does not.
\item[(ii)] the affected off-diagonal replications converge to the non-degenerate distribution
induced by $Q\in\mathcal{O}(k)$ through the unit-diagonal impact map, a non-affine,
ratio/tangent-type transform, with non-zero (divergent for $n=2$) standardised excess
kurtosis; hence they do not converge to $\mathcal{N}(0,I_{n^2})$.
\item[(iii)] any consistent normality test therefore diverges (Jarque--Bera on an
affected element $\xrightarrow{p}\infty$; Doornik--Hansen inherits the non-normality),
so the diagnostic has power tending to one against every fixed alternative in
$\mathcal{H}_1$.
\end{enumerate}
\end{lemma}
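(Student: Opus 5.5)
The plan is to prove the three parts in sequence, with (i) supplying the structural fact, (ii) translating it into a non-Gaussian limit law for the replications, and (iii) reading off divergence of the normality statistic.

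\textbf{Part (i).} Order the shocks so that the $k$ Gaussian ones occupy the last $k$ positions. Write $B=[B_1,B_2]$ and $\varepsilon_t=(\varepsilon_{1t}',\varepsilon_{2t}')'$, with $\varepsilon_{2t}\sim\mathcal{N}(0,\operatorname{diag}(\sigma_{2}^2))$. Let $Q\in\mathcal{O}(k)$ act on the standardized Gaussian block, sending $B_2\operatorname{diag}(\sigma_2)$ to $B_2\operatorname{diag}(\sigma_2)Q$. Rotation invariance of the standard normal leaves the law of $u_t$ unchanged. At the NIG boundary $\rho=0$ for those shocks, the log-density of the Gaussian block is also invariant, and so is the Jacobian term, since $|\det Q|=1$. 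Hence $\mathcal{L}_T$ and $\mathcal{L}_T^\ast$ are constant along the orbit, modulo the unit-diagonal renormalization. Differentiating along the Lie algebra of skew matrices, which has dimension $k(k-1)/2\ge1$, gives score directions with $s_t'v=0$, so $\mathcal{I}(\lambda_0)v=0$.

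The non-Gaussian block $B_1$ and the VAR parameters are still handled as in Lemma~\ref{lem:profileinfo}(ii)--(iii). The local concavity of \citet{gourieroux_statistical_2017} holds for every pair containing a non-Gaussian shock, so the profile information restricted to the orthogonal complement of the orbit tangent is positive definite. The Taylor argument of Proposition~\ref{prop:BSvalidity} then applies to those coordinates and yields $T^{1/2}$ concentration, both for the sample and conditionally for the bootstrap. The orbit coordinate, by contrast, is asymptotically free: the likelihood differences along it are $O_p(T^{-1/2})$ noise with no curvature.

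\textbf{Part (ii).} Parametrize the maximizer as $\hat{B}_T^\ast=\Gamma(\hat\eta^\ast,\hat Q^\ast)$, where $\eta$ collects the identified coordinates and $\Gamma$ applies $Q$ and then renormalizes the affected columns to unit diagonal. Because the criterion is flat in $Q$, $\hat Q^\ast$ is selected by the sampling noise. Its conditional law converges to a non-degenerate law $\mu_Q$ on $\mathcal{O}(k)$ rather than collapsing to a point; I would argue this through the argmax continuous mapping theorem applied to the rescaled flat direction. I would establish non-degeneracy first for $k=2$, where $Q$ is the rotation by an angle $\varphi$. There the affected off-diagonal entries are ratios of the form $(a\cos\varphi+b\sin\varphi)/(c\cos\varphi+d\sin\varphi)$, a tangent-type map of $\varphi$. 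As long as $\mu_Q$ puts mass near the zeros of the denominator, these entries have Cauchy-like tails, so their fourth moment, and hence their standardized excess kurtosis, is infinite. More generally, $\Gamma$ is non-affine in $Q$, so the image of a non-degenerate law on a compact manifold cannot be Gaussian: it is either bounded, or unbounded with ratio tails. Since $T^{1/2}(\hat B^\ast_T-\hat B_T)$ has a non-vanishing $O(1)$ component times $T^{1/2}$, its studentized version is non-normal, and in fact degenerate after scaling.

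\textbf{Part (iii).} Jarque--Bera computed on $M$ conditionally i.i.d.\ draws of an affected element equals $M/6$ times the squared sample skewness plus $M/24$ times the squared excess kurtosis. The sample kurtosis converges to a non-zero limit, or diverges when the limit law has infinite moments, so the statistic is of order $M$ and tends to infinity in probability. Doornik--Hansen includes that marginal's transformed skewness and kurtosis, so it inherits the divergence. Because $|d^\ast_{T,M}(x)|$ contains the centering term $M^{1/2}(\hat G^\ast_T-\Phi)$ with $\hat G_T^\ast\not\to\Phi$, Proposition~\ref{prop:bootstraptest}(ii) also follows.

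\textbf{Main obstacle.} The hardest step is proving that $\hat Q^\ast$ has a non-degenerate conditional limit $\mu_Q$, since flatness alone does not determine the selection rule. I would first try to show that the random criterion along the orbit converges, as a process, to a nonconstant Gaussian field with an a.s.\ unique maximizer, invoking the argmax theorem. Failing that, I would simply maintain non-degeneracy as a condition.
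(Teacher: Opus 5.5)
\textbf{Verdict: genuine gap.} Your architecture is the same as the paper's. Rotation invariance of the Gaussian block gives a flat criterion and singular information along $\mathcal{O}(k)$. The unidentified rotation is then pushed through the non-affine unit-diagonal map, and Jarque--Bera and Doornik--Hansen grow like $M$. Like the paper, which simply asserts that the limiting $Q$ is non-degenerate and that affected elements have non-zero excess kurtosis, you leave the substantive steps unproved.

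The concrete problem is that the route you sketch for your ``main obstacle'' fails as stated. At $\rho=0$, the Gaussian-block part of $\ell_t$ is $-\tfrac12\|e_{Gt}\|^2-\log|\det C|$, with $C=B\operatorname{diag}(\sigma)$ and $e_t=C^{-1}u_t$. This is invariant under $e_{Gt}\mapsto Qe_{Gt}$ for every $t$ and every data set. So there is no $O_p(T^{-1/2})$ Gaussian field along the orbit, and your two claims in Part (i) (exact constancy versus $O_p(T^{-1/2})$ noise) are inconsistent.
\begin{itemize}
\item The tie is broken only by the profiled NIG shape parameters of the Gaussian block, which fit the sample skewness and kurtosis of the rotated components. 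With $\gamma=\mu=0$, the profiled gain in $T\mathcal{L}_T$ is approximately $\tfrac{1}{48}\sum_{j\in G}\bigl[T^{-1/2}\sum_t He_4\bigl((Qe_{Gt})_j\bigr)\bigr]_+^2$.
\item Hence $\mathcal{L}_T$ moves only by $O_p(T^{-1})$. The limit process is a truncated quadratic of a Gaussian field, not a Gaussian field; the truncation arises because NIG admits only positive excess kurtosis.
\item Its maximiser need not be unique. When $\gamma$ is fixed at zero, as in the paper's Monte Carlo design, the limit field is identically zero on the whole orbit with positive probability. This happens whenever every direction has negative sample kurtosis, so the argmax theorem does not deliver non-degeneracy for free.
\item Conditionally on $D_T$, the bootstrap field is the sample field plus bootstrap noise. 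So the conditional law of $\hat Q^{\ast}$ does not converge in probability to a fixed $\mu_Q$; it converges only weakly to a random, data-dependent law. That suffices for the diagnostic only if this random law is a.s.\ non-degenerate.
\end{itemize}

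Second, Part (iii) needs more than non-Gaussianity. Jarque--Bera and Doornik--Hansen are consistent only against laws with non-zero standardised skewness or excess kurtosis. Bounded non-normal laws can match the first four normal moments, for example mass $1/6$ at each of $\pm a$ and $2/3$ at $0$, or a smoothed version of it. Your ``bounded or ratio-tailed'' dichotomy therefore closes (iii) only in the ratio-tailed case. For $k=2$ that case requires two things:
\begin{itemize}
\item the (random) limit law of the angle puts positive density at a zero of $c\cos\varphi+d\sin\varphi$;
\item the permutation/scaling normalisation defining $\mathcal{B}$ does not exclude that neighbourhood.
\end{itemize}
In the bounded case you must show directly that the image law has a non-zero third or fourth standardised cumulant. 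The paper asserts both facts rather than proving them, so your proposal is no weaker than its proof, but these are the steps that would have to be supplied.
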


\begin{proof}
\emph{(i)} With $k\ge2$ Gaussian shocks the Gaussian sub-block of the density is
rotation-invariant \citet{comon_independent_1994}, so the NGML
criterion is flat along $\mathcal{O}(k)$ and that information block is singular
\citep{gourieroux_statistical_2017,maxand_identification_2020}. The bootstrap residuals
are asymptotically Gaussian in the subspace, so the bootstrap criterion is
asymptotically flat there and $Q$ does not concentrate while the identified directions
concentrate at $T^{1/2}$. 

\emph{(ii)} The replications thus converge conditionally to $B_0$ acted on by a
non-degenerate $Q\in\mathcal{O}(k)$. Under the unit-diagonal normalization the
coefficients coupling two Gaussian directions are non-affine (ratio/tangent-type)
functions of $Q$, each with non-zero standardised excess kurtosis, which are divergent in the
$n=2$ case (Remark~\ref{rem:n2alt}). For general $n$ with $k\ge2$, non-normality holds
for each pair of Gaussian directions, so it is detected element-wise by the univariate
test of part~(iii); the fully general multivariate limiting distribution of the replications is
maintained.

\emph{(iii)} Jarque--Bera is consistent against any distribution with non-zero
standardised skewness or excess kurtosis, so on an affected element it diverges, giving
power $\to1$; Doornik--Hansen inherits the non-normality.
\end{proof}

\begin{remark}[The $n=2$ case]\label{rem:n2alt}
For $n=2$ with $k=2$ Gaussian shocks, the rotation is a single angle $\varphi$, asymptotically
non-degenerate on the circle. Under the unit-diagonal normalization the off-diagonal
replication $\hat{b}^{\ast}_{12}$ is a tangent-type transform of $\varphi$, with
Cauchy-type tails where the normalizing diagonal approaches zero; its sample skewness
and kurtosis diverge, so Jarque--Bera on the replications diverges. This is the
simplest configuration in which the non-normality is explicit.
\end{remark}

\subsubsection{Divergence under $\mathcal{H}_1$.}
Under $\mathcal{H}_1$, Lemma~\ref{lemma:altlaw} shows that the conditional distribution
of the replications $\{\hat{B}^{\ast}_{T,b}\}$ converges to the non-normal distribution induced
by an unidentified rotation of the Gaussian subspace, so any consistent normality test
(Remark~\ref{rem:DH}) diverges and the diagnostic has power tending to one. This is the
mechanism by which bootstrap-based diagnostic detects identification failure in
the framework of \citet{cavaliere_bootstrap_2025}; see also
\citet{angelini_bootstrap_2022,angelini_identification_2024}.

Although Corollary~\ref{cor:sizetest} is stated for the studentized $Q^{\ast}_T$, the
Doornik--Hansen (affine-invariant) and Jarque--Bera (location and scale-invariant)
statistics take the same value on the raw replications $\{\hat{B}^{\ast}_{T,b}\}$,
which differ from $\{Q^{\ast}_{T,b}\}$ only by the fixed affine map
$\hat{\Sigma}_{B_T}^{-1/2}T^{1/2}(\cdot-\hat{B}_T)$, so the size guarantee transfers to
the implemented test.

The joint-limit result \eqref{eq:testdist} of part~(i) and the divergence
\eqref{eq:altdist} of part~(ii) complete the proof.

\subsection{Proof of Corollary \ref{cor:sizetest}}\label{subsec:proofcorroboottest}
\begin{proof}
The Jarque--Bera statistic is $\mathcal{T}_{T,M}=M(g_1^2/6+g_2^2/24)$, with $g_1,g_2$
the sample skewness and excess kurtosis of the $M$ standardized replications;
Doornik--Hansen is the analogous quadratic in the multivariate sample skewness and
kurtosis. Conditionally on the data the replications are i.i.d.\ from
$\hat{G}^{\ast}_T$, whose standardized skewness $\gamma_{1,T}$ and excess kurtosis
$\gamma_{2,T}$ are $O_p(T^{-\rho})$ under $\mathcal{H}_0$ by the Edgeworth expansion of
the remark following Condition~\ref{cond:BErate}.
By the central limit theorem for sample moments,
$M^{1/2}(g_1-\gamma_{1,T})\xrightarrow{d^\ast}_p\mathcal{N}(0,6)$ and
$M^{1/2}(g_2-\gamma_{2,T})\xrightarrow{d^\ast}_p\mathcal{N}(0,24)$, so $\mathcal{T}_{T,M}$
is asymptotically non-central $\chi^2_2$ with non-centrality
\[
  \lambda_{T,M}=M\Bigl(\tfrac{\gamma_{1,T}^2}{6}+\tfrac{\gamma_{2,T}^2}{24}\Bigr)
  +O_p\!\bigl((M\gamma_{i,T}^2)^{1/2}\bigr)=MT^{-2\rho}\,O_p(1).
\]
Under $MT^{-2\rho}=o_p(1)$, $\lambda_{T,M}\xrightarrow{p}0$, so
$\mathcal{T}_{T,M}\xrightarrow{d}\chi^2_2$ and
$\Pr(\mathcal{T}_{T,M}>c_{1-\alpha}\mid\mathcal{H}_0)\to\alpha$. Under $\mathcal{H}_1$,
$\gamma_{1,T}$ or $\gamma_{2,T}$ is bounded away from zero (Lemma~\ref{lemma:altlaw}),
so $\lambda_{T,M}\to\infty$ and the power tends to one.
\end{proof}

\section{Other Results} \label{subsec:resultsICA0}

\subsection{Probability Coverages for specification \textbf{ICA 0}} \label{subsec:coverageica0}
\input{Tables/EstandCovICA0.tex}

\subsection{Multivariate Diagnostic with Doornik-Hansen Test for Normality ($T = 300$)}\label{subsec:dhmultit300}
\begin{figure}[H]
    \centering
          \begin{subfigure}{0.45\textwidth}
          \centering
          \includegraphics[width = \textwidth]{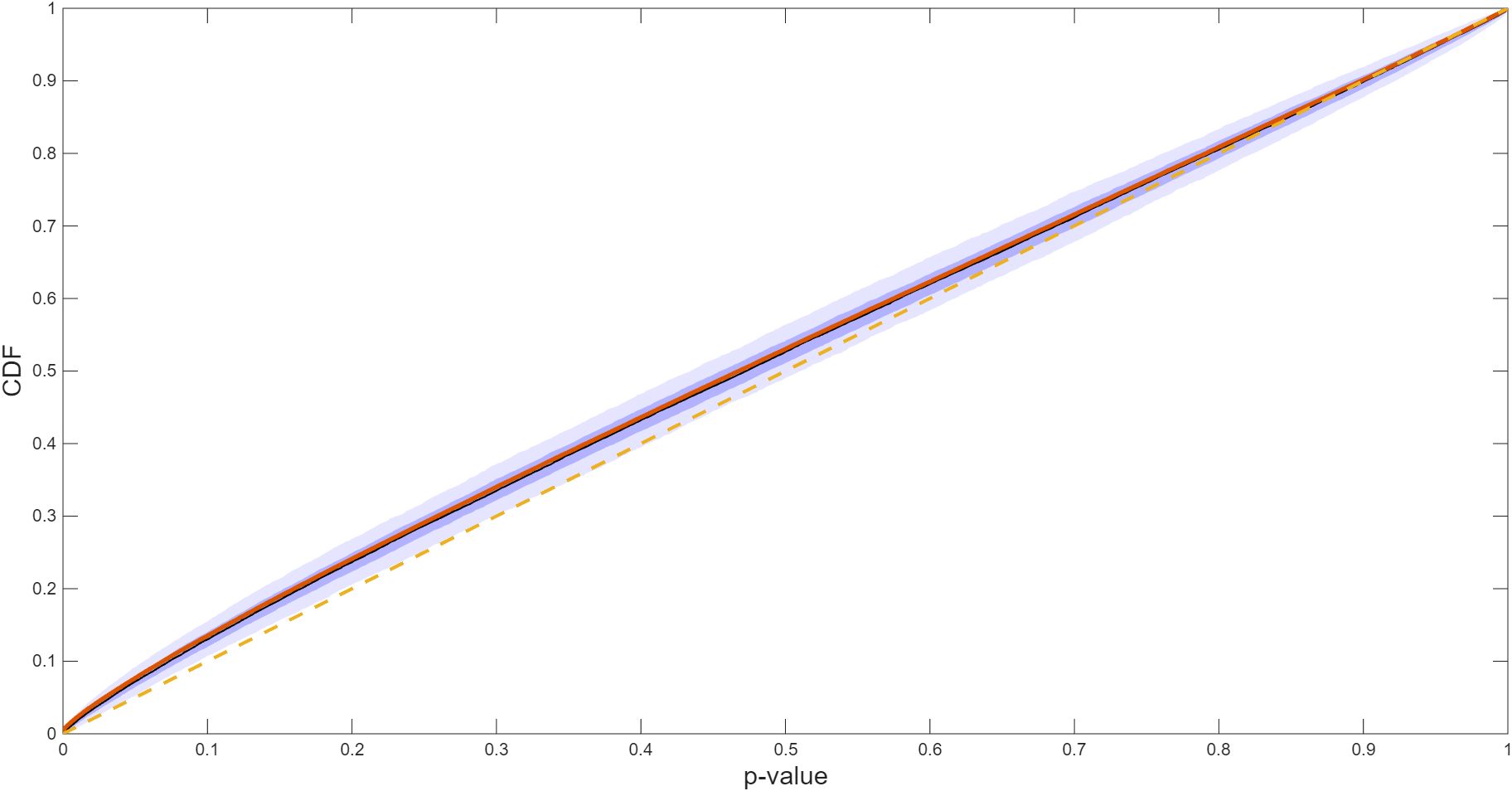}
          \subcaption{\textbf{ICA 0}}
          \label{fig:300ICA0}
          \end{subfigure}
      \quad
          \begin{subfigure}{0.45\textwidth}
          \centering
          \includegraphics[width = \textwidth]{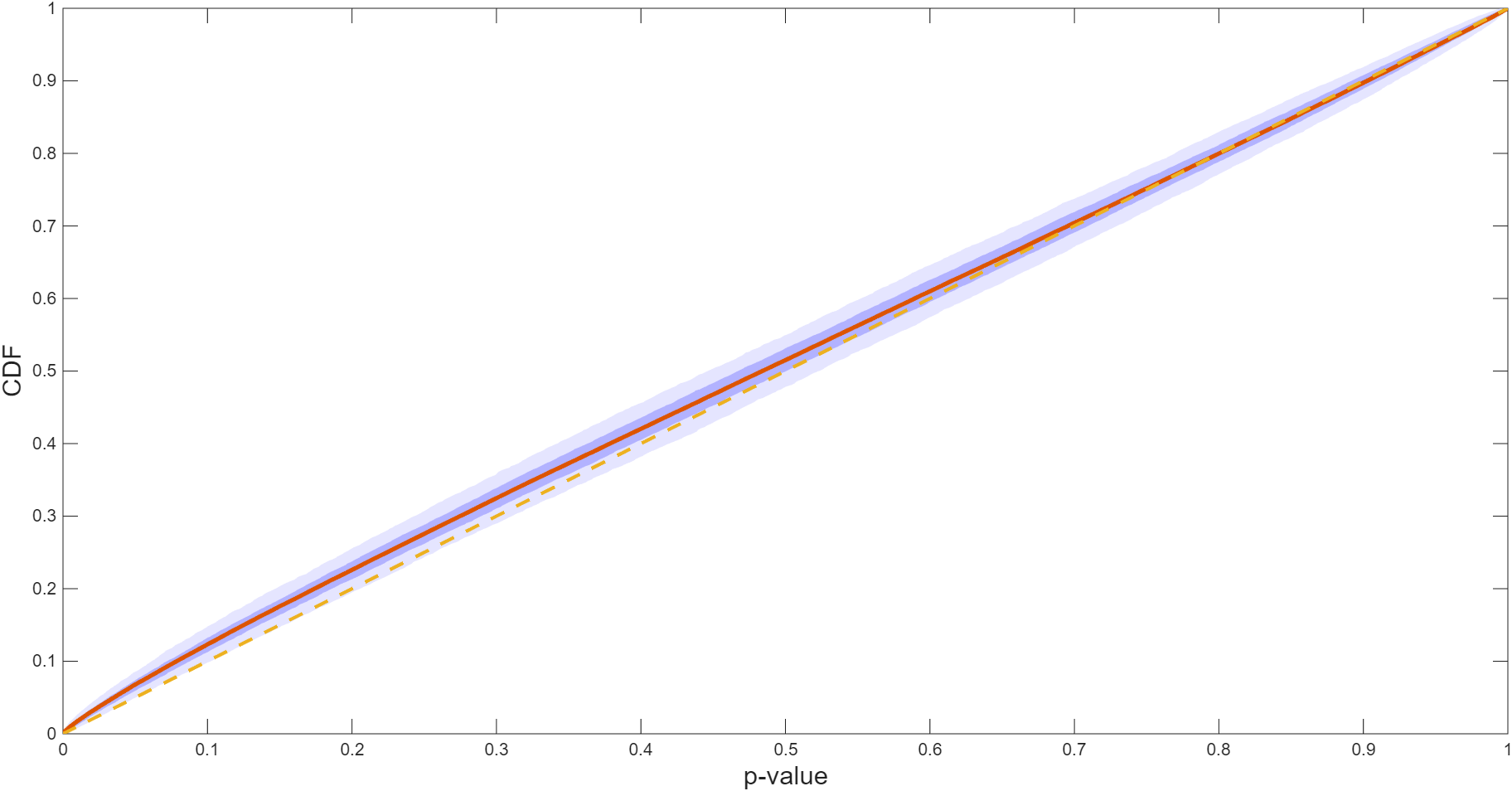}
          \subcaption{\textbf{ICA 1}}
          \label{fig:300ICA1}
          \end{subfigure}
      \quad
      % \vspace{0.2cm}
          \begin{subfigure}{0.45\textwidth}
          \centering
          \includegraphics[width = \textwidth]{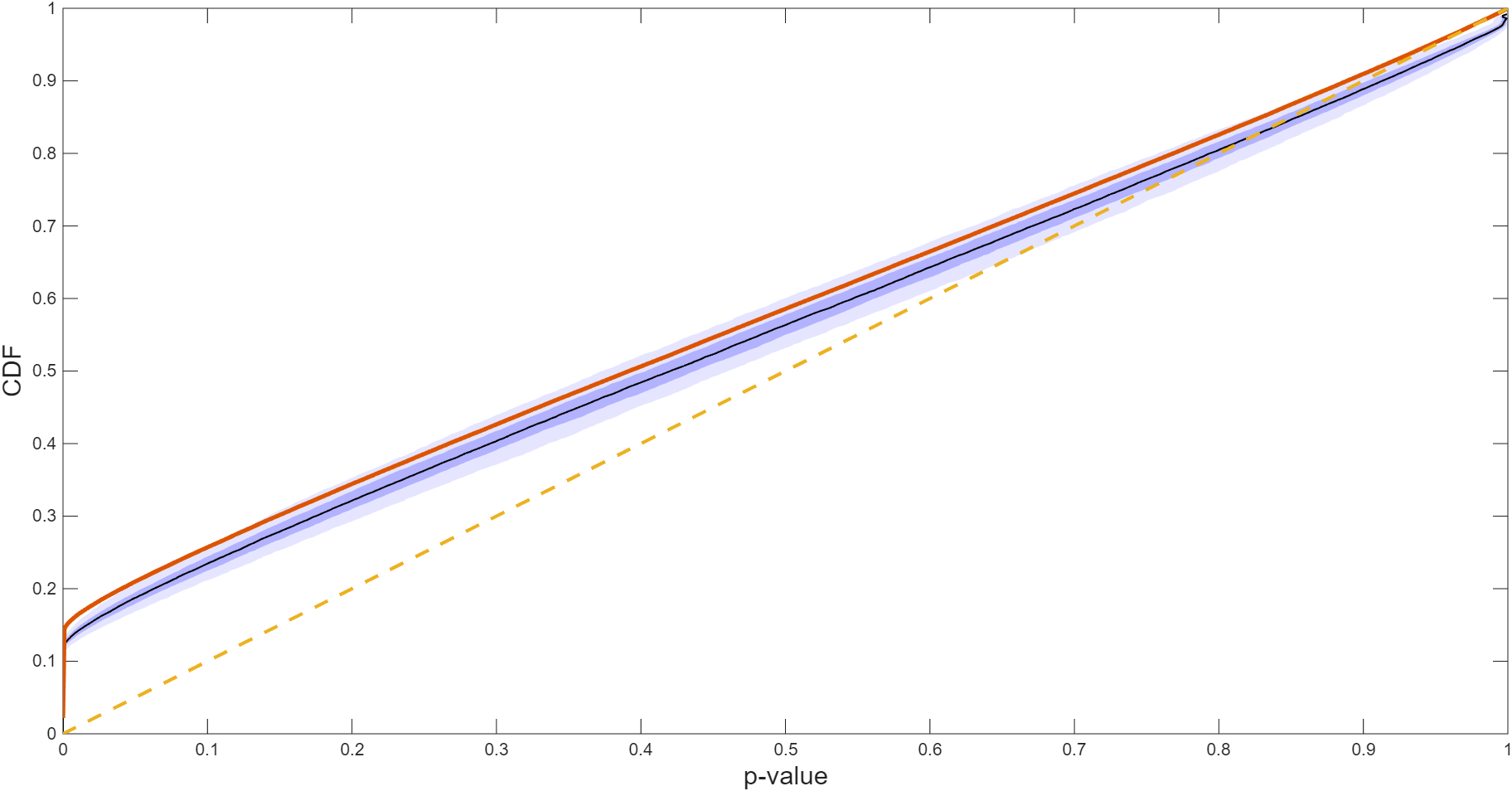}
          \subcaption{\textbf{No ICA}}
          \label{fig:300NoICA}
          \end{subfigure}
\caption{Empirical distribution functions of the $p$-values, $p^*_{M,T,s}(x)$, $s = 1,2,\cdots, S$ of the multivariate Doornik-Hansen normality test for the estimates of $\hat{B}_T$ for the three specifications - \textbf{ICA 0}, \textbf{ICA 1} and \textbf{No ICA} - with sample size $T = 300$ and $M = (1/2)T^{3/5}$, for the number of bootstrap sequences, $S = 1000$, across Monte Carlo simulations, $N_S = 500$. The darker and lighter shades represent the 50 and 90- percentile bands.}
\label{fig:fanchart300}
\end{figure}

Table \ref{tab:mvnormtest} reports the percentage of rejections of the null hypothesis of multivariate normality at the nominal 5\% and 1\% significance levels, with $T = 300, 500$, and residual-based MB bootstrap replications $N = 999$. The results show that under the null of valid specifications \textbf{ICA 0} and \textbf{ICA 1}, i.e. with 0 and 1 Gaussian shocks in the systems, respectively, the empirical rejection frequency of the null hypothesis of multivariate normality is close to the nominal levels, both asymptotically and with the two different choices of $M$. However, under the alternative of misspecification \textbf{No ICA}, i.e. with 2 Gaussian shocks in the system, the empirical rejection frequency of the null hypothesis of multivariate normality
increases over sample sizes, demonstrating the test's power against the \textit{extra} Gaussian shock. Consistent with the asymptotic result, this power is moderate at the present sample sizes, where the rate condition $M/T\to0$ keeps $M$ small; a larger $M$ raises power at the cost of a slight size increase.
\input{Tables/mvnormtest.tex}

\subsection{Univariate Diagnostic with Alternate Normality Test}\label{subsec:JBuni}
This subsection presents the results of the univariate bootstrap diagnostic test based on the Jarque-Bera (JB) test for normality of each element of the bootstrapped estimated on-impact matrix, $\hat{B}_T^*$. However, given the difference in their test statistics to measure deviations from normality in terms of skewness and excess kurtosis, in finite samples, we need to alter the size of bootstrap replications, $M$. We use $M = (5/3)T^{3/5}$, and the results are qualitatively similar to those obtained from the Doornik-Hansen test for normality presented in Section \ref{sec:MC}, Figures \ref{fig:uni_NIGOGT500} and \ref{fig:uni_NIGWGT500}. 
\input{Figures/jbuni_NIGOGT500.tex}
\input{Figures/jbuni_NIGWGT500.tex}

%% file: Tables/EstandCovICA0.tex
\begin{table}[H]
\centering 

    \renewcommand{\arraystretch}{0.8}
%     \centering
\scalebox{0.8}{

         \begin{tabular}{|c|c|cc|c|ccc|}
         \hline
         & & & & & & \multicolumn{2}{c|}{\textbf{Bootstrap Coverage}}\\
         
         & $\mathbf{B_{0}}$ & $\mathbf{\hat{B}_T}$ & $\mathbf{\hat{B}^*_T}$ & $\mathbf{90\%}$\textbf{CI} & \textbf{Asymp. Cov}. & \textbf{Studentized} & \textbf{Percentile} \\
         & (a) & (b) & (c) & (d) & (e) & (f) & (g) \\   
         
         \hline
        \rule{0pt}{1ex}
                                 & 1.29 & 1.25 & 1.16 & [0.87, 1.74] & 0.79 & 0.66 & 0.68 \\
                                && (0.26) & (0.24) &  &[0.82,1.61] & [0.53,1.41] & [0.82,1.46]\\
                                & 0.81 & 0.79 & 0.73 & [0.38, 1.27] & 0.87 & 0.83 & 0.83 \\
                                && (0.24) & (0.23) &  &[0.42,1.14] & [0.27,1.05] & [0.36,1.11]\\
                                & -0.66 & -0.64 & -0.59 & [-1.05, -0.24] & 0.87 & 0.84 & 0.81 \\
                                && (0.22) & (0.21) &  &[-0.99,-0.29] & [-0.92,-0.18] & [-0.95,-0.23]\\
                                & 0.28 & 0.24 & 0.22 & [-0.05, 0.42] & 0.78 & 0.79 & 0.78 \\
                                && (0.09) & (0.10) &  &[0.12,0.40] & [0.08,0.40] & [0.07,0.42]\\
             $\mathbf{T = 100}$ & 1.41 & 1.38 & 1.28 & [0.95, 1.78] & 0.87 & 0.78 & 0.76 \\
                                && (0.23) & (0.23) &  &[1.02,1.75] & [0.88,1.63] & [0.92,1.65]\\
                                & 0.78 & 0.35 & 0.32 & [-1.52, 1.05] & 0.70 & 0.68 & 0.68 \\
                                && (0.22) & (0.21) &  &[0.30,1.00] & [0.21,0.97] & [0.19,0.96]\\
                                & 0.00 & 0.05 & 0.04 & [-0.13, 0.31] & 0.75 & 0.81 & 0.82 \\
                                && (0.09) & (0.10) &  &[-0.12,0.18] & [-0.14,0.19] & [-0.15,0.21]\\
                                & -0.88 & -0.45 & -0.42 & [-1.03, 1.21] & 0.71 & 0.68 & 0.68 \\
                                && (0.20) & (0.20) &  &[-1.04,-0.45] & [-1.02,-0.35] & [-1.03,-0.27]\\
                                & 1.46 & 1.32 & 1.22 & [0.76, 1.68] & 0.77 & 0.65 & 0.65 \\
                                && (0.19) & (0.19) &  &[1.08,1.65] & [0.96,1.55] & [0.95,1.57]\\
                                    \hline

                    & 1.29 & 1.28 & 1.27 & [1.09, 1.49] & 0.87 & 0.82 & 0.84 \\
                    && (0.12) & (0.12) &  &[1.08,1.47] & [1.01,1.42] & [1.09,1.46]\\
                    & 0.81 & 0.80 & 0.80 & [0.65, 0.97] & 0.90 & 0.86 & 0.88 \\
                    && (0.10) & (0.10) &  &[0.64,0.96] & [0.61,0.93] & [0.64,0.95]\\
                    & -0.66 & -0.66 & -0.65 & [-0.82, -0.50] & 0.90 & 0.89 & 0.89 \\
                    && (0.10) & (0.10) &  &[-0.81,-0.50] & [-0.79,-0.47] & [-0.81,-0.49]\\
                    & 0.28 & 0.28 & 0.27 & [0.21, 0.34] & 0.89 & 0.88 & 0.88 \\
                    && (0.04) & (0.04) &  &[0.22,0.34] & [0.21,0.34] & [0.21,0.34]\\
    $\mathbf{T = 500}$& 1.41 & 1.41 & 1.40 & [1.24, 1.58] & 0.90 & 0.88 & 0.88 \\
                    && (0.10) & (0.10) &  &[1.24,1.57] & [1.22,1.54] & [1.24,1.56]\\
                    & 0.78 & 0.71 & 0.71 & [0.59, 0.92] & 0.87 & 0.87 & 0.87 \\
                    && (0.09) & (0.10) &  &[0.61,0.92] & [0.60,0.90] & [0.60,0.91]\\
                    & 0.00 & 0.01 & 0.01 & [-0.06, 0.07] & 0.89 & 0.89 & 0.91 \\
                    && (0.04) & (0.04) &  &[-0.06,0.06] & [-0.06,0.06] & [-0.06,0.06]\\
                    & -0.88 & -0.82 & -0.81 & [-1.00, -0.71] & 0.87 & 0.87 & 0.87 \\
                    && (0.08) & (0.08) &  &[-1.00,-0.73] & [-0.99,-0.72] & [-1.00,-0.72]\\
                    & 1.46 & 1.45 & 1.44 & [1.31, 1.60] & 0.87 & 0.83 & 0.84 \\
                    && (0.08) & (0.08) &  &[1.32,1.59] & [1.30,1.57] & [1.32,1.58]\\

\hline
\hline
    \end{tabular}
    }
    \caption{\scriptsize Estimates of true parameter on-impact matrix, $B_{0}$, from the process: \\
    \textbf{ICA 0:} $\epsilon_{1,t} \sim NIG(0.3,0.5), \epsilon_{2,t} \sim NIG(0.5,1)$, and \\
     $\epsilon_{3,t} \sim NIG(0.7,1.5)$ 
     (see Section \ref{sec:dist_spec}),
    with ${T} = 100, 500$, Monte Carlo Simulations, ${N_S} = 500$ and residual-based MB bootstrap replications, ${N} = 999$.\\
    (a) True parameter values.\\
    (b) NGML estimates, averaged across Monte Carlo simulations.\\
    (c) Bootstrap estimates, averaged across Monte Carlo simulations.\\
    The parentheses contain their respective standard errors, averaged across Monte Carlo simulations.\\
    (d) Empirical 90\% percentile-confidence intervals (CIs). The square brackets contain the median of the upper and lower limits of CIs, averaged across Monte Carlo simulations.\\
    (e) Empirical probability coverage, i.e. the frequencies that the asymptotic $90\%$ CIs contain the true parameter values.\\
    (f) Bootstrap (studentized) probability coverage of nominal $90\%$ CIs.\\
    (g) Bootstrap (percentile) probability coverage of nominal $90\%$ CIs.\\
    The square brackets contain the median of the respective CIs, across Monte Carlo simulations.
    }
    \label{tab:EstandCovICA0}

\end{table}

%% file: Tables/mvnormtest.tex
\begin{table}[H]
\centering
\renewcommand{\arraystretch}{1}
\scalebox{1}{
    \begin{tabular}{|c|cc|cc|}
     \hline
     & \multicolumn{2}{c|}{$\mathbf{T = 300}$} & \multicolumn{2}{c|}{$\mathbf{T = 500}$} \\
        % \hline
        \textbf{DGP} & $\mathbf{M = (1/3)T^{3/5}}$ & $\mathbf{M = (1/2)T^{3/5}}$ & $\mathbf{M = (1/3)T^{3/5}}$ & $\mathbf{M = (1/2)T^{3/5}}$ \\
        \hline
        \rule{0pt}{2.5ex}
        \textbf{ICA 0} & 0.062 & 0.077 & 0.056 & 0.071 \\
        \textbf{ICA 1} & 0.057 & 0.068 & 0.054 & 0.064 \\
        \textbf{No ICA} & 0.197 & 0.210 & 0.210 & 0.243 \\
        \hline
        % \hline
    \end{tabular}
}
\vspace{0.5cm}
% \end{table}

% \begin{table}[H]
% \centering
% \renewcommand{\arraystretch}{1}
\scalebox{1}{
    \begin{tabular}{|c|cc|cc|}
     \hline
     & \multicolumn{2}{c|}{$\mathbf{T = 300}$} & \multicolumn{2}{c|}{$\mathbf{T = 500}$} \\
        % \hline
        \textbf{DGP} & $\mathbf{M = (1/3)T^{3/5}}$ & $\mathbf{M = (1/2)T^{3/5}}$ & $\mathbf{M = (1/3)T^{3/5}}$ & $\mathbf{M = (1/2)T^{3/5}}$ \\
        \hline
        \rule{0pt}{2.5ex}
        \textbf{ICA 0} & 0.018 & 0.023 & 0.013 & 0.019 \\
        \textbf{ICA 1} & 0.013 & 0.018 & 0.012 & 0.017 \\
        \textbf{No ICA} & 0.165 & 0.164 & 0.174 & 0.190 \\
        \hline
        \hline
    \end{tabular}
}
% \caption{\scriptsize Empirical rejection frequencies of the multivariate \cite{doornik_omnibus_2008} tests for normality of $S = 1000$ bootstrapped estimate
% sequences, ${\{\hat{B}_{T,1}^*, \hat{B}_{T,2}^*, \cdots, \hat{B}_{T,M}^*\}}_{s},
% s = 1, 2, \cdots, S$, for different sample sizes, ${T} = 300, 500$, across ${N_S} = 500$ Monte Carlo simulations, 
% at the nominal 1\% level, for the three different distributional specifications (DGPs) described in Section \ref{sec:dist_spec}.}
% \label{tab:mvnormtest1}
\caption{\scriptsize Empirical rejection frequencies of the multivariate Doornik-Hansen tests for normality of $S = 1000$ bootstrapped estimate
sequences, ${\{\hat{B}_{T,1}^*, \hat{B}_{T,2}^*, \cdots, \hat{B}_{T,M}^*\}}_{s},
s = 1, 2, \cdots, S$, for different sample sizes, ${T} = 300, 500$, across ${N_S} = 500$ Monte Carlo simulations, 
at the nominal 5\% and 1\% levels, respectively, for the three different distributional specifications (DGPs) described in Section \ref{sec:dist_spec}.}
\label{tab:mvnormtest}
\end{table}

%% file: Figures/jbuni_NIGOGT500.tex
\begin{figure}[H]
    \centering 
    \includegraphics[scale = 0.42]{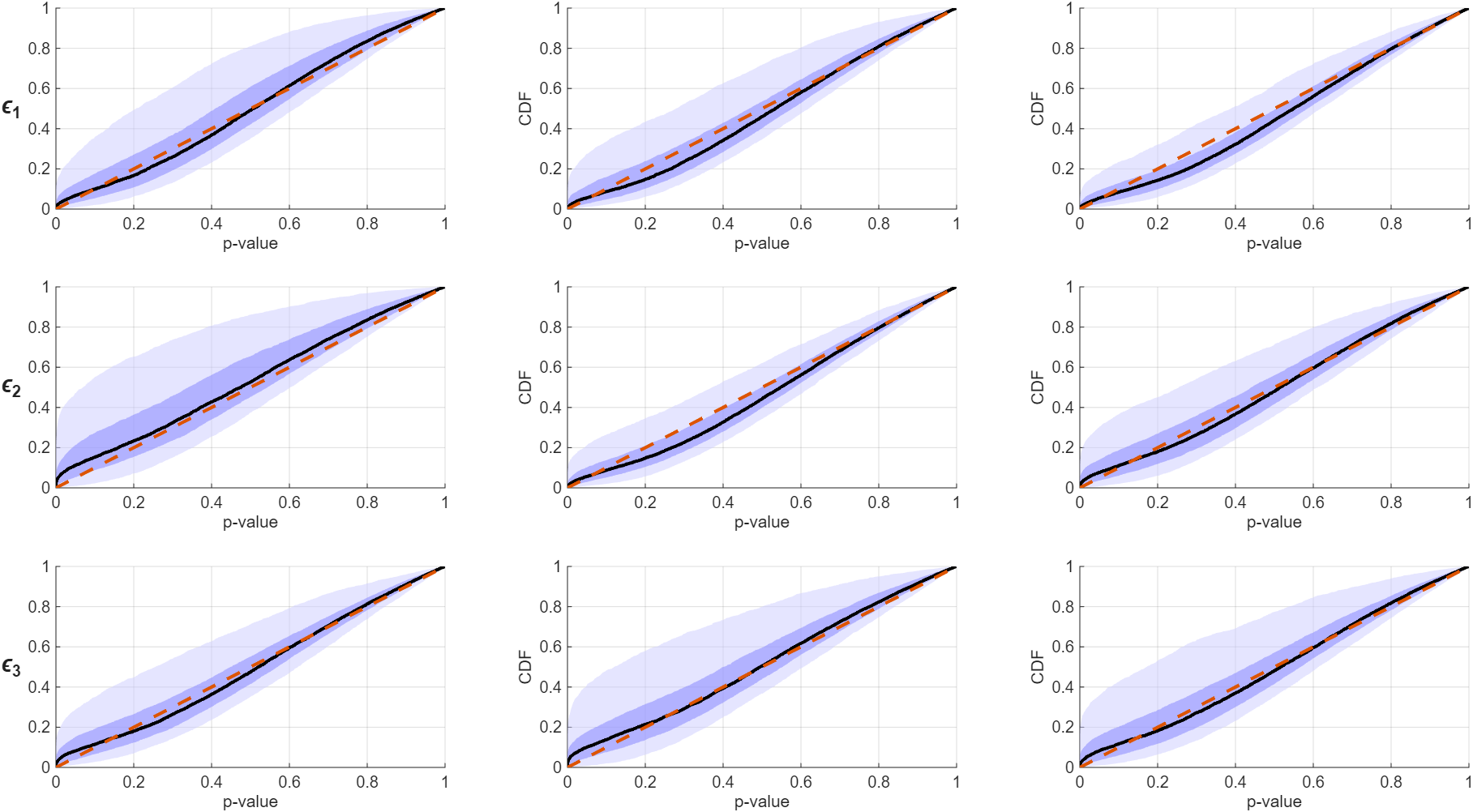}
    \caption{Univariate \textbf{ICA 1}:  Empirical distribution function of $p^*_{M,T,S}(x)$ from the univariate Jarque-Bera test for normality of each element of (bootstrapped) estimated 
    on-impact matrix $\hat{B}_{T}^*$ for the (valid) specification \textbf{ICA 1} (see Section \ref{sec:dist_spec}), with sample size $\mathbf{T} = 500$, the number of bootstrap replicates in a sequence, $\mathbf{M} = (5/3)T^{3/5}$,
    and the number of sequences/tests, $\mathbf{S} = 1000$, across Monte Carlo Simulations ${N_S} = 500$. The dashed red line indicates the $45^{\circ}$ line. The black line is the median of the empirical distribution of the $p$-values, $\pi^*_{M,T,S}(x)$, while the light and dark shaded areas indicating the $50$ and $90$-percentile intervals, respectively, with the $p$-values on $x$-axis and their corresponding empirical frequencies on the $y$-axis.}
    \label{fig:jbuni_NIGOGT500}
\end{figure}

%% file: Figures/jbuni_NIGWGT500.tex
\begin{figure}[H]
    \centering 
    \includegraphics[scale = 0.42]{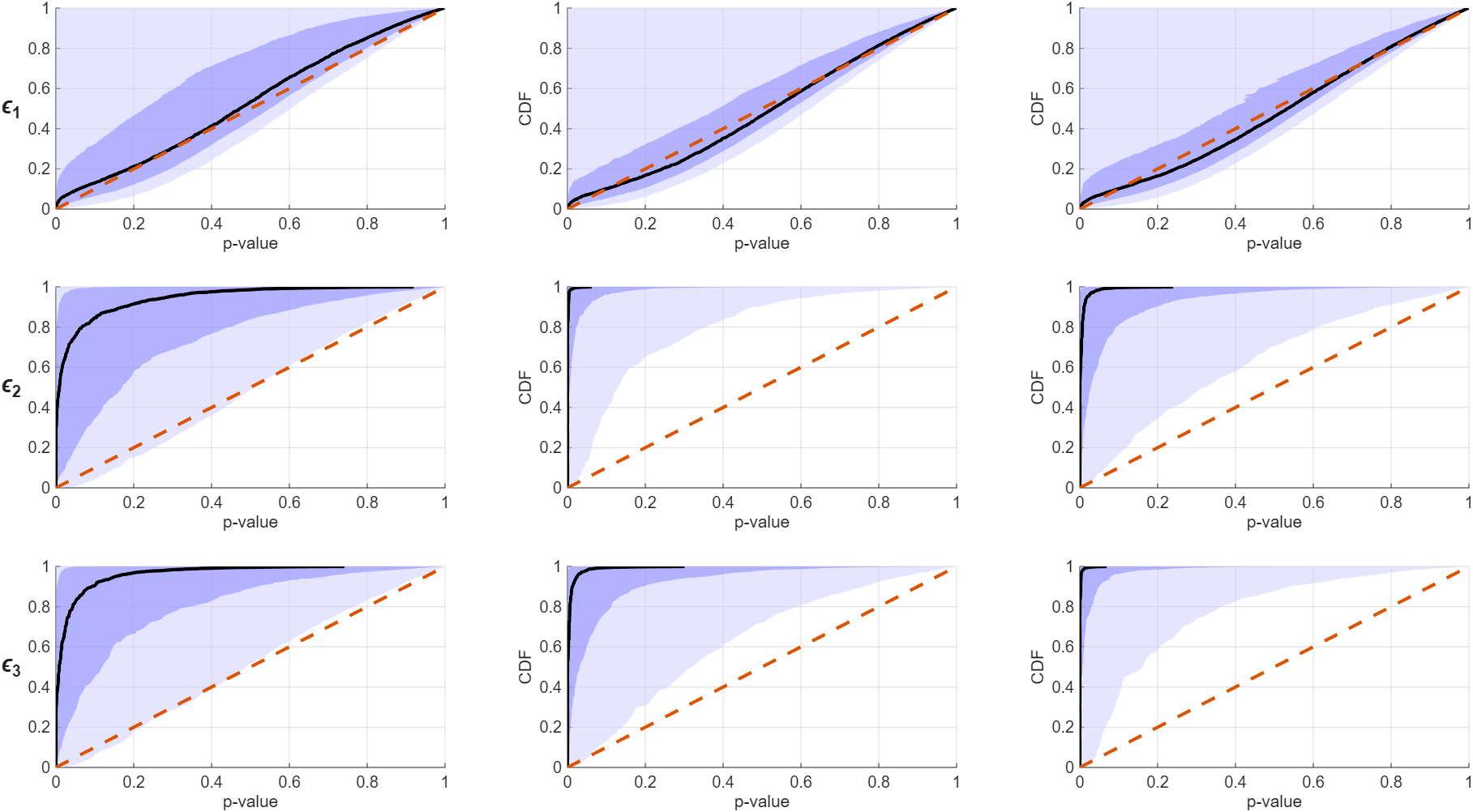}
    \caption{Univariate \textbf{No ICA}:  Empirical distribution function of $p^*_{M,T,S}(x)$ from the univariate Jarque-Bera test for normality of each element of (bootstrapped) estimated 
    on-impact matrix $\hat{B}_{T}^*$ for the (valid) specification \textbf{No ICA} (see Section \ref{sec:dist_spec}), with sample size $\mathbf{T} = 500$, the number of bootstrap replicates in a sequence, $\mathbf{M} = (5/3)T^{3/5}$,
    and the number of sequences/tests, ${S} = 1000$, across Monte Carlo Simulations ${N_S} = 500$. The dashed red line indicates the $45^{\circ}$ line. The black line is the median of the empirical distribution of the $p$-values, $\pi^*_{M,T,S}(x)$, while the light and dark shaded areas indicating the $50$ and $90$-percentile intervals, respectively, with the $p$-values on $x$-axis and their corresponding empirical frequencies on the $y$-axis.}
    \label{fig:jbuni_NIGWGT500}
\end{figure}